\newcommand{\MyMacro}[4]{
\provideboolean{#1}
\setboolean{#1}{#2}
\providecommand{#3}{\ifthenelse{\boolean{#1}}}
\providecommand{#4}{\ifthenelse{\not \boolean{#1}}}
}
\newcommand{\ifndef}[2]{\ifthenelse{\isundefined{#1}}{#2}{}}
\newcommand{\mydef}[2]{\def#1{#2}}
\newcommand{\nospell}[1]{#1}   %
\newcommand{\myusepackage}[2][]{\@ifpackageloaded{#2}{} %
{\ifthenelse{\equal{}{#1}} {\usepackage{#2}} {\usepackage[#1]{#2}} }}
\ifndef{\theorem}{\newtheorem{theorem}{Theorem}}
{}
\ifndef{\lemma}{\newtheorem{lemma}[theorem]{Lemma}}
\ifndef{\corollary}{\newtheorem{corollary}[theorem]{Corollary}}
\ifndef{\conjecture}{}
\ifndef{\remark}{\theoremstyle{remark} }
\ifndef{\proposition}{}
\ifndef{\claim}{\newtheorem{claim}[theorem]{Claim}}
\ifndef{\result}{}
\ifndef{\problem}{}
{}   %
{}   %
\newtheoremstyle{mydefinition}   %
{\topsep}{\topsep}   %
{\slshape}   %
{}   %
{\bfseries}   %
{.}   %
{ }   %
{}   %
\newtheoremstyle{myremark}   %
{\topsep}{\topsep}   %
{\slshape}   %
{}   %
{\bfseries\slshape}   %
{:}   %
{ }   %
{}   %
\newtheoremstyle{myexample}   %
{\topsep}{\topsep}   %
{\itshape}   %
{}   %
{\slshape}   %
{:}   %
{ }   %
{\ul{\thmname{#1}}}   %
\newtheoremstyle{myclaims}   %
{\topsep}{\topsep}   %
{\slshape}   %
{}   %
{\bfseries\itshape}   %
{}   %
{ }   %
{\thmname{#1}\thmnumber{ \!#2}.}   %
\theoremstyle{myremark}}
\theoremstyle{mydefinition}\newtheorem{definition}[theorem]{Definition}}
\theoremstyle{myexample}}
\theoremstyle{myclaims}
\ifndef{\fact}{\newtheorem{fact}[theorem]{Fact}}
\newtheoremstyle{anystatement}{\topsep}{\topsep}{\itshape}{}{\bfseries}{.}{ }{\anystatementname}
{\theoremstyle{anystatement}}
\newcommand{\anystatementname}{}
\newcommand{\AuxNew}[4][]{#2{#3}[1][*]%
{\ifthenelse{\equal{*}{##1}} %
{\Ensuremath{#1{#4}}}%
{\ifthenelse{\equal{b}{##1}} %
{\Ensuremath{\mathbf{#4}}}%
{\ifthenelse{\equal{}{##1}} %
{\IfMathMode{#1{#4}}{#4}}{}}}}}
\newcommand{\newident}[3][*]{\ifthenelse{\equal{*}{#1}}%
{\AuxNew[\mathit]{\newcommand}{#2}{#3}} %
{\mydef{#2}{\Ensuremath{\mathit{#3}}}}} %
\newcommand{\newmat}[3][*]{\ifthenelse{\equal{*}{#1}}%
{\AuxNew{\newcommand}{#2}{#3}} %
{\mydef{#2}{\Ensuremath{#3}}}} %
\newcommand{\providemat}[3][*]{\ifthenelse{\equal{*}{#1}} %
{\AuxNew{\providecommand}{#2}{#3}} %
{\mydef{#2}{\Ensuremath{#3}}}} %
\newcommand{\newfunction}[2]{ %
\newcommand{#1}[2][*]{\ifthenelse{\equal{*}{##1}}%
{\Ensuremath{#2{\left(##2\right)}}}%
{#2(##2)}}%
}
\newcommand{\MyMakeTheoMacros}[3]{
\newcommand{#2}[2][]{\ifthenelse{\equal{}{##1}}
{\begin{#1} ##2 \end{#1}}
{\begin{#1}\label{##1} ##2\end{#1}}}
\newcommand{#3}[3][]{\ifthenelse{\equal{}{##1}}
{\begin{#1}[##2] ##3 \end{#1}}
{\begin{#1}[##2]\label{##1} ##3\end{#1}}}
}
\newtheorem*{rep@theorem}{\rep@title}
\newcommand{\newreptheorem}[2]{%
\newenvironment{rep#1}[1]{%
\def\rep@title{#2 \ref{##1}}%
\begin{rep@theorem}}%
{\end{rep@theorem}}}
\newcommand{\MyMakeDupTheoMacros}[7]{
\MyMakeTheoMacros{#1}{#2}{#3}
\newreptheorem{#1}{#6}
\newcommand{#4}[3]{
\newcommand{##2}{##3}
\begin{#1}\label{##1} ##2\end{#1}}
\newcommand{#5}[4]{
\newcommand{##2}{##4}
\begin{#1}{\e{##3}}\label{##1} ##2\end{#1}}
\newcommand{#7}[2]{\begin{rep#1}{##1} ##2 \end{rep#1}}
}
\newcommand{\MyMakeRefMacros}[3]{\newcommand{#1}[2][]
{\ifthenelse{\equal{}{##1}}{#2~\ref{##2}}{#3~\ref{##1} and~\ref{##2}}}}
\newcommand{\MyMakeEqRefMacros}[3]{\newcommand{#1}[2][]
{\ifthenelse{\equal{}{##1}}{#2~\eqref{##2}}{#3~\eqref{##1} and~\eqref{##2}}}}
\newcommand{\bibentry}[8]{
{}\bibitem[\nospell{#8}]{#1} {\textup #3}.{}
\ifthenelse{\equal{}{#6}}
{\newblock \textrm{#4.} \newblock {\em #5}, #7.}
{\newblock \textrm{#4.} \newblock {\em #5, #6}, #7.}
}
\MyMakeRefMacros{\lemref}{Lemma}{Lemmas}
\MyMakeRefMacros{\crlref}{Corollary}{Corollaries}
\newtheorem*{prp*}{\e{Proposition}}
\MyMakeRefMacros{\prpref}{Proposition}{Propositions}
\MyMakeRefMacros{\clmref}{Claim}{Claims}
\MyMakeRefMacros{\theoref}{Theorem}{Theorems}
\MyMakeRefMacros{\defiref}{Definition}{Definitions}
\MyMakeRefMacros{\probref}{Problem}{Problems}
\MyMakeRefMacros{\conjref}{Conjecture}{Conjectures}
\renewcommand{\qedsymbol}{$\blacksquare$}
\newcommand{\prfstart}[1][]{\ifthenelse{\equal{}{#1}} %
{\begin{proof}\renewcommand{\qedsymbol}{$\blacksquare$}} %
{\begin{proof}[Proof of #1] %
\renewcommand{\qedsymbol}{$\blacksquare_{\mbox{\it{\scriptsize{#1}}}}$}} %
}
\newcommand{\prfend}{\end{proof}\renewcommand{\qedsymbol}{$\blacksquare$}}
\newcommand{\ssect}[2][]{\ifthenelse{\equal{}{#1}}
{\subsection{#2}}
{\subsection{#2}\label{#1}}}
\MyMakeRefMacros{\chref}{Chapter}{Chapters}
\MyMakeRefMacros{\sref}{Section}{Sections}
\MyMakeRefMacros{\ssref}{Subsection}{Subsections}
\MyMakeRefMacros{\sssref}{Subsection}{Subsections}
\MyMakeRefMacros{\figref}{Figure}{Figures}
\newcommand{\IfMathMode}[2]{\ifmmode{#1}\else{#2}\fi}
\newcommand{\Ensuremath}{\ensuremath}
\newcommand{\fbr}[1]{\IfMathMode %
{#1}{$#1$}}                      %
\newcommand{\fnbr}[1]{\mbox{\fbr{#1}}}   %
\newcommand{\fla}[2][*]{\ifthenelse{\equal{}{#1}}{\fbr{#2}}{\fnbr{#2}}}
\newcommand{\mat}[2][]{\ifthenelse{\equal{}{#1}} %
{ \begin{displaymath} #2 \end{displaymath} } %
{ \begin{equation} \label{#1} #2 \end{equation} }%
}
\newcommand{\malabel}[1]{\addtocounter{equation}{1}\tag{\theequation}\label{#1}}
\newcommand{\mal}[2][]{ %
\ifthenelse{\equal{}{#1}} %
{{\begin{align*} #2 \end{align*}}}   %
{\ifthenelse{\equal{P}{#1}}                 %
{{\allowdisplaybreaks\begin{align*} #2            %
\end{align*}}} %
{{\begin{align*} \malabel{#1} #2 \end{align*}}}   %
} %
}
\newcommand{\f}{\fla}
\newcommand{\m}{\mat}
\MyMakeEqRefMacros{\equref}{Equation}{Equations}
\MyMakeEqRefMacros{\expref}{Expression}{Expressions}
\MyMakeEqRefMacros{\inequref}{Inequality}{Inequalities}
\newcommand{\bracref}[1]{(\ref{#1})}
\newcommand{\bref}{\bracref}
\providecommand{\middle}{\big}
\newcommand{\pl}[1][]{\nospell{\ifthenelse{\equal{}{#1}}%
{\!\stackrel'{}\!\!\txt{s}}%
{\fla{#1\!\stackrel'{}\!\!\txt{s}}}}}
\newcommand{\fr}[3][*]{%
\ifthenelse{\equal{*}{#1}}        %
{\frac{#2}{#3}}{}%
\ifthenelse{\equal{/}{#1}}        %
{\nicefrac{#2}{#3}}{}%
\ifthenelse{\equal{}{#1}}         %
{\left.#2\middle/#3\right.}{}%
\ifthenelse{\equal{p_}{#1}}       %
{\left.\left(#2\right)\middle/#3\right.}{}%
\ifthenelse{\equal{_p}{#1}}       %
{\left.#2\middle/\left(#3\right)\right.}{}%
\ifthenelse{\equal{pp}{#1}}       %
{\left.\left(#2\right)\middle/\left(#3\right)\right.}{}
}
\newcommand{\dr}{\nicefrac}
\newcommand{\sq}{\mathpalette\MySQRT}  %
\def\MySQRT#1#2{    %
\setbox0=\hbox{$#1\sqrt{#2\,}$}\dimen0=\ht0%
\advance\dimen0-0.2\ht0%
\setbox2=\hbox{\vrule height\ht0 depth -\dimen0}%
{\box0\lower0.4pt\box2}}
\newcommand{\set}[2][]{\ifthenelse{\equal{}{#1}} %
{\Ensuremath{\left\{#2\right\}}}%
{\Ensuremath{\left\{#2\middle|\vphantom{|_1^1}#1\right\}}}}
\newfunction{\asO}{O}
\newcommand{\lf}{\left}
\newcommand{\rt}{\right}
\newcommand{\sz}[2][]{\ifthenelse{\equal{}{#1}}%
{\Ensuremath{\left|#2\right|}}%
{\Ensuremath{\left|#2\middle|_{#1}\right.}}}
\newcommand{\txt}[1]{\textrm{#1}}   %
\DeclareMathAlphabet{\lowcal}{OT1}{pzc}{m}{it}
\providemat{\QQ}{\mathbb{Q}}
\newcommand{\ds}[1][]
{\ifthenelse{\equal{}{#1}}{\dots}{#1\dots#1}}
\newcommand{\MyComment}[1]{\ClassWarning{My Macros}{#1}}
\newcommand{\fn}{\footnote}
\newcommand{\e}{\emph}
{}   %
\newcommand{\bl}[1]{{\bf #1}}  %
\providecommand{\ul}[1]{\underline{#1}}  %
\title{Correlation in Hard Distributions in Communication Complexity}
\date{}
\newcommand{\instDG}{Institute of Mathematics, Czech Academy of Sciences, \v Zitna 25, Praha 1, Czech Republic.}
\newcommand{\thanksDG}{Partially funded by the grant P202/12/G061 of GA \v CR and by RVO:\ 67985840.
Most of this work was done while DG was visiting the Centre for Quantum Technologies at the National University of Singapore, and was partially funded by the Singapore Ministry of Education and the NRF.
}
\newcommand{\instHK}{Division of Mathematical Sciences, Nanyang Technological University, Singapore \& Centre for Quantum Technologies, National University of Singapore, Singapore.
}
\newcommand{\thanksHK}{This work is funded by the Singapore Ministry of Education (partly through the Academic Research Fund Tier 3 MOE2012-T3-1-009) and by the Singapore National Research Foundation.}
\newcommand{\instRB}{Division of Mathematical Sciences, Nanyang Technological University, Singapore.}
 \author{Ralph Bottesch\thanks{\instRB}
  \and Dmitry Gavinsky\thanks{\instDG\ \thanksDG}
  \and Hartmut Klauck\thanks{\instHK\ \thanksHK}
 }
  \author[1]{Ralph C. Bottesch}
  \author[2]{Dmitry Gavinsky\footnote{\thanksDG}}
  \author[1,3]{Hartmut Klauck\footnote{\thanksHK}}
  \affil[1]{Nanyang Technological University\\
    50 Nanyang Avenue, Singapore 639798}
  \affil[2]{Institute of Mathematics, Czech Academy of Sciences\\
    \v Zitna 25, Praha 1, Czech Republic}
  \affil[3]{Centre for Quantum Technologies, National University of Singapore\\
    Block S15, 3 Science Drive 2, Singapore 117543}
  \authorrunning{R.\,C. Bottesch, D. Gavinsky, and H. Klauck}
  \subjclass{F.1.3}
  \keywords{communication complexity; information theory.}
\begin{document}

\maketitle

\begin{abstract}
We study the effect that the amount of correlation in a bipartite distribution has on the communication complexity of a problem under that distribution.
We introduce a new family of complexity measures that interpolates between the two previously studied extreme cases:\ the (standard) randomised communication complexity and the case of distributional complexity under product distributions.

We give a tight characterisation of the randomised complexity of Disjointness under distributions with mutual information $k$, showing that it is $\Theta(\sqrt{n(k+1)})$ for all $0\leq k\leq n$. This smoothly interpolates between the lower bounds of Babai, Frankl and Simon for the product distribution case ($k=0$), and the bound of Razborov for the randomised case.
The upper bounds improve and generalise what was known for product distributions, and imply that any tight bound for Disjointness needs $\Omega(n)$ bits of mutual information in the corresponding distribution.

We study the same question in the distributional {\em quantum} setting, and show a lower bound of $\Omega((n(k+1))^{1/4})$, and an upper bound (via constructing communication protocols), matching up to a logarithmic factor.

We show that there are total Boolean functions $f_d$ on $2n$ inputs that have distributional communication  complexity $O(\log n)$ under all distributions of information up to $o(n)$, while the (interactive) distributional complexity maximised over all distributions is $\Theta(\log d)$ for $6n\leq d\leq 2^{n/100}$.
This shows, in particular, that the correlation needed to show that a problem is hard can be much larger than the communication complexity of the problem.

We show that in the setting of one-way communication under product distributions, the dependence of communication cost on the allowed error $\epsilon$ is multiplicative in $\log(1/\epsilon)$ -- the previous upper bounds had the dependence of more than $1/\epsilon$.
This result explains how one-way communication complexity under product distributions is stronger than PAC-learning:\ both tasks are characterised by the VC-dimension, but have very different error dependence (learning from examples, it costs more to reduce the error).
\end{abstract}

\section{Introduction}

The standard way
 to attack the problem of showing a lower bound on the randomised communication complexity of a function $f$ is to choose a probability distribution $\mu$ on the inputs, and then show that the deterministic distributional complexity is large for $f$ w.r.t.\ $\mu$ -- i.e., that any deterministic protocol that computes $f$ with small error under $\mu$ must communicate much. This approach eliminates the need to argue about the randomness used by the protocol.\footnote{
We note that the popular information complexity method (see e.g.~\cite{bjks:datastream}) also uses distributional complexity, but does not seek to eliminate randomness from protocols.}

It is well known that this approach can be used without loss of generality, due to von Neumann's minimax theorem (see~\cite{kushilevitz&nisan:cc}; the same principle applies to many nonuniform computational models):
\[\max_\mu D_\epsilon^\mu(f)=R_\epsilon(f),\]
where $D_\epsilon^\mu(f)$ denotes the deterministic complexity of protocols that compute $f$ with error $\epsilon$ under the distribution $\mu$ of input to $f$, and $R_\epsilon(f)$ is the public coin randomised communication complexity of $f$ with worst-case error $\epsilon$.\footnote
{Throughout the paper we do not consider private coin randomised protocols.}

As a matter of convenience, one first tries to use a simple distribution $\mu$, for instance the uniform distribution, or more generally, product distributions over the inputs to Alice and Bob. This works for some problems, like Inner Product modulo 2 \cite{CG}. However, Babai, Frankl, and Simon \cite{bfs:classes} observed that for the Disjointness problem DISJ one cannot obtain lower bounds larger than $\Omega(\sqrt n\log n)$ under {\em any} product distribution, i.e., they show that an upper bound of $O(\sqrt n\log n)$ holds for every product distribution. They also give a lower bound of $\Omega(\sqrt n)$ under a product distribution. Later, Kalyanasundaram and Schnitger \cite{ks:disj} obtained the tight $\Theta(n)$ bound, and Razborov \cite{razborov:disj} showed that indeed $D^\mu_\epsilon(DISJ)=\Theta(n)$ for an explicit simple distribution $\mu$, for any sufficiently small constant $\epsilon>0$ (that such a $\mu$ exists is immediate from the result in \cite{ks:disj} and the minimax theorem, but their proof does not exhibit such a distribution explicitly). Distributional complexity under product distributions has been also frequently used to show structural properties like direct product theorems (e.g.,~\cite{jrs:comp,jkn:subdistribution}).
Furthermore, distributional communication complexity is the natural average case version of communication complexity, and it makes sense to study this for distributions that are `easy', in order to get a different model than randomised complexity. It seems natural to measure ``easiness'' via mutual information.

For many years it was open how large the gap between $R_\epsilon^{I=0}(f)=\max_{\mu\mbox{ product}}D_\epsilon^\mu(f)$ and $R_\epsilon(f)$ (for constant $\epsilon>0$) can be. Sherstov \cite{sherstov:prod} finally gave a proof that there are total Boolean functions $f$, where the former is $O(1)$ and the latter is $\Omega(n)$. In his result $f$ is not given explicitly. Recently Alon et al.~\cite{alon:vc} have given the following optimal explicit separation. Consider the problem where Alice gets a point and Bob a line from a projective plane containing $2^{\Theta(n)}$ points and lines. The VC-dimension of this problem is at most 2, which implies that the distributional complexity under any product distribution is $O(1)$ (even for one-way protocols, \cite{knr:rand1round}), whereas the sign-rank of the communication matrix is $2^{\Omega(n)}$, and hence the randomised (even unbounded error) communication complexity is $\Omega(n)$.

This leaves open a more precise investigation of the {\em amount} of correlation in $\mu$ needed to make $D^\mu(f)$ equal to $R(f)$.
It is natural to quantify this via the mutual information $I(X:Y)$, when the input $(X,Y)$ is drawn from $\mu$. We define the following measure:
\[R^{I\leq k}_\epsilon(f)=\max_{\mu: I(X:Y)\leq k}D^\mu_\epsilon(f).\]

We note here that the quantity on the right hand side does not change if randomised or deterministic protocols are allowed, because in the distributional setting
the randomness can be fixed without increasing the error (under any distribution).
The investigation of this measure has been initiated by Jain and Zhang \cite{JZ09} in the setting of one-way communication complexity (we discuss their contribution at the end of Section~\ref{ssec_dep}).
We note that $R^{I\leq n}_\epsilon(f)=R_\epsilon(f)$ for all functions $f:\{0,1\}^n\times\{0,1\}^n\to\{0,1\}$.

This family of complexity measures allows us to investigate how much correlation is needed in the input distribution to get good lower bounds.
We have 3 main applications.
First, we closely investigate the case of the Disjointness problem.
Second, we show that a certain problem exhibits a threshold behaviour, i.e., only with almost maximal correlation can a tight lower bound be proved, and this correlation can also be larger than the actual communication complexity of the problem.
Third, we investigate the dependence of one-way communication complexity under product distributions on the allowed error.

\subsection{The Disjointness problem}

In the \emph{Disjointness problem (DISJ)}, Alice and Bob receive, respectively, subsets $x,y\subseteq\{1,\ldots, n\}$, and their task is to decide whether $x$ and $y$ are disjoint.\fn
{We will often view $x$ and $y$ as binary \f n-bit strings, implicitly assuming the natural correspondence between the elements of $\01^n$ and $pow([n])$.}
This is one of the most-studied problem in communication complexity, which arguably has the biggest number of known applications to other models (see \cite{kushilevitz&nisan:cc}).
We give a complete characterisation of the information-bounded distributional complexity of Disjointness for all values of $k=I(X:Y)$, both in the randomised and in the quantum case.
\begin{theorem}\label{thm:cldisj}
For all $0\leq k\leq n$ and constant $\epsilon$ we have
\begin{enumerate}
\item  $R^{I\leq k}_\epsilon(DISJ)=\Theta(\sqrt{n(k+1)})$.
\item   $Q^{I\leq k}_\epsilon(DISJ)=\tilde{O}((n(k+1))^{1/4})$.
\item  $Q^{I\leq k}_\epsilon(DISJ)=\Omega((n(k+1))^{1/4})$.
\end{enumerate}
\end{theorem}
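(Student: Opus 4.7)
The four bounds should interpolate smoothly between the known extreme cases: at $k=0$ one recovers BFS's $\Theta(\sqrt n)$ and $\tilde\Theta(n^{1/4})$ for product distributions (classical and quantum, respectively), and at $k=n$ one recovers Kalyanasundaram--Schnitger's $\Theta(n)$ and the Razborov--Aaronson--Shi $\Theta(\sqrt n)$. The overall strategy is to design protocols that pay only an extra factor of $\sqrt{k+1}$ in communication to compensate for the allowed correlation, and, on the lower-bound side, to build hard distributions of mutual information $\leq k$ whose difficulty exceeds the BFS lower bound by exactly this factor.

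\emph{Classical upper bound.} I would generalise the BFS sampling/hashing protocol for product distributions to arbitrary $\mu$ with $I(X{:}Y)\leq k$. A natural route is to decompose $\mu$, via an information-theoretic averaging argument, into near-product pieces so that on each piece BFS applies with the ambient size $n$ replaced by an effective size $n(k+1)$; identifying the piece costs the appropriate number of bits, while the BFS analysis accounts for the square root. Equivalently, one can directly design a sampling-and-verification protocol in which Alice and Bob, using public randomness, exchange the restrictions of their inputs to a random coordinate set of size $\Theta(\sqrt{n(k+1)})$; the correctness under $\mu$ reduces to showing that the typical intersection pattern is controlled by the mutual-information budget, which should hold by a Pinsker-style argument relating $I(X{:}Y)$ to deviations of the joint marginals from the product distribution.

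\emph{Classical lower bound.} I would construct an explicit hard $\mu$ of mutual information at most $k$ by combining a Razborov-style hard component on $\Theta(k+1)$ coordinates (which consumes the information budget) with a BFS-style product component on the remaining $n-\Theta(k+1)$ coordinates (which exploits the residual ambient size), and then show $D^\mu_\epsilon(DISJ)=\Omega(\sqrt{n(k+1)})$ via a corruption/rectangle bound adapted to this hybrid distribution. The main obstacle here is arguing that the two hard components combine correctly: the Razborov part forces any low-communication protocol to `spend' $\Omega(k+1)$ communication inside each rectangle, and one has to show that this cost compounds with the BFS discrepancy argument to give $\sqrt{n(k+1)}$ rather than only $\sqrt n + k$. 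This seems to require a careful parameter optimisation together with a direct-sum-style amplification inside the corruption bound, and is where I expect most of the work to lie.

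\emph{Quantum bounds.} For the quantum upper bound $\tilde O((n(k+1))^{1/4})$, I would replace the classical sampling/verification step by a distributed Grover-type search over the $\Theta(\sqrt{n(k+1)})$ candidate coordinates in the style of Buhrman--Cleve--Wigderson and Aaronson--Ambainis, gaining the expected quadratic speed-up and recovering the known $\tilde O(n^{1/4})$ protocol at $k=0$. For the quantum lower bound $\Omega((n(k+1))^{1/4})$, I would apply the Razborov--Aaronson--Shi polynomial (approximate-degree) method to the same hard distribution used in the classical lower bound; passing from classical to quantum costs a square root on the final bound, which exactly matches the shape of the claimed quantum result and makes item (3) follow from item (1) together with an appropriate symmetrisation/polynomial-method transfer.
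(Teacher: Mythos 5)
Your high-level intuition (interpolate between BFS and Razborov, use Grover for the quantum upper bound, use a rectangle argument for the quantum lower bound) is right, but two of the four concrete plans have gaps that the paper deliberately steers around.

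\textbf{Classical lower bound.} You propose a hybrid distribution that is Razborov-like on a block of $\Theta(k+1)$ coordinates and BFS-like on the remaining $n-\Theta(k+1)$ coordinates. You flag yourself that the hard work is to get $\sqrt{n(k+1)}$ rather than $k+\sqrt n$, and I think this route actually stalls there: rectangles in the communication matrix can treat the two coordinate blocks independently, so a corruption/discrepancy bound for a product-of-blocks distribution naturally adds and cannot exceed $\Theta(k+\sqrt n)$, which for, say, $k=n^{3/4}$ is $\Theta(n^{3/4})$, strictly weaker than the target $n^{7/8}$. There is no direct-sum-style amplification that multiplies two independent lower bounds inside a single corruption argument. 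The paper instead uses a \emph{single} Razborov-type distribution on sets of size $m=\Theta(\sqrt{n(k+1)})$ from a universe of size $n$ (disjoint pairs with probability $3/4$, pairs intersecting in exactly one element with probability $1/4$). The information scales as $m^2/n\approx k+1$ by construction, and the corruption bound on inputs of this size gives $\Omega(m)$ directly. Razborov's entropy-counting step does not carry over to such small sets, so the paper replaces it with a more combinatorial counting argument (its Lemma 20 with a ``Colouring Lemma''). The key missing idea in your plan is: shrink the set size, not the universe block; don't partition coordinates.

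\textbf{Quantum upper bound.} Grover-izing the natural two-phase ``shrink the universe by rejection sampling, then small-set DISJ'' protocol is exactly the paper's discarded \emph{first attempt}; the obstacle is that phase~1 is a large classical interactive component that does not speed up under Grover, and rebalancing the two phases bottoms out at about $\tilde O((n(k+1))^{1/3})$, not $(n(k+1))^{1/4}$. To actually reach the fourth root the paper abandons the two-phase structure entirely: each player, knowing the distribution $\mu$, locally identifies the set of coordinates $i$ where, conditioned on their own input and on no intersection on $\{1,\dots,i-1\}$, the probability that the other player has a $1$ at $i$ exceeds a threshold $\tau\approx\epsilon^3/\sqrt{n(k+1)}$; each player then does distributed Grover only over that (capped, $\tilde O(1/\tau)$-sized) set. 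Establishing that an intersection outside both players' ``interesting'' sets is unlikely is the whole content of a nontrivial information-theoretic lemma (their Lemma~10) that converts the $I(X{:}Y)\leq k$ budget into a bound on the total mass of missed intersections. Your sketch, which runs a Grover search over ``the $\Theta(\sqrt{n(k+1)})$ candidate coordinates'' picked by public-coin sampling, leaves both the choice of the candidate set and the correctness under an arbitrary low-information $\mu$ without a concrete argument, and the most literal reading of it (a random coordinate subset) fails because an intersection can sit on any single out-of-sample coordinate.

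\textbf{Classical upper bound and quantum lower bound.} Your plan for the quantum lower bound (apply Razborov's polynomial-method argument to the same hard distribution) matches the paper, which cites Razborov's small-set quantum bound after verifying that small average error under $\mu_{n,k}$ implies small average error under each of the two component distributions. For the classical upper bound the paper again uses the two-phase universe-shrinking protocol (rejection sampling in phase~1 via Fact~3, H\r{a}stad--Wigderson small-set DISJ in phase~2), with an invariant bounding average conditional information across rectangles (their Lemma~6). This is closer to your second, sampling-based idea than to your first, but the step of sampling elements \emph{disjoint from $x$} and removing them from the universe (which preserves intersections, unlike sampling a random coordinate subset) is the load-bearing detail.
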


Previously, for classical protocols, a lower bound of $\Omega(\sqrt n)$ was known for a product distribution \cite{bfs:classes}, and the $\Omega(n)$ lower bound by Razborov \cite{razborov:disj} uses a distribution $\mu$ with $I^\mu(X:Y)=\Theta(n)$.
Babai et al.~\cite{bfs:classes} also gave an upper bound of $O(\sqrt n\log n)$ for product distributions, which we improve by a $\log$-factor. The quantum case has not been considered before.

Our results interpolate between the previously-known extreme cases, and also show that one needs input correlation $\Omega(n)$ to prove tight lower bounds.
Interestingly, the bounds depend inverse-polynomially on the error probability, except for the extreme cases of zero correlation and of maximal correlation.
We also note that a nearly-optimal complexity for randomised protocols can be achieved in a protocol with two rounds of communication (though not in one round).

The tight bound in the randomised case is based on a two-phase protocol, in which the players first remove ``uninteresting'' elements from their sets, until they are (essentially) small enough to be communicated.
For the quantum case this two-phase approach cannot be optimal, because the first phase reveals ``too much'' information about the input.
Therefore we give a completely different protocol for the quantum case, in which the players identify uninteresting elements a priori. This approach is tight up to a log-factor.

\subsection{Mutual information in hard distributions}

Note that for DISJ the complexity increases with the information parameter, and
the randomised communication complexity bound $\Theta(n)$ is reached only once the information in the hard distribution reaches $\Omega(n)$. For other problems like Inner Product mod 2 the tight bound of $\Omega(n)$ is reached already under product distributions \cite{CG}.
But can the mutual information between the input sides that is required to show a tight lower bound ever be {\em larger} than the actual communication complexity?
I.e., is it ever necessary to use distributions that are (much) more strongly correlated than the communication lower bound we want to show, or is it always possible to prove a tight lower bound for a (total) function $f$ by using a hard distribution with $I(X:Y)\leq poly(R(f))$?
A weak example is the quantum complexity of Disjointness, where the tight $\Omega(\sqrt n)$ bound is only reached when the information reaches $\Omega(n)$, but even here the complexity increases gradually with the information. We resolve this question, although our example is not explicit.

\begin{theorem} For every $n\leq d\leq 2^{n/100}$ there is a function $f_d:\01^n\times\01^n\to\01$ that has $R(f_d)=\Theta(\log d)$, but under all bipartite distributions with mutual information less than $n/1000$ the communication bound is $R^{I\leq n/1000}_{1/10}(f_d)\leq O(\log n)$.
\end{theorem}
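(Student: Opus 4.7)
The plan is a probabilistic construction in the spirit of Sherstov's separation between product-distribution and randomised complexity, parameterised so that the hardness scales with $\log d$ rather than $n$. I would pick $f_d$ at random from a structured ensemble, for instance by choosing $d$ independent random Boolean ``gadgets'' on $\{0,1\}^n\times\{0,1\}^n$ (each contributing roughly a $1/d$ density of $1$-entries), together with a random way of combining them into the final matrix of $f_d$; the ensemble should yield density about $1/d$ of $1$'s in $f_d$ and still admit an $O(\log d)$-bit protocol.

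The upper bound $R(f_d)\le O(\log d)$ should come from the natural protocol of transmitting an $O(\log d)$-bit ``pointer'' that identifies which gadget is relevant (the construction makes the chosen gadget locally computable), followed by $O(1)$ extra bits. The matching lower bound $R(f_d)\ge\Omega(\log d)$ follows from a counting argument: the ensemble is too rich to be captured by any fixed short protocol, so with high probability there is a distribution $\mu^{*}$ under which $D^{\mu^{*}}_{1/10}(f_d)\ge\Omega(\log d)$, and by minimax this lifts to $R$. Such a $\mu^{*}$ necessarily has $I_{\mu^{*}}(X\!:\!Y)=\Theta(n)$, consistent with the theorem.

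The main work, and the principal obstacle, is the low-information upper bound: for every $\mu$ with $I_\mu(X\!:\!Y)\le n/1000$ I must exhibit a deterministic protocol of communication $O(\log n)$ and error at most $1/10$ under $\mu$. The plan is a three-way dichotomy on $\mu$. Case~1: if $H_\mu(X)\le O(\log n)$, Alice transmits her input with a Shannon-style code that succeeds on a $(1-\Theta(1/n))$-fraction of the heavy hitters of $\mu_X$, and Bob computes $f_d$ locally; Case~2 is symmetric for $H_\mu(Y)$. Case~3: when both marginal entropies exceed $\Omega(\log n)$, the distribution is ``spread,'' and combined with $I_\mu(X\!:\!Y)\ll \log d$ I expect that with high probability over the random construction, $\Pr_\mu[f_d(X,Y)=1]\le 1/10$ --- intuitively, low mutual information prevents $\mu$ from coherently focusing on the sparse $1$-set of any fixed realisation of $f_d$ --- so the constant protocol ``output $0$'' suffices.

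The delicate step is proving Case~3 uniformly in $\mu$. A naive union bound over all distributions is impossible, so I would quantise $\mu$ via coarse descriptions of its marginals together with a low-complexity description of its correlation structure (for instance, through a mixture-of-products decomposition of bounded support, or via a low-rank approximation of its density matrix), reducing the effective number of distributions to $\exp(\mathrm{poly}(n))$; the randomness of $f_d$ then supports a concentration-plus-union-bound argument that absorbs the quantisation error. Matching the parameters --- so that the concentration tolerates the information budget $k\le n/1000$ while the ensemble remains rich enough to force $R(f_d)=\Theta(\log d)$ all the way up to $d=2^{n/100}$ --- is the quantitative core of the argument.
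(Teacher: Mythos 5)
Your Case~3 is where the argument breaks down, and the gap is not a technicality: it is simply false that \emph{every} $\mu$ with $I^\mu(X:Y)\le n/1000$ and both marginal entropies $\omega(\log n)$ places at most $1/10$ of its mass on the $1$-inputs of a typical sparse random $f_d$. Fix a good realisation of $f_d$ (say with $d=2^{n/100}$, and where every row has about $d$ ones). Take a set $A$ of $2^{n/2000}$ rows, let $\mu_A$ be uniform on $A$, and for each $x\in A$ let $\mu(\cdot\mid X=x)$ be uniform on the set $S_x=\{y: f_d(x,y)=1\}$. For a sparse random matrix these $1$-sets are essentially pairwise disjoint across $A$ (the expected number of $x'\in A\setminus\{x\}$ sharing a one with a given $y$ is $|A|d/2^n\ll 1$), so $H(X\mid Y)\approx 0$ and $I(X:Y)\approx H(X)=\log|A|=n/2000<n/1000$. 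At the same time $H(X)=n/2000$ and $H(Y)\approx n/2000+\log d$ are both $\omega(\log n)$, so this distribution lands squarely in your Case~3, yet $\Pr_\mu[f_d(X,Y)=1]=1$. Answering $0$ errs with probability one. Also note that the hypothesis you invoke, ``$I_\mu(X:Y)\ll\log d$,'' is not available: the theorem allows $I$ up to $n/1000$, which for $d=O(n)$ is far larger than $\log d$.

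This is exactly the regime the paper's argument is designed for. The two bits they exchange at the start certify only membership in a moderate-size set (heavy marginals vs.\ light marginals); within a rectangle where at least one side has at most $2^{n/2+\epsilon n}$ elements, a low-information distribution can still be entirely concentrated on $1$-inputs, as above. The paper unsticks this via a stronger structural property of random sparse matrices --- on any rectangle one of whose sides has at most $2^{2n/3}$ elements there are at most $100\max\{|A|,|B|\}$ ones (``goodness'') --- and then runs a deterministic refinement on that rectangle: nested chains $A_0\supseteq A_1\supseteq\cdots$ and $B_0\supseteq B_1\supseteq\cdots$, each obtained by retaining only rows (resp.\ columns) with at least $1000$ ones against the previous level, which by goodness shrink by a constant factor each step. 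Each player sends the $O(\log n)$-bit index of the first level at which his own line becomes sparse, plus a short fingerprint, and the sparse side does a local table lookup. Without an idea that copes with distributions concentrated on the $1$-set \emph{inside} a moderate-size rectangle --- and ``quantising $\mu$ and union-bounding'' cannot do it, since the counterexample is not a quantisation artefact but an honest family of low-information distributions --- the theorem does not follow from your outline.

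The remaining pieces of your plan are in the right spirit but differ from the paper. The paper's $f_d$ is simply an i.i.d.\ Bernoulli($d/2^n$) matrix (no gadget structure). The $O(\log d)$ upper bound is an equality fingerprint of Alice's row tested against Bob's at most $2d$ candidate $1$-rows in his column, not a gadget pointer. The $\Omega(\log d)$ lower bound is via one-sided discrepancy under the distribution that balances $0$- and $1$-inputs (which has $\Theta(n)$ mutual information, as expected), with Chernoff plus a union bound over all $2^{2^{n+1}}$ rectangles, rather than a richness/counting argument over protocols. Your Cases~1 and~2 roughly correspond to the paper's initial heavy/light bit exchange, although the paper thresholds on pointwise marginal mass rather than on entropy, which is what makes the subsequent rectangle analysis possible.
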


Hence for $f_d$ the complexity stays low until the information is almost maximal, and then shoots up.

\subsection{Dependence of $R^{A\to B,I=0}_\epsilon(f)$ on $\epsilon$
\protect\footnote
{The same result has been obtained recently by Molinaro et al.~\cite{MWY:amow} independently.
The methods being used in the two works are similar;~\cite{MWY:amow} has been published prior to the current publication, while our results have been presented during a public talk at BIRS prior to either publication.}}
\label{ssec_dep}

Finally, we investigate the error dependence of $R^{I\leq k}_\epsilon(f)$ for arbitrary $f$.
In the unrestricted case, by standard boosting techniques we have $R_\epsilon(f)\leq O(R_{1/3}(f)\cdot\log(1/\epsilon))$. We call a function $f$ and a class $C$ of distributions on the inputs with
$\max_{\mu\in C} D^\mu_\epsilon(f)\leq O(\max_{\mu\in C}D^\mu_{1/3}(f)\cdot\log (1/\epsilon))$
{\em boost-able}. For this definition we require the above to be true for all $\epsilon$.
One can easily show that there are distributions $\mu$ and functions $f$, such that e.g.~$D^{\mu}_{1/4}((f)=\Omega(n)$ and $D^\mu_{1/3}(f)=0$, by placing a hard problem with weight $1/3$ in an otherwise constant matrix, so for a fixed distribution $\mu$ one cannot in general expect the error dependence to behave nicely.

Boost-ability is a property of a class of distributions.
The class of all distributions clearly has the property, but what about the class of distributions with information at most $I$?
In particular, what about $I=0$?

The issue is particularly interesting for product distributions, because boost-ability can be used to derive upper bounds on $R^{I\leq k}(f)$ from upper bounds on $R^{I=0}(f)$: due to the substate theorem (Fact \ref{fac:sub} below), a protocol that solves $f$ under all product distributions with error $\epsilon 2^{-9k/\epsilon}$ can be used to solve $f$ under distributions with $I(X:Y)=k$ with error $\epsilon$, hence boost-ability would imply $R^{I\leq k}_\epsilon(f)\leq O((k+1)\cdot R_{1/3}^{I=0}(f)/\epsilon)$ for all $f$.

We will use the super-script ``$A\to B$'' to denote one-way communication.
In this model the class of product distributions is boost-able:

\begin{theorem}\label{thm:learn}
$R^{A\to B,I=0}_\epsilon(f)\leq O(R^{A\to B,I=0}_{1/3}(f)\cdot \log(1/\epsilon))$.
\end{theorem}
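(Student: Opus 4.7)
The plan is to use the VC-dimension of the row concept class $\mathcal{C} := \{f(x,\cdot) : x \in X\}$ (viewed as $\{0,1\}$-valued functions on $Y$) as a bridge between the two error regimes. Write $c := R^{A\to B, I=0}_{1/3}(f)$ and $d := \mathrm{VC}(\mathcal{C})$. I will establish \textbf{(I)} $d = O(c)$ and \textbf{(II)} $R^{A\to B, I=0}_\epsilon(f) \le O(d \log(1/\epsilon))$, which together give the theorem.

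For \textbf{(I)}, I use the shattered-set lower bound. Fix a set $Y_0 \subseteq Y$ of size $d$ shattered by $\mathcal{C}$, and for each $S \subseteq Y_0$ pick a shattering input $x_S \in X$ with $f(x_S, y) = \mathbf{1}[y \in S]$ for $y \in Y_0$. Let $\mu$ be the product distribution that places $\mu_A$ uniformly on $\{x_S : S \subseteq Y_0\}$ and $\mu_B$ uniformly on $Y_0$. By hypothesis there is a deterministic one-way protocol of cost $c$ with $\mu$-error at most $1/3$, say with message map $M : X \to \{0,1\}^c$ and Bob's responses $\{g_m\}$; set $\phi(S) := g_{M(x_S)}|_{Y_0} \in \{0,1\}^{Y_0}$. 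The error bound rewrites as $\mathbb{E}_S[d_H(\phi(S), \mathbf{1}_S)/d] \le 1/3$, with $d_H$ the Hamming distance and $S$ uniform on $2^{Y_0}$. Picking $\alpha \in (1/3, 1/2)$, Markov's inequality then shows a constant fraction of $S$'s satisfy $d_H(\phi(S), \mathbf{1}_S) \le \alpha d$. The at most $2^c$ Hamming balls of radius $\alpha d$ around the distinct values of $\phi$ therefore cover a constant fraction of $\{0,1\}^{Y_0}$; combining with the volume bound $2^{H(\alpha) d}$ for such a ball (where $H$ is the binary entropy and $H(\alpha) < 1$) yields $c \ge (1 - H(\alpha))\, d - O(1) = \Omega(d)$.

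For \textbf{(II)}, I invoke Haussler's classical $\epsilon$-covering theorem: for any measure $\nu$ on $Y$ and any concept class of VC-dimension $d$, there exists a set $\mathcal{G}$ of at most $e(d+1)(2e/\epsilon)^d$ functions such that every member of the class is within $L^1(\nu)$-distance $\epsilon$ of some $g \in \mathcal{G}$. Given a product $\mu = \mu_A \times \mu_B$, both parties fix such a $\mu_B$-cover $\mathcal{G}$ of $\mathcal{C}$ in advance. Alice then locates some $g \in \mathcal{G}$ with $\Pr_{y \sim \mu_B}[g(y) \ne f(x,y)] \le \epsilon$ and transmits its index, using $\log |\mathcal{G}| = O(d \log(1/\epsilon))$ bits; Bob outputs $g(y)$. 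The $\mu$-error is at most $\epsilon$, and in fact pointwise in $x$ after averaging over $y$. Combining the two parts gives $R^{A\to B, I=0}_\epsilon(f) \le O(d \log(1/\epsilon)) \le O(c \log(1/\epsilon))$, as desired.

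The main obstacle lies in part \textbf{(I)}: Markov is relatively weak, so the threshold $\alpha$ must be chosen carefully strictly between $1/3$ and $1/2$ so that the fraction of $S$'s surviving Markov, times the ball volume $2^{H(\alpha) d}$, yields a bound that is genuinely linear in $d$; the slack $1 - H(\alpha) > 0$ is precisely what drives the linear lower bound. Part \textbf{(II)} is essentially a direct application of Haussler's theorem with the natural ``send the nearest cover element'' one-way protocol, and its $\log(1/\epsilon)$ (rather than $1/\epsilon$) dependence is exactly what distinguishes this setting from PAC-learning, where the learner does not know $\mu_B$ a priori.
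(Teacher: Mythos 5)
Your proof is correct and follows the same two-part structure as the paper's argument: reduce to VC-dimension in both directions, and then implement the one-way protocol as ``Alice sends the index of the nearest row in a precomputed $\epsilon$-net.'' The differences are in packaging rather than substance. For part~(I) you give a self-contained volume/Hamming-ball argument for $VC(f)=O(R^{A\to B,I=0}_{1/3}(f))$; the paper simply cites the known equivalence $Q^{A\to B,I=0}_{1/3}(f)=\Theta(VC(f))$ from Kremer--Nisan--Ron and its quantum extensions. Your argument is essentially the one underlying those citations, so nothing new is gained or lost, though being explicit is a reasonable choice. For part~(II) you invoke Haussler's packing/covering theorem to obtain an $L^1(\mu_B)$-cover of size at most $e(d+1)(2e/\epsilon)^d$, whereas the paper derives the same $(1/\epsilon)^{O(d)}$ bound less directly by running the consistent-hypothesis PAC learner on $c=O(d\cdot\epsilon^{-1}\log(1/\epsilon))$ sampled columns and then applying Sauer's lemma to bound the number of distinct restricted rows by $d\binom{c}{d}\le(1/\epsilon)^{O(d)}$. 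The paper's route makes explicit the contrast driving the whole section --- the $\epsilon$-net extracted from a PAC transcript is exponentially smaller than the transcript itself --- while your route via Haussler is cleaner and skips the intermediate PAC simulation. Either way the protocol and the resulting $O(d\log(1/\epsilon))$ bound are identical, and both proofs are valid.
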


We also show that when the information is between $1$ and $n^{1-\Omega(1)}$, then neither distributional randomised nor distributional quantum protocols are, in general, boost-able, see our Corollaries~\ref{coro:boost} and~\ref{coro:boostq}.

It is well known that $R^{A\to B,I=0}_{1/3}(f)=\Theta(VC(f))$ \cite{knr:rand1round}, where $VC(f))$ is the VC-dimension of the set of rows of the communication matrix. This even extends to the quantum case \cite{ambainis:racj,klauck:qpcom}.
The VC-dimension is also known to characterise the hardness of PAC-learning (see the monograph by Kearns and Vazirani \cite{kearns&vazirani:learn}) -- in fact, the previous proofs of the upper bound on $R^{A\to B,I=0}_\epsilon(f)$ in terms of VC-dimension have been done by explicitly simulating learning algorithms in the one-way communication model: Random examples are generated using a public coin, and Alice classified the examples in order to teach Bob a row of the communication matrix of $f$ in the PAC sense (examples were generated from the public coin, and Alice labelled those examples spending 1 bit per example).

The main limitation of this approach is that for PAC learning one needs $\Omega(1/\epsilon)$ examples to achieve error $1/\epsilon$. On the other hand, this approach ignores two strengths of the one-way model: First, Alice and Bob know the underlying distribution; second, Alice can do more than simply label examples.
One can interpret the one-way communication model under product distributions as a learning model, in which Alice is an (old-fashioned) teacher, who teaches by monologue, but using shared randomness that does not count towards the communication. Does such a teacher offer any advantage over learning from random examples? At first glance no, since both models are characterised by the VC-dimension, and one could conclude that learning from experience is all it takes.
Our Theorem~\ref{thm:learn}, however, shows that the final error can be made much smaller when learning from a teacher, comparing to learning ``just from experience''.
Note that in practice $1/\epsilon$ can also easily become the dominating factor in the complexity of a learning algorithm.

The main idea in our protocol is that Alice and Bob can beforehand agree on an $\epsilon$-net among the rows of the communication matrix, and Alice simply sends the name of the nearest row in the net. During a PAC learning algorithm, on the other hand, the $\epsilon$-net is generated from examples, which is more costly.

We can now discuss the previous result of Jain and Zhang \cite{JZ09}.
They show that for all total Boolean functions $f$ in the one-way model: \[R^{A\to B,I\leq k}_\epsilon(f)\leq O((k+1)\cdot R^{A\to B,I=0}_{1/3}(f)\cdot 1/\epsilon^2\cdot\log(1/\epsilon)).\]
This extends the VC-dimension upper bound to distributions with nonzero information.
Their protocol for information-$k$ distributions is constructed by simulating the PAC learning algorithm for the row $x$, and by generating
examples $y',f(x,y')$ using a rejection-sampling protocol.
We can improve the error dependence to $1/\epsilon$ by the following idea. Due to the Substate Theorem (Fact \ref{fac:sub} below) it is enough to find a protocol
that has error $2^{-9k/\epsilon}$ under the product of the marginal distributions of a distribution $\mu$ (with information $k$).
But this can be achieved with communication $O((k+1)/\epsilon\cdot R^{A\to B,I=0}_{1/3}(f))$ according to Theorem~\ref{thm:learn}.

\section{Preliminaries and Definitions}

\subsection{Information Theory}
We refer to \cite{cover&thomas:infoth} for standard definitions concerning information theory.

The relative entropy of two distributions on a discrete support is denoted by $D(\rho||\sigma)$. The relative max-entropy is $D_\infty(\rho||\sigma)=
\max_x\log(\rho(x)/\sigma(x))$. Note that these quantities are infinite, if the support of $\sigma$ does not contain the support of $\rho$. We mostly consider bipartite distributions on $\{0,1\}^n\times\{0,1\}^n$. The mutual information
is $I(X:Y)=D(\mu||\mu_X\times\mu_Y)$, where $\mu$ is the joint distribution of $(X,Y)$ and $\mu_X$, and $\mu_Y$ are the two marginal distributions of $\mu$.
We also use the quantity $I_\infty(X:Y)=D_\infty(\mu||\mu_X\times \mu_Y)$.
If we want to indicate the distribution used we write its name as a superscript, like $I^\mu(X:Y)$.
When $(X,Y,Z)\sim\mu$, we write $\mu(Y)$ to address the marginal distribution of $Y$, and sometimes (when we feel that a reader might benefit from such ``expansion'') we write $\mu(X,Y,Z)$ to address $\mu$ itself.

We first state the following well-known fact, see \cite{jrs:privacy}.
\begin{fact}[Substate Theorem]\label{fac:sub}
\begin{enumerate}
\item $I(X:Y)\leq I_\infty(X:Y)$.
\item For a given $\mu$ there is a $\mu'$ with $||\mu-\mu'||\leq\epsilon$, and $I_\infty^{\mu'}(X:Y)\leq I^\mu(X:Y)\cdot4/\epsilon$,
where $||\mu-\mu'||$ is the total variation distance between $\mu$ and $\mu'$.
\end{enumerate}
\end{fact}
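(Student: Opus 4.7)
Part 1 is immediate from the definitions: $I^\mu(X{:}Y)$ is the $\mu$-expectation of the random variable $\log\bigl(\mu(x,y)/(\mu_X(x)\mu_Y(y))\bigr)$, and any expectation is bounded above by the essential supremum, which in the finite discrete setting is exactly $I_\infty^\mu(X{:}Y)$ by definition.

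For Part 2, my plan is a straightforward truncation argument. Write $\nu = \mu_X \times \mu_Y$ for brevity and, for a threshold $t > 0$ to be chosen, let $B_t = \{(x,y) : \log(\mu(x,y)/\nu(x,y)) > t\}$. Define $\mu'$ as the conditional distribution of $\mu$ on the complement $B_t^c$. Then by construction $\|\mu - \mu'\| \leq \mu(B_t)$, and for every $(x,y)$ in the support of $\mu'$,
\[
\log\frac{\mu'(x,y)}{\nu(x,y)} \;\leq\; t \;+\; \log\frac{1}{1-\mu(B_t)}.
\]
So it suffices to find a $t$ not much larger than $I^\mu(X{:}Y)/\epsilon$ for which $\mu(B_t) \leq \epsilon$.

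The heart of the argument is this bound on $\mu(B_t)$. Here I would use the pointwise inequality $\varphi(a,b) := a\log(a/b) - a + b \geq 0$ for $a,b>0$, equivalent to the elementary fact $r\log r \geq r - 1$ via $r = a/b$. Since $\sum_{x,y}(\mu - \nu) = 0$, one has $\sum_{x,y}\varphi(\mu(x,y),\nu(x,y)) = I^\mu(X{:}Y)$, while on $B_t$ every summand satisfies $\varphi \geq \mu\cdot(\log(\mu/\nu) - 1) \geq (t-1)\mu$. This yields $\mu(B_t) \leq I^\mu(X{:}Y)/(t-1)$; setting $t := 2I^\mu(X{:}Y)/\epsilon + 1$ (and assuming without loss that $I^\mu(X{:}Y) \geq 1$ and $\epsilon$ is bounded away from $1$) gives $\mu(B_t) \leq \epsilon/2$, and the previous display then produces $D_\infty(\mu' \| \nu) \leq t + O(1) \leq 4I^\mu(X{:}Y)/\epsilon$ with slack.

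The main obstacle I anticipate is cosmetic rather than conceptual: the paper defines $I_\infty^{\mu'}(X{:}Y)$ against the marginals of $\mu'$, whereas the argument above naturally controls $D_\infty(\mu'\|\mu_X\times\mu_Y)$. Since $\|\mu-\mu'\| \leq \epsilon$ implies that the two pairs of marginals are $\epsilon$-close as well, the ratio between $\mu_X\mu_Y$ and $\mu'_X\mu'_Y$ is close to $1$ everywhere except possibly on a second low-$\mu'$-mass set, which can be excised by a further truncation; the cost is absorbed into the constant. Getting this clean-up to fit exactly inside the advertised factor of $4$, rather than $4 + o(1)$, looks like the only really finicky step; all the real content lies in the truncation and the elementary inequality $r\log r \geq r - 1$.
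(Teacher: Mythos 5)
The paper does not prove Fact~\ref{fac:sub}; it cites \cite{jrs:privacy} and moves on, so there is no internal proof to compare against. Judged on its own merits, Part~1 of your writeup is correct. For Part~2, the truncation-and-Markov step is the right opening move and is essentially the standard proof of the classical substate theorem \emph{against the fixed reference measure} $\nu=\mu_X\times\mu_Y$: condition on $G_t=\{(x,y):\log(\mu/\nu)\le t\}$, bound $\mu(G_t^c)\le I^\mu/(t-1)$ via $a\log(a/b)\ge a-b$, and conclude $D_\infty(\mu'\Vert\nu)\lesssim t$. That part of the plan is sound.

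But the obstacle you label ``cosmetic'' is where the actual difficulty sits, and your proposed dismissal of it does not work. Total-variation closeness of $\mu'_X$ to $\mu_X$ gives no pointwise ratio control: a row $x$ with tiny $\mu_X(x)$ can have almost all of its mass removed, making $\mu'_X(x)$ an arbitrarily small fraction of $\mu_X(x)$, and there is no general domination $D_\infty(\mu'\Vert\mu'_X\mu'_Y)\le D_\infty(\mu'\Vert\nu)$. Concretely, the diagonal distribution $\mu'$ with $\mu'(1,1)=1-\eta$, $\mu'(2,2)=\eta$ has $D_\infty(\mu'\Vert\mu'_X\mu'_Y)=\log(1/\eta)$, yet for the product $\nu$ with $\nu_X(2)=\nu_Y(2)=\sqrt{\eta}$ one gets $D_\infty(\mu'\Vert\nu)=O(\sqrt{\eta})$; so bounding the latter really does not bound the former. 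Your remedy (``excise a second low-$\mu'$-mass set by a further truncation'') also regresses as stated: deleting rows whose retained fraction is below $1/2$ removes columns' mass, which creates new rows below $1/2$, and so on, and it is not automatic that the total removed mass stays below $\epsilon$. One can fix this -- for example by a careful geometric-decay accounting across rounds, or by mixing $\mu|_{G_t}/\mu(G_t)$ with a small multiple of a fixed product distribution so the marginals cannot collapse -- but it is a real step to carry out, not a constant to absorb. Two smaller points: ``assume without loss $I^\mu\ge 1$'' is not a without-loss reduction (the bound $4I^\mu/\epsilon$ is supposed to be valid, and smaller, for small nonzero $I^\mu$), and you should note that what your argument actually controls, $D_\infty(\mu'\Vert\mu_X\mu_Y)$, is in fact what the paper's subsequent uses of this fact rely on, even though the stated Fact is phrased in terms of $I_\infty^{\mu'}$.
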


We will use the following lemmas and facts. The first follows from the definition of relative entropy.

\begin{lemma}
Let $\mu$ be a bipartite distribution, $\rho=\mu_A\times\mu_B$, and $\sigma=\sigma_A\times \sigma_B$ any product distribution.

Then $D(\mu||\sigma)=D(\mu||\rho)+D(\rho||\sigma)=I^\mu(X:Y)+D(\rho||\sigma)$.
\end{lemma}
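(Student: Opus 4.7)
The plan is to prove the identity by a direct expansion of the definition of relative entropy, exploiting the product structure of both $\rho$ and $\sigma$. The second equality is immediate from the definition of mutual information: $I^\mu(X:Y) = D(\mu \| \mu_A \times \mu_B) = D(\mu \| \rho)$, so the entire content lies in the first equality.

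First I would write out
\[
D(\mu\|\sigma) = \sum_{x,y} \mu(x,y) \log \frac{\mu(x,y)}{\sigma_A(x)\,\sigma_B(y)},
\]
and then insert the ``multiplication by $1 = \mu_A(x)\mu_B(y)/(\mu_A(x)\mu_B(y))$'' trick inside the logarithm. Splitting the $\log$ of the product yields
\[
D(\mu\|\sigma) = \sum_{x,y} \mu(x,y) \log \frac{\mu(x,y)}{\mu_A(x)\mu_B(y)} + \sum_{x,y} \mu(x,y) \log \frac{\mu_A(x)\mu_B(y)}{\sigma_A(x)\sigma_B(y)}.
\]
The first term is exactly $D(\mu\|\rho)$. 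In the second term, the summand depends on $(x,y)$ only through the factorised ratio, so summing out the other coordinate via the marginals of $\mu$ gives
\[
\sum_x \mu_A(x)\log\frac{\mu_A(x)}{\sigma_A(x)} + \sum_y \mu_B(y)\log\frac{\mu_B(y)}{\sigma_B(y)} = D(\mu_A\|\sigma_A) + D(\mu_B\|\sigma_B).
\]

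Finally I would note that because both $\rho$ and $\sigma$ are product distributions, $D(\rho\|\sigma) = D(\mu_A\times\mu_B \,\|\, \sigma_A\times\sigma_B) = D(\mu_A\|\sigma_A)+D(\mu_B\|\sigma_B)$ (the standard additivity of KL divergence over product distributions, itself a one-line verification). Combining this with the previous display yields $D(\mu\|\sigma) = D(\mu\|\rho) + D(\rho\|\sigma)$, and identifying $D(\mu\|\rho)$ with $I^\mu(X{:}Y)$ finishes the proof.

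I do not expect any obstacle: the whole argument is the standard ``Pythagorean'' identity for the information projection of $\mu$ onto the set of product distributions, evaluated at the specific point $\rho = \mu_A\times\mu_B$, which is indeed the $I$-projection. The only care needed is to keep the support conditions straight so that no $\log 0$ or $\log\infty$ arises; this is handled by the standard conventions (any term in which $\sigma$ vanishes forces $\mu$ and $\rho$ to vanish too, or else both sides are $+\infty$ and the identity is trivially true).
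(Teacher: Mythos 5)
Your proof is correct, and it is exactly the argument the paper alludes to when it says the lemma ``follows from the definition of relative entropy'': expand the log, insert $\mu_A\mu_B/(\mu_A\mu_B)$, and observe that the cross term averages against $\mu$ the same way it averages against $\rho$ because it only depends on $(x,y)$ through the marginals. The paper gives no written proof, so your fleshed-out version is exactly the intended argument; your concluding remarks about support conventions and the information-projection interpretation are appropriate but not needed.
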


The following is a consequence of the log sum inequality.
\begin{lemma}\label{lem:lowerdisc}
Let $\mu,\sigma$ be distributions (for concreteness on $\{0,1\}^n\times \{0,1\}^n$), and $E$ an event. Then we have that $\sum_{x,y\in E} \mu(x,y)\log(\mu(x,y)/\sigma(x,y))\geq\max\{-1,\mu(E)\log(\mu(E)/\sigma(E))\}$.
\end{lemma}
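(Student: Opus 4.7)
My plan is to obtain the statement as a direct application of the log sum inequality (the very fact cited in the lemma), followed by an elementary observation to handle the absolute bound $-1$.

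Recall that for nonnegative reals $\{a_i\}$ and $\{b_i\}$, the log sum inequality asserts
$$\sum_i a_i \log\frac{a_i}{b_i} \ \geq\ \Bigl(\sum_i a_i\Bigr)\log\frac{\sum_i a_i}{\sum_i b_i},$$
with the usual conventions $0\log 0 = 0$ and $a\log(a/0) = +\infty$ for $a>0$. Instantiating $a_{x,y}=\mu(x,y)$ and $b_{x,y}=\sigma(x,y)$, and summing only over pairs $(x,y)\in E$, yields
$$\sum_{(x,y)\in E}\mu(x,y)\log\frac{\mu(x,y)}{\sigma(x,y)} \ \geq\ \mu(E)\log\frac{\mu(E)}{\sigma(E)},$$
which already gives one of the two candidates inside the $\max$ on the right-hand side.

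To obtain the universal bound $-1$, I would observe that because $\sigma$ is a probability distribution one has $\sigma(E)\leq 1$, hence $-\log\sigma(E)\geq 0$, so $\mu(E)\log(\mu(E)/\sigma(E))\geq \mu(E)\log\mu(E)$. The function $t\mapsto t\log t$ on $[0,1]$ attains its minimum $-1/(e\ln 2)$, which is strictly greater than $-1$; chaining this with the previous display finishes the argument. The only corner cases are $\mu(E)=0$ (both sides vanish) and $\sigma(E)=0$ with $\mu(E)>0$ (the left-hand side is $+\infty$); both are handled by the standard conventions. I do not anticipate any significant obstacle — the lemma is essentially a convenient repackaging of the log sum inequality together with the elementary lower bound on $t\log t$.
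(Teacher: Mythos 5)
Your proof is correct and matches the paper's intent: the paper gives no proof of this lemma, stating only that it ``is a consequence of the log sum inequality,'' which is exactly the fact you invoke, and your additional step (bounding $t\log t$ from below on $[0,1]$ and using $\sigma(E)\le 1$) is the natural way to extract the absolute $-1$ bound. In fact your argument shows that the true constant is $-1/(e\ln 2)\approx -0.53$, so the ``$-1$'' branch of the $\max$ is never the binding one; the paper simply uses the rounder constant.
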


\begin{lemma}\label{lem:rest}

Let $\mu,\sigma$ be distributions on $\{0,1\}^n\times\{0,1\}^n$, $E$ an event, and $\mu'$ the distribution $\mu$ restricted to $E$.
Furthermore, assume that under $\mu$ we have that $Prob(E)=\alpha$.
Then $D(\mu'||\sigma)\leq (D(\mu||\sigma)+1)/\alpha-\log\alpha$.
\end{lemma}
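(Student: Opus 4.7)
The plan is to unfold the definition of relative entropy for $\mu'$ and then handle the complement of $E$ with a single application of Lemma~\ref{lem:lowerdisc}; no serious obstacle is expected.

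First, I would use $\mu'(x,y)=\mu(x,y)/\alpha$ on $E$ and $\mu'(x,y)=0$ elsewhere to expand
\mat{D(\mu'||\sigma) \;=\; \sum_{(x,y)\in E} \fr{\mu(x,y)}{\alpha}\,\log\fr{\mu(x,y)}{\alpha\,\sigma(x,y)} \;=\; \fr{1}{\alpha}\cdot S_E \;-\; \log\alpha,}
where $S_E \deq \sum_{(x,y)\in E}\mu(x,y)\log\bigl(\mu(x,y)/\sigma(x,y)\bigr)$; the term $-\log\alpha$ comes from splitting the factor $\log(1/\alpha)$ out of each summand and using $\sum_{(x,y)\in E}\mu(x,y)=\alpha$. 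Note that $-\log\alpha\geq 0$, which is why one expects this contribution to survive in the final bound.

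Next, I would split the full divergence as $D(\mu||\sigma)=S_E+S_{E^c}$, with $S_{E^c}$ the analogous sum over the complement event. Applying Lemma~\ref{lem:lowerdisc} to $E^c$, and invoking the ``$-1$'' branch of its maximum, gives $S_{E^c}\geq -1$, hence
\mat{S_E \;\leq\; D(\mu||\sigma)+1.}
Plugging this bound into the first display yields exactly the claimed inequality
\mat{D(\mu'||\sigma)\;\leq\;\fr{D(\mu||\sigma)+1}{\alpha}\,-\,\log\alpha.}

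The only non-routine step is the lower bound $S_{E^c}\geq -1$: without it, one could not safely discard the $S_{E^c}$ contribution, since it can be negative (the function $p\mapsto p\log(p/q)$ is minimised around $p=q/e$, making $S_{E^c}$ as small as roughly $-1/e$ in the worst case). Once Lemma~\ref{lem:lowerdisc} is on the table, everything else is algebraic, and the additive $+1$ in the conclusion is traceable directly to this single lower bound. I therefore anticipate no genuine obstacle in carrying out the proof.
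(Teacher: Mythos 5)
Your proof is correct and takes essentially the same route as the paper: both split the relative entropy into its contributions from $E$ and $E^c$, invoke the $-1$ branch of Lemma~\ref{lem:lowerdisc} applied to $E^c$, and then substitute $\mu'=\mu/\alpha$ on $E$. The only cosmetic difference is direction -- you expand $D(\mu'\|\sigma)$ forward, while the paper starts from $D(\mu\|\sigma)$ and bounds it from below -- but the algebra is identical.
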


\begin{proof}
For all $x,y\in E$ we have $\mu'(x,y)=\mu(x,y)/\alpha$, otherwise $\mu'(x,y)=0$.
\begin{eqnarray*}
&&D(\mu||\sigma)\\
&=&\sum_{x,y} \mu(x,y)\log\left(\frac{\mu(x,y)}{\sigma(x,y)}\right)\\
&\stackrel{(*)}{\geq}&\sum_{x,y\in E} \mu(x,y)\log\left(\frac{\mu(x,y)}{\sigma(x,y)}\right)-1\\
&\geq&\sum_{x,y\in E}\mu'(x,y)\cdot\alpha\cdot\log\left(\frac{\mu'(x,y)\cdot\alpha}{\sigma(x,y)}\right)-1\\
&=&D(\mu'||\sigma)\cdot\alpha+\alpha\log\alpha-1,
\end{eqnarray*}
where for (*) we use Lemma \ref{lem:lowerdisc} with the event $\{0,1\}^n\times\{0,1\}^n-E$.
\end{proof}

We will use the following {\em rejection sampling} protocol from \cite{harsha:corr}.

\begin{fact}\label{fac:sample}
Let $\mu$ and $\nu$ be distributions on $\{0,1\}^n$ with $D(\mu||\nu)=k$. Assume that Alice and Bob both know $\nu$, and can create samples from $\nu$ using a public coin. Then Alice can send a message of expected length $k+2\log k+O(1)$ to Bob, which allows Bob (and Alice) to obtain a shared sample from the distribution $\mu$. The expectation is over the public coin tosses, and Bob's sample is distributed exactly with $\mu$.
\end{fact}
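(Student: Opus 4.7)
The plan is to implement a greedy rejection sampling scheme (in the spirit of Harsha, Jain, McAllester and Radhakrishnan) and to analyse its expected communication. First I would set up the joint source via the public coin, reading off an i.i.d.\ sequence $X_1, X_2, \ldots \sim \nu$ together with auxiliary i.i.d.\ uniforms $U_1, U_2, \ldots \in [0,1]$. Alice's task then reduces to privately computing (using her knowledge of $\mu$) a stopping index $I$ such that $X_I$ is exactly $\mu$-distributed, and transmitting $I$ to Bob using a prefix-free integer code of expected length $k + 2\log k + O(1)$; then Bob simply outputs the sample $X_I$ that he already has from the shared source.

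Next I would describe the greedy acceptance rule. Naive rejection sampling, in which each $X_i$ is accepted with probability $\mu(X_i)/(M\,\nu(X_i))$ for a uniform normaliser $M$, forces $M \ge 2^{D_\infty(\mu\|\nu)}$ and therefore yields only a $D_\infty$-type bound. Instead, Alice maintains a nonnegative residual function $\rho_i$ with $\rho_1 = \mu$; at step $i$ she uses $U_i$ to accept $X_i$ with probability proportional to $\rho_i(X_i)/\nu(X_i)$ (capped at $1$), and upon rejection updates $\rho_{i+1}$ by subtracting, pointwise at $X_i$, exactly the mass consumed at step $i$. This bookkeeping enforces two invariants: (i)~summed over $i$, the probability of stopping at step $i$ with value $x$ equals $\mu(x)$, so that $X_I \sim \mu$ exactly; and (ii)~the total residual $\|\rho_i\|_1$ shrinks fast enough to keep $I$ small.

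The analytic core is bounding $\mathbb{E}[\log I]$ by $D(\mu\|\nu) + O(1)$. Standard greedy-coupling estimates show that $\mathbb{E}[I] \le 2^{D(\mu\|\nu)}$, but one needs the finer statement on $\log I$. The key point is that conditionally on $X_I = x$ the index $I$ is stochastically dominated by a geometric variable with success parameter $\mu(x)/\nu(x)$, so a short log-sum calculation yields
\[
  \mathbb{E}[\log I] \;\le\; \sum_x \mu(x)\log\bigl(\mu(x)/\nu(x)\bigr) + O(1) \;=\; D(\mu\|\nu) + O(1).
\]
With this bound in hand, encoding $I$ with an Elias delta code of length $\log I + 2\log\log I + O(1)$ bits and applying concavity of $\log\log$ under expectation produces the claimed $k + 2\log k + O(1)$ bound.

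The hard part will be the sharp $\mathbb{E}[\log I]$ estimate in terms of the average $D(\mu\|\nu)$ rather than the worst-case $D_\infty(\mu\|\nu)$: this is precisely where the per-point residual bookkeeping has to be arranged so that the pointwise mass consumption is as tight as possible at each step (any looser accounting brings back a $D_\infty$-type penalty). Everything else — exploiting the shared randomness to spare Alice from transmitting $X_I$ itself, verifying that the marginal of $X_I$ is exactly $\mu$ rather than merely close in total variation, and the prefix-free integer encoding — is then routine.
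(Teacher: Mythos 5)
The paper does not prove Fact~\ref{fac:sample}: it is stated as a known result and cited directly from Harsha, Jain, McAllester, and Radhakrishnan's work on the communication complexity of correlation (the reference \texttt{harsha:corr}), so there is no in-paper proof to compare against. Your proposal is a reasonable reconstruction of the argument from that source: read an i.i.d.\ $\nu$-sequence off the public coin, run a greedy rejection sampler whose per-point residual bookkeeping guarantees an exact $\mu$-marginal, and transmit the stopping index $I$ with an Elias-type integer code, so that the whole task reduces to showing $\mathbb{E}[\log I] \le D(\mu\|\nu)+O(1)$. One technical slip: the intermediate claim that $I$, conditioned on $X_I=x$, is stochastically dominated by a geometric with \emph{success parameter} $\mu(x)/\nu(x)$ is ill-posed whenever $\mu(x)>\nu(x)$. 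What the greedy accounting actually gives is a bound of the form $\mathbb{E}[I\mid X_I=x]=O(\mu(x)/\nu(x))$ (morally a geometric with that \emph{mean}, i.e.\ success probability about $\nu(x)/\mu(x)$); Jensen's inequality on $\log$ and averaging over $x\sim\mu$ then yield $\mathbb{E}[\log I]\le D(\mu\|\nu)+O(1)$, and the Elias-delta overhead accounts for the $2\log k$ term. With that correction the outline is the right one; for this paper's purposes, though, the intended move is simply to cite the result rather than re-derive it.
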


To bound the average information during a protocol we have the following lemma.

\begin{lemma}
Let $\mu$ be a distribution on $\{0,1\}^n\times\{0,1\}^n$, and $X,Y$ the corresponding random variables. Let $\cal R$ be a partition of $\{0,1\}^n\times\{0,1\}^n$ into rectangles $R$ (where $R$ also indicates the random variable induced under $\mu$). Then we have $I(X:Y)\geq I(X:Y|R)$.
\end{lemma}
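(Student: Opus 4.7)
The plan is to interpret both $I(X:Y)$ and $I(X:Y|R)$ as relative entropies against the common product reference distribution $\pi = \mu_X\times\mu_Y$, and then to apply the chain rule of relative entropy, with all the leverage coming from the rectangle hypothesis in one clean step. Since $I(X:Y) = D(\mu||\pi)$ and $R$ is a deterministic function of $(X,Y)$, the chain rule yields
\[
D(\mu||\pi) \;=\; D(\mu_R||\pi_R) \;+\; \sum_r \mu(R=r)\cdot D(\mu_r||\pi_r),
\]
where $\mu_R,\pi_R$ are the induced distributions of $R$ under $\mu$ and $\pi$, and $\mu_r,\pi_r$ are the conditionals of $(X,Y)$ given $R=r$ under $\mu$ and $\pi$ respectively.

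The rectangle hypothesis is essential at exactly the following step: because $\pi$ is a product distribution and each rectangle $r = A_r\times B_r$ is itself a product set, the conditional $\pi_r$ is again a product distribution, explicitly $(\mu_X|_{A_r})\times(\mu_Y|_{B_r})$; for an arbitrary partition this would fail. With this factorisation in hand, the first lemma of the preliminaries, applied on the sub-rectangle $A_r\times B_r$, yields for every $r$
\[
D(\mu_r||\pi_r) \;=\; I^{\mu_r}(X:Y) \;+\; D\bigl((\mu_r)_X\times(\mu_r)_Y \,||\, \pi_r\bigr) \;\geq\; I^{\mu_r}(X:Y).
\]
Plugging this back in and discarding the non-negative term $D(\mu_R||\pi_R)$ gives
\[
I(X:Y) \;\geq\; \sum_r \mu(R=r)\cdot I^{\mu_r}(X:Y) \;=\; I(X:Y|R),
\]
as claimed.

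The main subtlety --- and the place I would expect a first attempt to go astray --- is recognising that the rectangle hypothesis is precisely what makes $\pi_r$ factorise, and that without this factorisation there is no clean lower bound on $D(\mu_r||\pi_r)$ in terms of $I^{\mu_r}(X:Y)$. An apparently equivalent reformulation, obtained from $H(R|X,Y)=0$, is $I(X:R)+I(Y:R)\geq H(R)$, but attempting to prove this directly conceals the role of the rectangle structure and runs headlong into the fact that conditional mutual information can in general exceed its unconditional counterpart.
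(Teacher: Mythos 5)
Your proof is correct, and it takes a genuinely different route from the paper's. The paper proceeds by a refinement argument: it shows that conditional mutual information cannot increase when a single rectangle in the partition is split into two (using the log-sum inequality), and then reduces the general case to this one by observing that any rectangle partition can be generated from the trivial partition by successive splits. That combinatorial reduction is not obvious for partitions that do not arise as the leaves of a communication protocol, and the authors need a separate footnote to justify it (arguing that any rectangle partition can always be collapsed by merging two rectangles that share a row- or column-set). Your approach sidesteps this entirely: by recasting both sides as relative entropies against the fixed product reference $\pi=\mu_X\times\mu_Y$ and applying the chain rule in one shot, you handle an arbitrary rectangle partition $\mathcal R$ directly without any induction on its structure. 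The place where the rectangle hypothesis enters is also cleaner in your version — it is exactly what makes each conditional $\pi_r$ a product distribution, so that the preliminary decomposition $D(\mu_r\|\pi_r)=I^{\mu_r}(X:Y)+D(\rho_r\|\pi_r)\ge I^{\mu_r}(X:Y)$ applies. The two proofs ultimately rest on the same inequality (log-sum $\equiv$ non-negativity of relative entropy), but your packaging avoids the partition-refinement lemma and makes the role of the product structure explicit, which I think is the more transparent argument; the paper's version is perhaps more elementary in that it avoids invoking the chain rule for relative entropy.
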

\begin{proof}
Note that if $\cal R$ consists of only one rectangle covering everything, then we have $I(X:Y)=I(X:Y|R)$. We show that for any $\cal R$, if we refine $\cal R$ into $\cal R'$ by splitting a single rectangle $A\times B$ (w.l.o.g.~splitting the columns into $B_1$ and $B_2$), then $I(X:Y|R)$ does not increase. Since every $\cal R$ can be obtained by starting with a single rectangle and splitting rectangles, this implies the lemma.\footnote{This is obvious for partitions arising from deterministic communication protocols. The same is true without assuming a protocol. Instead of showing how to split rectangles to generate the partition we may argue that starting from a partition we can merge rectangles until only one rectangle is left. To show this it is enough to prove that in any partition of the matrix positions $U\times V$ into two or more rectangles there must be two rectangles that have the same row- or column-set. The case of two rectangles is trivial, and for partitions into $m\geq 3$ rectangles we first consider the case where a single rectangle spans all rows or all columns, in which the remaining part can be treated via induction. If there is no such rectangle, we can take any rectangle $R=A\times B$ from the partition, and consider the three regions $S=A\times (V-B), T=(U-A)\times B, Q=(U-A)\times (V-B)$. It is easy to see by induction that either $SQ$ or $TQ$ must contain two rectangles that share the same row- or column-sets.}

Denote by $\mu_R$ the distribution $\mu$ restricted to a rectangle $R=A\times B$ and re-scaled. Denote by $\nu_R$ the product of marginals of $\mu_R$.
We have that $\nu_R(x,y)=\mu(A,y)\cdot \mu(x,B)/\mu(R)^2$, where $\mu(A,y)=\sum_{x\in A}\mu(x,y)$ and $\mu(x,B)=\sum_{y\in B}\mu(x,y)$.

Since $I(X:Y|R)=\sum_R\sum_{x,y\in R}\mu(x,y)\log\frac{\mu(x,y)\cdot\mu(R)}{\mu(A,y)\cdot\mu(x,B)}$, when we split (a particular) $R=A\times B$ into
$R_1=A\times B_1$ and $R_2=A\times B_2$, the expression for $I(X:Y|R)$ is changed by adding

\begin{eqnarray*}
&&\sum_{x,y\in R_1}\mu(x,y)\log\frac{\mu(R_1)}{\mu(x,B_1)}+\sum_{x,y\in R_2}\mu(x,y)\log\frac{\mu(R_2)}{\mu(x,B_2)}\\
&=&\sum_{x\in A}\mu(x,B_1)\log\frac{\mu(R_1)}{\mu(x,B_1)}+\sum_{x\in A}\mu(x,B_2)\log\frac{\mu(R_2)}{\mu(x,B_2)},
\end{eqnarray*}
and subtracting
\[\sum_{x,y\in R}\mu(x,y)\log\frac{\mu(R_1)+\mu(R_2)}{\mu(x,B_1)+\mu(x,B_2)}=\sum_{x\in A}(\mu(x,B_1)+\mu(x,B_2))\log\frac{\mu(R_1)+\mu(R_2)}{\mu(x,B_1)+\mu(x,B_2)}.\]

The log sum inequality implies that the latter is not smaller than the former and we are done.
\end{proof}

The same follows for the transcript of a randomised protocol.

\begin{lemma}\label{lem:decrease}
Let $C$ be a random variable (the public coin) and ${\cal R}(c)$ be a partition of $\{0,1\}^n\times\{0,1\}^n$ into rectangles $R$ (depending on a value $c$ of $C$).
Let $\mu$ denote a distribution on $\{0,1\}^n\times \{0,1\}^n$, independent of $C$. Then $I(X:Y)\geq I(X:Y|R,C)$, where $R$ is the random variable that represents a rectangle from ${\cal R}_c$ when $C=c$.
\end{lemma}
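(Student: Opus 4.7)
The plan is to reduce \lemref{lem:decrease} to the unnumbered lemma immediately preceding it, which handles the case of a single (deterministic) rectangle partition, by conditioning on the public coin $C$ and averaging.

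First I would write out $I(X:Y\mid R,C)=\sum_c \Pr[C=c]\cdot I(X:Y\mid R, C=c)$. For each fixed value $c$ of the public coin, two things happen: (i) the rectangle partition ${\cal R}(c)$ is a deterministic partition of $\{0,1\}^n\times\{0,1\}^n$ into rectangles, exactly of the kind to which the previous lemma applies; and (ii) since $\mu$ is independent of $C$, the conditional distribution of $(X,Y)$ given $C=c$ is still $\mu$. Hence the quantity $I(X:Y\mid R, C=c)$ is nothing other than the conditional mutual information from the previous lemma, computed under $\mu$ with the fixed rectangle partition ${\cal R}(c)$. By that lemma,
\[
I(X:Y\mid R, C=c)\leq I^\mu(X:Y)=I(X:Y)
\]
for every $c$.

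Finally I would average this bound over $c$ with weights $\Pr[C=c]$, obtaining
\[
I(X:Y\mid R,C)=\sum_c \Pr[C=c]\cdot I(X:Y\mid R,C=c)\leq \sum_c \Pr[C=c]\cdot I(X:Y)=I(X:Y),
\]
which is exactly the claim. The only step that requires a bit of care is the independence observation in (ii): one must verify that conditioning on $C=c$ does not change the joint law of $(X,Y)$, so that the previous lemma is literally applicable with distribution $\mu$; but this is immediate from the stated independence of $C$ and $\mu$. No genuine obstacle arises, because all of the combinatorial work — splitting rectangles and invoking the log sum inequality — has already been done in the preceding lemma, and \lemref{lem:decrease} is just its randomised-partition analogue obtained by taking expectation over the coin.
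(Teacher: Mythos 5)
Your proof is correct and is exactly the argument the paper gives: write $I(X:Y\mid R,C)=\mathbf{E}_c\, I(X:Y\mid R,C=c)$, apply the preceding (single-partition) lemma for each fixed coin value $c$, use the independence of $\mu$ and $C$ to see that each conditional mutual information equals $I(X:Y)$, and average. The paper's one-line proof compresses precisely these steps.
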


\begin{proof}
$I(X:Y|R,C)=E_c I(X:Y|R,C=c)\leq I(X,Y|C=c)=I(X:Y)$.\end{proof}

The next lemma follows from a calculation and shows that a distribution can decrease a joint probability compared to the product of marginal distributions only in the presence of mutual information.

\begin{lemma}
Let $X,Y$ be Boolean random variables with a joint distribution $\mu$ and marginal distributions $\mu_A,\mu_B$. If $\mu_A(X=1)\mu_B(Y=1)\geq2\mu(X=Y=1)$, then
$I^\mu(X:Y)\geq \mu_A(X=1)\mu_B(Y=1)/5$.
\end{lemma}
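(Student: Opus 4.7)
The plan is to coarsen the four-atom joint distribution onto a single Bernoulli event and then verify the resulting scalar bound by a short calculus. Set $p := \mu_A(X{=}1)$, $q := \mu_B(Y{=}1)$, $r := \mu(X{=}Y{=}1)$, and $s := pq$, so the hypothesis reads $r \le s/2$ and the goal becomes $I^\mu(X:Y) \ge s/5$. The first step is to apply the log-sum inequality (equivalently, \lemref{lem:lowerdisc} on $E = \{X{=}Y{=}1\}$ and on its complement) inside the identity $I^\mu(X:Y) = D(\mu \,\|\, \mu_A\times\mu_B)$; coarsening onto the indicator of $E$ produces the two Bernoullis $\mathrm{Bern}(r)$ and $\mathrm{Bern}(s)$, so
$$I^\mu(X:Y) \;\ge\; d(r\|s) \;:=\; r\log\tfrac{r}{s} + (1-r)\log\tfrac{1-r}{1-s}.$$

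Next, $r \mapsto d(r\|s)$ is convex with its unique zero at $r=s$, hence monotonically decreasing on $[0,s]$; since $0 \le r \le s/2$, this gives $d(r\|s) \ge d(s/2\|s)$, and it suffices to prove $d(s/2\|s) \ge s/5$ for every admissible $s$.

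Finally, I would compute directly. The first summand at $r = s/2$ equals $-s/2$; for the second, rewrite $\tfrac{1-s/2}{1-s} = 1 + \tfrac{s/2}{1-s}$ and apply the elementary inequality $\log(1+x) \ge x/((1+x)\ln 2)$, which after a line of algebra yields $(1-s/2)\log\tfrac{1-s/2}{1-s} \ge s/(2\ln 2)$. Combining,
$$d(s/2\|s) \;\ge\; \frac{s}{2}\!\left(\frac{1}{\ln 2} - 1\right) \;=\; \frac{s(1-\ln 2)}{2\ln 2} \;>\; \frac{s}{5},$$
the last inequality because $5 > 7\ln 2 \approx 4.85$. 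I do not foresee a real obstacle: the one design choice is the coarsening, and singling out the atom $\{X{=}Y{=}1\}$ named in the hypothesis is the natural move; everything else is a one-variable calculation, and the numerical slack shows that the constant $5$ in the statement is essentially the smallest integer the method allows.
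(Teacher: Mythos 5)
The paper does not actually present a proof of this lemma — it only remarks that it ``follows from a calculation'' — so there is no argument of the authors' to compare against. Your derivation does supply that calculation, and it is correct: coarsening $D(\mu\,||\,\mu_A\times\mu_B)$ onto the indicator of $E=\{X=Y=1\}$ (a two-fold application of the log-sum inequality, i.e.\ data processing) gives $I^\mu(X:Y)\ge d(r\,||\,s)$ with $r=\mu(X=Y=1)$, $s=\mu_A(X=1)\mu_B(Y=1)$; since $r\mapsto d(r\,||\,s)$ is strictly convex with minimum $0$ at $r=s$, the hypothesis $r\le s/2$ gives $d(r\,||\,s)\ge d(s/2\,||\,s)$; and the bound $\ln(1+x)\ge x/(1+x)$ applied to $x=\tfrac{s/2}{1-s}$ yields $(1-s/2)\log\tfrac{1-s/2}{1-s}\ge\tfrac{s}{2\ln 2}$, so $d(s/2\,||\,s)\ge\tfrac{s}{2}\left(\tfrac{1}{\ln 2}-1\right)>\tfrac{s}{5}$ because $7\ln 2<5$. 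Two small remarks: the degenerate endpoint $s=1$ is automatically excluded because $\mu_A(X=1)=\mu_B(Y=1)=1$ would force $r=1>s/2$, so the binary divergence is always well defined here; and Lemma~\ref{lem:lowerdisc} carries an extraneous $\max\{-1,\cdot\}$, so it is cleaner (as you also note) to invoke the log-sum inequality directly on $E$ and $E^c$ rather than routing through that lemma.
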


Finally, we show that this is true for any product distribution, not just the product of marginals.

\begin{lemma}\label{lem:inf}
Let $X,Y$ be Boolean random variables with a joint distribution $\mu$ (and set $\rho=\mu_A\times\mu_B$), and $\sigma$ any product distribution. If $\sigma_A(X=1)\sigma_B(Y=1)
\geq 4\mu(X=Y=1)$ then $ D(\mu||\sigma)\geq\sigma(X=Y=1)/16$.
\end{lemma}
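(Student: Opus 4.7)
The plan is to lower-bound $D(\mu||\sigma)$ by the binary KL divergence between the Bernoulli distributions induced by the indicator of $E=\{X=Y=1\}$, and then prove a clean one-variable inequality on this binary divergence.

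Write $p=\mu(X=Y=1)$ and $q=\sigma(X=Y=1)=\sigma_A(X=1)\cdot\sigma_B(Y=1)$, so the hypothesis reads $p\leq q/4$. The log-sum inequality applied to the partition $\{E,\bar E\}$ (equivalently, the data processing inequality under the indicator of $E$) gives
\[
D(\mu||\sigma) \;\geq\; p\log(p/q)+(1-p)\log((1-p)/(1-q)),
\]
the right-hand side being the KL divergence between the Bernoulli distributions with parameters $p$ and $q$, which I denote $D_{\textup{bin}}(p,q)$. Since $D_{\textup{bin}}(\cdot,q)$ is convex with unique minimum at $p=q$, it is decreasing on $[0,q]$; hence over $p\in[0,q/4]$ the minimum is attained at $p=q/4$, and it suffices to show $D_{\textup{bin}}(q/4,q)\geq q/16$ for all $q\in(0,1)$.

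For this one-variable inequality I would split the range of $q$. For $q$ not too small, Pinsker's inequality yields $D_{\textup{bin}}(q/4,q) \geq 2(3q/4)^2/\ln 2 = 9q^2/(8\ln 2)$, which beats $q/16$ once $q\gtrsim \ln 2/18$. For the complementary small-$q$ regime I would expand
\[
D_{\textup{bin}}(q/4,q) \;=\; -q/2 \;+\; (1-q/4)\log\!\left(1+\frac{3q/4}{1-q}\right),
\]
and use $\ln(1+x)\geq x-x^2/2$ for $x\geq 0$ to obtain a leading-order bound of $q\bigl(3/(4\ln 2)-1/2\bigr)\approx 0.58\,q$ minus an $O(q^2)$ correction; this comfortably exceeds $q/16$ in the small-$q$ range and overlaps with the Pinsker regime.

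The main obstacle is the careful bookkeeping in verifying $D_{\textup{bin}}(q/4,q)\geq q/16$ uniformly on $(0,1)$: the two regimes have to overlap, and the Taylor remainder must be controlled. Fortunately the target constant $1/16$ is quite loose (the true behaviour is $\approx 0.58\,q$ near $0$ and blows up as $q\to 1$), so no sharp estimates are needed and the case analysis goes through smoothly.
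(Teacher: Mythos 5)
Your proof is correct but takes a genuinely different route from the paper's. The paper splits into two cases according to whether $\rho(X{=}Y{=}1)$ (with $\rho=\mu_A\times\mu_B$) is at least or at most $\sigma(X{=}Y{=}1)/2$. In the first case it invokes the preceding (unproved) lemma on mutual information together with the Pythagorean identity $D(\mu\|\sigma)=D(\mu\|\rho)+D(\rho\|\sigma)\geq I^\mu(X:Y)$; in the second it drops to $D(\rho\|\sigma)=D(\mu_A\|\sigma_A)+D(\mu_B\|\sigma_B)$ and argues on one marginal, whose Bernoulli parameter must differ from $\sigma$'s by a $\sqrt2$ factor. You bypass the detour through $\rho$ and the mutual-information lemma entirely: data processing under the indicator of $E=\{X{=}Y{=}1\}$ reduces the claim immediately to the binary inequality $D_{\mathrm{bin}}(p,q)\geq q/16$ for $p\leq q/4$, and monotonicity of $D_{\mathrm{bin}}(\cdot,q)$ on $[0,q]$ collapses this to the single-variable bound $D_{\mathrm{bin}}(q/4,q)\geq q/16$, for which the Pinsker / Taylor split you outline does work (the two regimes overlap around $q\approx\ln 2/18$ and the leading coefficient $\approx 0.58$ leaves a large margin over $1/16$). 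Your argument is more self-contained --- it does not rely on the paper's Lemma~4 --- and it isolates where the constants really come from, at the cost of a short one-variable calculus verification.
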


\begin{proof}
If $\rho(X=Y=1)\geq \sigma(X=Y=1)/2$, then by the above lemma $D(\mu||\sigma)\geq D(\mu||\rho)=I^\mu(X:Y)\geq \rho(X=Y=1)/5\geq\sigma(X=Y=1)/10$, because $\sigma$ is a product distribution and the relative entropy of $\mu$ and a product distribution is minimal for $\rho$.
If $\rho(X=Y=1)\leq \sigma(X=Y=1)/2$, then we can bound $D(\mu||\sigma)\geq D(\rho||\sigma)=D(\mu_A||\sigma_A)+D(\mu_B||\sigma_B)$. Assume that $\alpha=\mu_A(X=1)\leq\beta/\sqrt{2}=\sigma_A(X=1)/\sqrt{2}$. Then $(1-\alpha)\log((1-\alpha)/(1-\beta))+\alpha\log(\alpha/\beta)\geq\beta/16$. Hence in this case $D(\rho||\sigma)\geq D(\mu_A||\sigma_A)\geq\beta/16=\sigma_A(X=1)/16\geq\sigma_A(X=1)\sigma_B(Y=1)/16$. Other cases follow by symmetry.

\end{proof}

\subsection{Communication Complexity}

We assume familiarity with classical and quantum communication complexity. For the former consult \cite{kushilevitz&nisan:cc}, the latter is surveyed in \cite{deWolf:survey}. We concentrate on distributional complexity, which we define here.
\begin{definition}
The distributional complexity $D^\mu_\epsilon(f)$ is the minimal worst case communication cost of any deterministic protocol that computes $f$ with error $\epsilon$ under $\mu$. Similarly we define $R^\mu_\epsilon(f)$ for randomised public coin protocols and $Q^\mu_\epsilon(f)$ for quantum protocols (we consider quantum protocols with shared entanglement, but do not use the entanglement in our protocols).
When we drop the error $\epsilon$ from the notation, we set $\epsilon=1/3$.  When we drop the superscript we mean the ordinary, worst-case communication complexity.
\end{definition}

We observe that $R^\mu_\epsilon(f)=D^\mu_\epsilon(f)$ for all $f,\mu,\epsilon$, because one can fix the public coin randomness without increasing the error. Hence, we adopt the $R$-notation, and use randomness in upper bounds and deterministic protocols in lower bounds. Note that $Q^\mu_\epsilon(f)$ can  be smaller than $R^\mu_\epsilon(f)$, for instance for Disjointness under the hard distribution exhibited by Razborov \cite{razborov:disj}, where $R^\mu(DISJ)=\Theta(n)$, since the quantum complexity of DISJ is at most $O(\sqrt n)$ \cite{aaronson&ambainis:searchj}.

We consider functions $f:\{0,1\}^n\times\{0,1\}^n\to\{0,1\}$.
\begin{definition}
Define by $D(k)$ the set of distributions on the inputs that have $I(X:Y)\leq k$.

We define $R^{I\leq k}_\epsilon(f)=\max_{\mu\in D(k)} R^\mu_\epsilon(f)$ and use an analogous definition for the quantum case.
\end{definition}

Clearly $R(f)=R^{I\leq n}(f)$ and $R^{I=0}(f)$ is the complexity under the hardest product distribution.

\begin{definition}
One-way protocols allow only a single message from Alice to Bob, who produces the output. We indicate this model by a superscript, like $R^{A\to B, I\leq k}(f)$.
\end{definition}

Finally, we note the following fingerprinting technique \cite{kushilevitz&nisan:cc}.

\begin{fact}\label{fac:finger}
There is a public coin one-way protocol that checks equality of strings (of any length) with error $1/2^k$ and communication $k$.
\end{fact}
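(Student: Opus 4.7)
\textbf{Proof plan for Fact~\ref{fac:finger}.}
The plan is the standard ``linear fingerprint'' construction. Alice and Bob view their strings $x,y$ as vectors in $\{0,1\}^N$ for some length $N$ (the fact is claimed for strings of any length, and indeed the length will not enter the bound at all). Using the public coin they sample $k$ independent uniformly random vectors $r_1, \dots, r_k \in \{0,1\}^N$. Alice sends to Bob the $k$ bits $b_i = \langle r_i, x \rangle \bmod 2$ for $i = 1,\dots,k$. Bob outputs \emph{equal} iff $b_i = \langle r_i, y \rangle \bmod 2$ for every $i$.

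The analysis has two cases. If $x = y$, then trivially every $b_i$ equals $\langle r_i, y \rangle \bmod 2$, so the protocol is always correct. If $x \neq y$, then for each fixed $i$ the bit $\langle r_i, x - y\rangle \bmod 2$ is uniformly distributed in $\{0,1\}$, because $r_i$ is uniform and $x - y$ has at least one nonzero coordinate (an inner product with a uniform vector over $\mathbb{F}_2$ with a fixed nonzero vector is a fair coin flip). Independence of the $r_i$'s then gives that all $k$ checks coincidentally pass with probability exactly $2^{-k}$, so Bob errs with probability at most $1/2^k$.

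The communication is $k$ bits (Alice's message consists of the $k$ fingerprints $b_1,\dots,b_k$), the shared randomness is not counted, and the protocol is one-way from Alice to Bob by construction. There is no real obstacle here: the only subtlety is the remark that the bound is independent of the string length $N$, which is immediate because the fingerprint bits $b_i$ are each just single bits regardless of $N$. Equivalently, one can view $\{\langle r,\cdot\rangle : r\in\{0,1\}^N\}$ as a family of pairwise-independent hash functions into $\{0,1\}$, and invoke the standard fact that $k$ independent such hashes distinguish $x \neq y$ with error $2^{-k}$.
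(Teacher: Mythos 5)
Your proof is correct and is the standard linear-fingerprinting argument; the paper itself does not prove this fact but merely cites Kushilevitz and Nisan, where this is exactly the argument given.
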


\section{Randomised Complexity of Disjointness}
\subsection{Upper Bound}

In this section we prove the upper bound for DISJ under bounded information distributions.

First we consider the case of 0 mutual information, for which we show an upper bound of $O(\sqrt n\log(1/\epsilon))$. Let $\mu$ be a product distribution on the inputs to DISJ. Babai et al.~\cite{bfs:classes} already show a protocol of cost $O(\sqrt n\log n\log(1/\epsilon))$ [they do not state the dependence on $\epsilon$, which is however easy to derive from their proof]. Note that one can combine their protocol for product distributions with the Substate Theorem (Fact \ref{fac:sub}) to get a bound of $O(\sqrt{n}(k+1)\log n/\epsilon)$ on the distributional complexity under distributions with information $k$: every distribution with information $k$ approximately sits with probability $1/2^{4k/\epsilon}$ inside the product of its marginal distributions, hence it is enough to use a product distribution protocol with very small error. This bound is worse in the dependence on $k$ than what is proved below.

\begin{theorem}\label{thm:prot1}
$R^{I= 0}_\epsilon(DISJ)\leq O(\sqrt{n}\cdot\log(1/\epsilon))$.
\end{theorem}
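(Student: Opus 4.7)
My plan is to construct a public-coin randomised protocol of worst-case cost $O(\sqrt n\log(1/\epsilon))$ that computes DISJ with error at most $\epsilon$ under any product distribution $\mu=\mu_A\times\mu_B$. Write $p_i=\Pr_{\mu_A}[x_i=1]$, $q_i=\Pr_{\mu_B}[y_i=1]$ and $\lambda=\sum_i p_i q_i$; both players can compute $\lambda$ from the public $\mu$. First I dispose of two regimes using zero communication: if $\lambda\le\epsilon$, outputting ``disjoint'' has error $\le\lambda\le\epsilon$ by the union bound, and if $\lambda\ge\ln(1/\epsilon)$, then $\prod_i(1-p_i q_i)\le e^{-\lambda}\le\epsilon$, so outputting ``not disjoint'' works. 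Thus I may restrict to $\lambda=\Theta(\log(1/\epsilon))$.

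In this main regime I use a two-phase, one-sided protocol. In the pruning phase Alice and Bob maintain candidate sets $A_r\subseteq\{i:x_i=1\}$ and $B_r\subseteq\{i:y_i=1\}$ inside a shrinking universe $U_r\subseteq[n]$, initialised to their full values. In round $r$ they exchange the current $|A_{r-1}|,|B_{r-1}|$, draw a pairwise-independent hash $h_r:U_{r-1}\to[m_r]$ from shared randomness with $m_r=C\cdot\max(|A_{r-1}|,|B_{r-1}|)$ for a constant $C$, each broadcasts the $m_r$-bit characteristic vector of the image of their set under $h_r$, and they prune each of $A_r,B_r,U_r$ to the preimage of the bitwise AND of these vectors. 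After $R=\Theta(\log(1/\epsilon))$ rounds (or earlier if some $|A_r|=0$, or the sets become tiny), a second phase sends the remaining $A_r$ explicitly by indices. Elements of the true intersection trivially survive every round (their hash values appear on both sides), so the protocol is one-sided: it can err only when $x\cap y=\emptyset$.

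For the disjoint case, pairwise independence of $h_r$ implies that each $i\in A_{r-1}$ survives round $r$ with probability at most $|B_{r-1}|/m_r\le 1/C$. Hence $E|A_r|\le|A_{r-1}|/C$, and by Markov $\Pr[|A_R|\ge 1]\le|A_0|/C^R$, which I drive below $\epsilon$ by choosing $R=\Theta(\log(|A_0|/\epsilon))$. The same geometric decay telescopes $\sum_r m_r$ to $O(|A_0|+|B_0|)$ in expectation; to obtain the worst-case bound I cap the protocol once the cumulative communication exceeds $T=C'\sqrt n\log(1/\epsilon)$ and declare ``not disjoint'' on overrun, which adds at most $\epsilon$ to the error via Markov on the cumulative round sizes.

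The main obstacle is controlling $|A_0|$ and $|B_0|$, which under a general product distribution may be as large as $n$. My plan is to exploit $\lambda=O(\log(1/\epsilon))$ through a case split: the ``heavy'' coordinates with anomalously large $p_i q_i$ (of which there are only $O(\sqrt{n\log(1/\epsilon)})$) are handled by a short direct exchange, after which a Cauchy--Schwarz argument forces $\min(|A_0|,|B_0|)=O(\sqrt{n\log(1/\epsilon)})$ in the remaining balanced regime with high probability. The smaller side initiates the pruning, and combined with the final explicit phase this yields the claimed $O(\sqrt n\log(1/\epsilon))$ bound.
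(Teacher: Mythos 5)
Your outline — case-split on $\lambda=\sum_i p_iq_i$, handle the extremes for free, then prune — is in the right spirit, and the geometric-decay analysis of the hashing phase is fine given small initial sets. But the crucial claim that after deleting the heavy coordinates a Cauchy--Schwarz argument ``forces $\min(|A_0|,|B_0|)=O(\sqrt{n\log(1/\epsilon)})$'' is false, and this is the load-bearing step. Concretely, with $K=\ln(1/\epsilon)$ set $p_i=1,\ q_i=K/n$ for $i\le n/2$ and $p_i=K/n,\ q_i=1$ for $i>n/2$. Then $\lambda=K$, so you are in the middle regime; every coordinate has $p_iq_i=K/n\le\sqrt{K/n}$, so no coordinate is ``heavy'' for any reasonable threshold derived from $\lambda$ and $n$; yet $|x|\ge n/2$ and $|y|\ge n/2$ deterministically. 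Small $\lambda$ controls the expected intersection size $E|x\cap y|$, not $\min(|x|,|y|)$: the quantity $\min(a,b)$ with $a=\sum p_i$, $b=\sum q_i$ is not bounded by any function of $\lambda$ and $n$, because the inequality $\min(\sum p_i,\sum q_i)\ge\sum\min(p_i,q_i)$ goes the wrong way. Without such a bound, the pruning phase must hash sets of size $\Omega(n)$, and the total communication becomes $\Omega(n)$, not $O(\sqrt n\log(1/\epsilon))$.

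The paper avoids this altogether by never trying to make the players' sets small. Instead, it maintains a shrinking \emph{universe} $U_i$. In each round, after both players announce whether their set (restricted to $U_i$) has size $\ge\sqrt n$, the player whose set is still large has her opponent sample fresh inputs $y'$ from the opponent's own marginal distribution conditioned on the current rectangle and on $|y'\cap U_i|\ge\sqrt n$. Under a product distribution this sampling is free via the public coin. Each sample is disjoint from $x$ with probability at least $\epsilon$ (otherwise the protocol may safely abort and answer ``not disjoint''), so a disjoint $y'$ of size $\ge\sqrt n$ is found after roughly $1/\epsilon$ trials, costing $O(\log(1/\epsilon))$ bits expected to index it. Crucially, $x\cap y\subseteq x\subseteq U_i\setminus y'$, so cutting $y'$ out of $U_i$ is always safe, and the universe shrinks by at least $\sqrt n$ each round. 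After at most $\sqrt n$ rounds the universe itself has size $\le\sqrt n$, so both players' sets restricted to it are small, and the Hastad--Wigderson small-set protocol finishes in $O(\sqrt n\log(1/\epsilon))$ bits. In your counterexample, the universe shrinks rapidly by removing coordinates $\{n/2+1,\dots,n\}$ via sampled $y'$'s, exactly the part where Alice's set is mostly irrelevant. If you want to salvage a hashing-based approach, you would have to identify and prune only coordinates where both $p_i$ and $q_i$ are nontrivial (the quantum protocol in Section 4.2 does something in this spirit), but the argument as you state it does not close.
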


The proof is in the appendix. The main issue here is to achieve the small error dependence. The protocol has a 2-phase structure, where in phase 1, assuming that Bob holds a large set and that the probability that $x\cap y'=\emptyset$ is large, random $y'$ are drawn using the public coin and, if disjoint from $x$, removed from the universe (initially $\{1,\ldots, n\}$). After doing this sufficiently many times, the universe becomes small, and in phase 2 we use the small set disjointness protocol due to Hastad and Wigderson \cite{hastad:disj}.

Now we turn to distributions with more information. The protocol has the same structure, but we need to sample from a distribution of $y'$ that is not independent of $x$, which takes communication. The protocol also does not have the same error dependence, which we show is unavoidable later.
Due to this we may just analyse expected communication, and show that the worst case communication cannot be more than $1/\epsilon$ the established bound by appealing to the Markov bound.

\begin{theorem}\label{thm:prot2}
$R^{I\leq k}_\epsilon(DISJ)\leq O(\sqrt{n(k+1)}/\epsilon^2)$.
\end{theorem}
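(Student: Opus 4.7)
The plan is to adapt the two-phase protocol of Theorem~\ref{thm:prot1} to distributions of information $k$, with the key modification that the shared samples in phase~1 are drawn from the conditional distribution $\mu(\cdot\mid x)$ via the rejection-sampling procedure of Fact~\ref{fac:sample}, rather than for free from a public coin. The overall structure is preserved: phase~1 iteratively samples witnesses $y'$ and, whenever $y' \cap x = \emptyset$, removes the elements of $y'$ from the universe; once the universe (or, equivalently, Bob's relevant set) has shrunk sufficiently, phase~2 finishes by invoking the Hastad--Wigderson small-set disjointness protocol \cite{hastad:disj}.

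For a single round of phase~1, Fact~\ref{fac:sample} with the pair $\mu(\cdot\mid x), \mu_Y$ (the latter known to both players from $\mu$) costs $D(\mu(\cdot\mid x)\,\|\,\mu_Y)+O(\log)$ bits in expectation over Alice's message. Taking a further expectation over $x \sim \mu_X$ yields $I^\mu(X:Y)+O(\log)\leq k+O(\log)$ per round, by the definition of mutual information as the expected relative entropy to the marginal. Choosing the number of rounds $T$ so that the two phases balance---i.e.\ so that $T(k+1)$ matches the phase~2 cost on the residual universe---produces expected total communication $O(\sqrt{n(k+1)})$, mirroring the $T\asymp\sqrt n$ choice of Theorem~\ref{thm:prot1} up to the rescaling $T\asymp\sqrt{n/(k+1)}$. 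The two extra factors of $1/\epsilon$ are then harvested as follows: first, one argues correctness only on a $\mu$-typical event of mass $1-\epsilon$, and conditioning on this event inflates the relevant relative entropies (and hence the expected per-round cost of Fact~\ref{fac:sample}) by a factor of $1/\epsilon$ via Lemma~\ref{lem:rest}; second, since Fact~\ref{fac:sample} controls only \emph{expected} message length, truncating the protocol at $1/\epsilon$ times the expected total length and invoking Markov's inequality converts the expected bound into the worst-case $O(\sqrt{n(k+1)}/\epsilon^2)$, at the cost of an additional $\epsilon$ error.

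The main technical obstacle is to verify correctness of the pruning step under a correlated $\mu$. In the product case it sufficed that $y'\sim\mu_Y$ looks like an independent copy of $y$; under correlation, $y$ and a fresh $y'\sim\mu(\cdot\mid x)$ are drawn from the same conditional but can be strongly dependent through $x$. One must therefore show that, conditional on $y'\cap x=\emptyset$, the elements removed in a typical round are (with high $\mu$-probability) not in Bob's actual $y$, so that the true disjointness answer is preserved up to $\epsilon$. The strategy is to restrict to the $(1-\epsilon)$-typical sub-event, use Lemma~\ref{lem:rest} to control the distortion of $\mu$ caused by this restriction, and then invoke Lemma~\ref{lem:inf} to compare joint probabilities with those under the product of marginals---on which the analysis of Theorem~\ref{thm:prot1} already delivers the required pruning guarantee. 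This reduction to the product-distribution analysis, paid for in information by the $(k+1)/\epsilon$ blow-up of the restricted relative entropy, is the step that dictates the final bound and explains why the error-dependence degrades from $\log(1/\epsilon)$ (product case) to $1/\epsilon^2$.
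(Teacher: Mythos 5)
Your high-level plan --- adapt the two-phase protocol, use Fact~\ref{fac:sample} for the phase-1 samples, finish with Hastad--Wigderson, and account for two factors of $1/\epsilon$ via a conditioning blow-up and a Markov truncation --- is in the right spirit. But your third paragraph identifies a ``main technical obstacle'' that does not exist, and your accounting of where the first $1/\epsilon$ comes from is wrong.

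The pruning step is correct \emph{deterministically}, not merely on a typical event. In the paper's protocol Alice samples $y'_i$ from the distribution conditioned on $y' \cap x = \emptyset$ (this is what the notation $\tilde\mu$ denotes). Removing $y'_i$ from the universe therefore never touches anything in $x$, hence never touches anything in $x \cap y$; the answer to disjointness is preserved with probability~$1$, regardless of how correlated $\mu$ is. You do not need to argue that removed elements avoid Bob's actual $y$ --- they only need to avoid $x \cap y$, and they avoid all of $x$ by construction. So Lemma~\ref{lem:inf} plays no role here (it is used in the quantum lower-bound machinery, not in this theorem), and there is no ``reduction to the product-distribution analysis'' to carry out. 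Relatedly, the first $1/\epsilon$ factor is not from conditioning on a $(1-\epsilon)$-mass typical event, which by Lemma~\ref{lem:rest} would only inflate the relative entropy by roughly $1/(1-\epsilon) = 1 + O(\epsilon)$. It comes from conditioning $\mu^x_{t_i,Y}$ on the event $x\cap y' = \emptyset$: the protocol first checks this event has probability $\geq \epsilon/2$ (otherwise outputs $0$, contributing at most $\epsilon/2$ total error), and then Lemma~\ref{lem:rest} with $\alpha = \epsilon/2$ gives the $O((k+1)/\epsilon)$ per-round sampling cost. You are also missing the bookkeeping that controls the information \emph{across} rounds: the distribution each round is conditioned on the growing transcript (rectangle plus coin), and Lemma~\ref{lem:decrease} is what guarantees that the expected conditional information stays $\leq k$, so that the per-round rejection-sampling cost remains $O((k+1)/\epsilon)$ in expectation and the $O(\sqrt{n/(k+1)})$ rounds give a total expected cost of $O(\sqrt{n(k+1)}/\epsilon)$. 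The second $1/\epsilon$ from Markov then matches the paper.
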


The proof is in the appendix. The main idea is to follow the 2-phase approach, and shrink the universe until is has size $S=\sqrt{n(k+1)}$. At this point the Hastad-Wigderson small set Disjointness protocol \cite{hastad:disj} takes over. To shrink the universe we need to sample inputs $y'$ from the distribution conditioned on $x$, and on being disjoint from $x$. This is achieved by using the rejection sampling protocol of Fact \ref{fac:sample}. We need to bound the information increase, but on average the first phase of the protocol removes $S$ elements from the universe using communication $O((k+1)/\epsilon)$. There are at most $n/S$ such iterations in phase 1, hence the expected communication is at most $O(n/S\cdot (k+1)/\epsilon)$. Another factor of $1/\epsilon$ is lost to turn this into a worst case bound by appealing to the Markov bound.

 In the next section we will also show a lower bound of $\Omega(\sqrt{n/\epsilon})$, so the error dependence cannot be made logarithmic, in contrast to the the 0 information case.

 One more issue we would like to consider is the number of rounds used. The above protocol can easily use a large number of rounds, and it is not immediately clear whether this is necessary. It is well known that the complexity of DISJ under product distributions for {\em one-way protocols} is $\Theta(n)$ \cite{knr:rand1round}.  We have the following modification that saves most of the interaction.

\begin{theorem}
\begin{enumerate}
\item The complexity of DISJ under distributions with information at most $k$ for protocols with 2 rounds is at most
$O(\sqrt{n(k+1)}\log n/\epsilon^2)$.
\item The complexity of DISJ under distributions with information at most $k$ for $O(\log^* n)$ rounds is at most $O(\sqrt{n(k+1)}/\epsilon^2)$.
\item In the case of 0 mutual information, the error dependence drops to a factor of $\log(1/\epsilon)$.
\end{enumerate}
\end{theorem}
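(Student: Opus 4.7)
The plan is to modify the two-phase protocol of Theorem \ref{thm:prot2}, replacing the sequential Phase 1 by a batched version in which many rejection-sampling subprotocols run in parallel. The key enabling observation is that the rejection-sampling protocol of Fact \ref{fac:sample} finishes in a single Alice-to-Bob message, so an arbitrary number of samples can be bundled into one round by concatenating the corresponding messages; once Bob decodes, both players share the same samples $y'_1,\ldots,y'_m$ drawn from $\mu(Y\mid X=x,\,X\cap Y=\emptyset)$ and compute in lockstep the residual universe $U=[n]\setminus\bigcup_i y'_i$.

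For part 1 I would use exactly two rounds. In round~1 Alice sends the concatenated transcripts for $m=\tilde\Theta(\sqrt{n/(k+1)}/\epsilon)$ samples. In round~2 Bob, who has computed $y\cap U$, sends this set to Alice (or, if $|y\cap U|$ exceeds a threshold $S=\sqrt{n(k+1)}$, sends a single bit declaring ``not disjoint'', which is correct up to the target error by a coupon-collector argument). Alice then tests disjointness of $x$ with the received set. With $m$ chosen so that $|y\cap U|\le S$ with high probability, round~2 costs $O(S\log n)=O(\sqrt{n(k+1)}\log n)$ bits; round~1 has expected cost $O(m\cdot(k+1)/\epsilon)=\tilde O(\sqrt{n(k+1)}/\epsilon)$ by Fact \ref{fac:sample}; and a Markov argument converts expected to worst-case cost at a further $1/\epsilon$ factor, yielding the claimed $O(\sqrt{n(k+1)}\log n/\epsilon^2)$. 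The extra $\log n$ factor relative to the unbounded-round bound is the price of non-adaptive sampling, which needs coupon-collector coverage of $y$ by the batch.

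For part 2 I would schedule $O(\log^*n)$ adaptive rounds, where round $j$ shrinks the universe from size $n_{j-1}$ to $n_j\approx\log n_{j-1}$, so that $n_j$ drops doubly exponentially and reaches $S$ after $O(\log^*n)$ iterations. Adaptivity lets us condition on the outcome of round $j-1$ before choosing the samples of round $j$, which removes the coupon-collector overhead and the corresponding $\log n$ factor; Phase 2 is then handled by the original Hastad--Wigderson protocol at cost $O(S)$. The per-round sample counts form a rapidly decreasing sequence whose total expected communication is $\tilde O(\sqrt{n(k+1)}/\epsilon)$, and Markov gives the worst-case $O(\sqrt{n(k+1)}/\epsilon^2)$. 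For part 3 (zero mutual information), samples from the product marginal on $y$ cost nothing to share via the public coin, so each sample contributes only a single ``disjointness'' bit to the transcript; combining this with the $\log(1/\epsilon)$ error dependence of Phase 2 inherited from Theorem \ref{thm:prot1} gives the improved error dependence.

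The main technical obstacle will be justifying the coupon-collector step in part 1, since removing a sample's elements ought to update the conditional distribution of subsequent samples, yet in the non-adaptive batch these updates are not performed. I would couple the non-adaptive batch with an idealised adaptive variant, use Lemma \ref{lem:decrease} to argue that each round of rejection sampling does not increase the residual mutual information of the conditioning distribution beyond $O(k)$, and apply a union bound to absorb the total-variation slack between the two batches into the $\log n$ safety margin.
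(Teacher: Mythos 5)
Your high-level plan (``batch the rejection-sampling steps into a single Alice\,$\to$\,Bob message'') is in the right family, but both your analysis of part~1 and your entire approach to part~2 diverge from what actually works, and the gaps are real.

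For part~1, the paper does not use a coupon-collector argument at all, and this is precisely what makes the bound come out right. In the interactive protocol of Theorem~\ref{thm:prot2}, each sample $y'_i$ is drawn from $\tilde\mu^x_{t_i,Y}$, which is conditioned on the current transcript $t_i$; since $t_i$ encodes the event ``$\lvert y\cap U_i\rvert\ge S$'', every sample drawn from it satisfies $\lvert y'_i\cap U_i\rvert\ge S$, so deterministically $T=n/S=\sqrt{n/(k+1)}$ samples shrink the universe below $S$. In the 2-round variant Alice simply \emph{simulates} this transcript in her head (pretending Bob always answers ``large''), concatenates the $T$ rejection-sampling messages, and sends them; Bob, decoding sequentially, reports the first index $j$ at which $\lvert y\cap U_j\rvert\le S$ and sends $y\cap U_j$ in $O(\sqrt{n(k+1)}\log n)$ bits. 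This gives expected cost $T\cdot O((k+1)/\epsilon)+O(\sqrt{n(k+1)}\log n)=O(\sqrt{n(k+1)}\log n/\epsilon)$, and Markov supplies the remaining $1/\epsilon$. Your version instead samples from the \emph{unconditioned} distribution $\mu(Y\mid X=x,\,X\cap Y=\emptyset)$, which gives no lower bound on $\lvert y'_i\cap U_i\rvert$, and so you compensate by taking $m=\tilde\Theta(\sqrt{n/(k+1)}/\epsilon)$ samples and arguing coupon-collector coverage. That extra $1/\epsilon$ in $m$ already makes round 1 cost $\tilde\Theta(\sqrt{n(k+1)}/\epsilon^2)$ \emph{before} Markov, so your protocol lands at $\tilde O(\sqrt{n(k+1)}/\epsilon^3)$, strictly worse than claimed (your own arithmetic ``$O(m\cdot(k+1)/\epsilon)=\tilde O(\sqrt{n(k+1)}/\epsilon)$'' drops a factor of $1/\epsilon$). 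On top of that, the obstacle you flag yourself -- that non-adaptive samples are drawn from the wrong conditional -- is not something you can wave away by coupling and ``absorbing the slack into the $\log n$ safety margin''; the $\log n$ is not a slack buffer, it is the per-element cost of Bob naming his small set.

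For part~2 your plan is unrelated to the correct one. The $O(\log^*n)$ round count does not come from a doubly-exponential universe-shrinking schedule; phase~1 cannot shrink $U$ from $n_{j-1}$ to $\log n_{j-1}$ in one round, because each 1-round batch still removes only $O(n/S)\cdot S$ elements regardless of adaptivity. What the paper does is repeat the phase-1 trick once more in the reverse direction so that \emph{both} sets become small, and then hand off to the $O(\log^*n)$-round small-set disjointness protocol of Saglam and Tardos, which costs $O(\sqrt{n(k+1)}\log(1/\epsilon))$ and avoids the $\log n$ factor because it never needs to name the elements of the small sets explicitly. The disappearance of $\log n$ between parts~1 and~2 is therefore a phase-2 phenomenon (a better endgame protocol available with more rounds), not a phase-1 one; your explanation (``adaptivity removes the coupon-collector overhead'') mislocates it. For part~3 your intuition that public-coin sampling is free under product distributions is roughly right, though each successful draw costs $O(\log(1/\epsilon))$ bits rather than one bit; the paper simply appeals to the Theorem~\ref{thm:prot1} modifications, which already have $\log(1/\epsilon)$ dependence.
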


\begin{proof}
For the first item we observe that in phase 1 Alice can simply act as if Bob's set was large, and continue to let him discover $y'_i$'s that are disjoint with $x$ until $U_i$ is guaranteed to be small. This does not increase the bound on the communication.
After this Bob can tell Alice, in which `round' his set really became small, so that she can recover the proper universe $U_j$. He also sends her his set using $\sqrt{n(k+1)}\log n$ bits. Note that in this protocol only Alice learns the result.

For the second item we do as above, but when Bob's set is small also repeat the same in reverse until both sets are small. Saglam and Tardos \cite{tardos:disj} have a protocol that solves small set disjointness in phase 2 in $O(\log^* n)$ rounds with communication $O(\sqrt{n(k+1)}\log(1/\epsilon))$.

Finally, note that for product distributions we can use the same modifications to the protocol described in Theorem \ref{thm:prot1}.
\end{proof}

\subsection{Lower Bound}

In this section we prove that the protocol of the previous section is optimal (except regarding the exact dependence on $\epsilon$).

For the lower bound we employ a distribution, depending on $n$ and $k$, such that the mutual information of the two inputs according to the marginal distributions is at most $k$; we then prove an $\Omega(\sqrt{n(k+1)})$ lower-bound for the distributional complexity under this distribution. In what follows we consider $k=k(n)$ as being $\in o(n)$, since for $k=\Omega(n)$ the upper bound on the information is trivial and the lower bound on the communication is known.

Let $c=\frac{1}{\log e}$ and $m=c\sqrt{n(k+1)}$. Note that $m=o(n)$ as well. Now $\mu_{n,k}$ can be defined as the distribution obtained by mixing two distributions. $\nu_{n,k}$ is uniform on pairs of disjoint subsets of $\{1,\ldots, n\}$ of size $m$, and $\sigma_{n,k}$ is uniform on pairs of subsets of size $m$ with an intersection of size $1$. Then $\mu_{n,k}=(3/4)\cdot\nu_{n,k}+(1/4)\cdot \sigma_{n,k}$.
This is essentially the distribution used in the proof by Razborov \cite{razborov:disj}, but with smaller sets.

We show in the appendix that the information is bounded by $k$.
\begin{theorem}\label{thm:low}
For any sufficiently small $\epsilon>0$ we have that $D^{\mu_{n,k}}_\epsilon(DISJ)=\Omega(\sqrt{n(k+1)})$, and hence that $R^{I\leq k}_{\epsilon}(DISJ)=\Omega(\sqrt{n(k+1)})$.
\end{theorem}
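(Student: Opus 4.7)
The plan is to prove the lower bound by the rectangle/corruption method, adapting Razborov's classical argument for DISJ to sets of size $m = c\sqrt{n(k+1)}$ rather than $\Theta(n)$. Any deterministic protocol of cost $c'$ induces a partition of $\{0,1\}^n \times \{0,1\}^n$ into at most $2^{c'}$ rectangles, each labelled with the protocol's output. To achieve $\mu_{n,k}$-error at most $\epsilon$, the rectangles labelled $0$ must cover nearly all of the $\nu_{n,k}$-mass, while each such rectangle must be nearly monochromatic with respect to $\sigma_{n,k}$.

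First I would establish the \emph{corruption lemma}: there exist absolute constants $\alpha,\beta>0$ such that for every rectangle $R = A\times B$ with $\nu_{n,k}(R)\geq 2^{-\alpha m}$, one has $\sigma_{n,k}(R)\geq \beta\cdot\nu_{n,k}(R)$. This is the technical heart of the argument. To prove it, restrict attention to an ``almost balanced'' sub-rectangle $A'\times B'$ in which almost all elements of $[n]$ appear with frequency close to the average $m|A'|/n$ in $A'$ (and symmetrically for $B'$); a standard pruning argument shows that most of the $\nu_{n,k}$-mass of a large rectangle lives in such a near-regular piece, because deviations of the degree sequence cost $\nu_{n,k}$-probability exponential in $m$. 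Given near-regularity, the conditional probability that a uniform pair from $A'\times B'$ has intersection exactly one (as opposed to zero) is comparable to the global ratio $\sigma_{n,k}(R)/\nu_{n,k}(R)$ up to a constant, because intersections behave approximately like independent Bernoulli trials with parameter $m^2/n^2$ when $m=o(n)$.

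Once the corruption lemma is available, the lower bound follows in the standard way. Call a 0-labelled rectangle \emph{large} if $\nu_{n,k}(R)\geq 2^{-\alpha m}/(20\cdot 2^{c'})$; the small 0-rectangles contribute at most $1/20$ to the 0-mass, so large 0-rectangles still cover $\Omega(1)$ of $\nu_{n,k}$. By the corruption lemma the total $\sigma_{n,k}$-mass inside large 0-rectangles is $\Omega(\beta)$, hence the error under $\mu_{n,k}$ is $\Omega(\beta)$ unless $2^{c'}\geq 2^{\alpha m}/\mathrm{poly}$, forcing $c' = \Omega(m) = \Omega(\sqrt{n(k+1)})$ for $\epsilon$ below a suitable constant. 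Via the minimax theorem this upgrades to the claimed $R^{I\leq k}_\epsilon(DISJ)$ bound once we have checked that $\mu_{n,k}\in D(k)$.

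The main obstacle is the corruption lemma, where one must track constants carefully so that the final exponent is linear in $m$ with no logarithmic overhead. A secondary issue is the information bound: a direct computation shows $I^{\mu_{n,k}}(X:Y) \approx m^2/n = c^2(k+1)$, and the constant $c=1/\log e$ is chosen so that this is $\leq k$ for all $k\geq 0$ (the case $k=0$ degenerates to independent $m$-subsets of $[n]$ and recovers the Babai--Frankl--Simon bound $\Omega(\sqrt n)$). The scaling $m=o(n)$ actually makes the combinatorics a little gentler than in Razborov's original setting, since $\nu_{n,k}$ and $\sigma_{n,k}$ are close to independent-Bernoulli product distributions, but one must still control the tail of the degree sequence uniformly in $n$ and $k$.
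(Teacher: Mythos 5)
Your strategic outline agrees with the paper's: you reduce the theorem to a rectangle corruption bound for the pair of distributions $(\nu_{n,k},\sigma_{n,k})$ on size-$m$ subsets and then do the standard accounting over $0$-labelled rectangles, followed by the minimax reduction once $\mu_{n,k}\in D(k)$ is checked. Your ``corruption lemma'' is exactly Lemma~\ref{lem:1} of the paper, rephrased by absorbing the additive error term $2^{-\Omega(n^\gamma)}=2^{-\Omega(m)}$ into the threshold $\nu_{n,k}(R)\geq 2^{-\alpha m}$. So the framing is right.

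The gap is in the proof you propose for the corruption lemma, which is the entire content of the theorem. You suggest pruning $A\times B$ down to a ``near-regular'' sub-rectangle $A'\times B'$ where each element of $[n]$ appears with frequency close to $m/n$, and then concluding that the conditional ratio of one-intersection to zero-intersection pairs matches the global ratio $\approx m^2/n$ because ``intersections behave approximately like independent Bernoulli trials.'' This step is not justified and I do not believe it goes through as sketched. Marginal regularity of the degree sequence of $A'$ and of $B'$, \emph{separately}, does not control the joint distribution of $(x,y)$ inside the rectangle: after conditioning on $x\in A'$, on $y\in B'$ and on $x\cap y=\emptyset$, the distribution over which coordinates are ``available'' for a single intersection is strongly correlated with the pair $(x,y)$ and need not be close to the unconditioned one; the per-coordinate indicators are neither independent nor exchangeable in any way that follows from the degree profile alone. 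This is exactly the obstruction the paper points to explicitly in Section~3.2:\ once $m=o(n)$, Razborov's entropy ``counting'' argument (which is what your regularity heuristic silently mimics -- that is also an argument that typical conditionings look near-uniform) no longer has the slack to give up $\Theta(1)$ entropy per coordinate over a linear number of coordinates. The paper replaces that step with a genuinely combinatorial counting argument:\ average over the random partition $T=(T_1,T_2,\{i\})$, define ``bad'' partitions via $Row_1(T)<\tfrac16 Row_0(T)-2^{-\epsilon n^\gamma}$, and show at most a $1/5$ fraction of $T$ are bad by the Colouring Lemma plus sharp binomial/Stirling estimates. You would need an argument at that level of precision, and the ``standard pruning argument'' you invoke is not standard in this sublinear-$m$ regime; indeed you yourself flag ``control the tail of the degree sequence uniformly in $n$ and $k$'' as an open issue, and that is precisely where the proof has to live.

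A smaller point:\ your back-of-envelope $I^{\mu_{n,k}}(X:Y)\approx m^2/n=c^2(k+1)$ is missing a factor of $\log e$ (the information is in bits), and with either constant the inequality $I\le k$ fails for the smallest few values of $k$. This is harmless, since for $k=O(1)$ the claimed bound is just $\Omega(\sqrt{n})$ and follows from the Babai--Frankl--Simon product-distribution bound -- but it needs to be said, and the casual ``for all $k\geq 0$'' should not be left as is.
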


The proof is similar to that of the corresponding lower bound by Razborov \cite{razborov:disj} (for $k=\Theta(n)$).
A difficulty comes from the fact that Razborov's entropy ``counting'' argument no longer works in our case, because in that argument a linear number of terms have their entropy upper-bounded as $H\left(\frac{1}{2}\right)=1$. Since we still have to deal with a linear number of terms while having much less total entropy, we require a finer combinatorial counting argument instead.

Now we give a simple argument that shows that error dependence cannot be logarithmic in $1/\epsilon$.

\begin{theorem}\label{thm:error}
$R^{I\leq 1}_\epsilon(DISJ)=\Omega(\sqrt {n/\epsilon})$ for $\epsilon\geq\Omega(1/n)$.
\end{theorem}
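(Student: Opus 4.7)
The plan is to reduce the claim to the bound $R^{I\leq k}_{\epsilon_0}(DISJ)=\Omega(\sqrt{n(k+1)})$ from Theorem~\ref{thm:low}, applied with $k=\Theta(1/\epsilon)$, by embedding the Razborov-type hard distribution $\mu_{n,k}$ into a distribution of mutual information at most $1$. The embedding simply mixes $\mu_{n,k}$ with a point mass on a trivially disjoint pair $(0^n,0^n)$, using a small mixing weight $\alpha$: this weight shrinks the information by roughly a factor of $\alpha$ (plus an $H(\alpha)$ term), while simultaneously amplifying any protocol's error on the hard part by a factor of $1/\alpha$.

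Concretely, let $\epsilon_0$ be the small constant provided by Theorem~\ref{thm:low}. I would set $\alpha=\epsilon/\epsilon_0$, choose $k=\lfloor c/\epsilon\rfloor$ for a sufficiently small absolute constant $c$ (so that $H(\alpha)+\alpha k\leq 1$ throughout the relevant range of $\epsilon$), and define
\[
\mu'\;=\;\alpha\,\mu_{n,k}\;+\;(1-\alpha)\,\delta_{(0^n,0^n)}.
\]
First I would verify that $I^{\mu'}(X:Y)\leq 1$ via the standard mixing argument: introducing a $\mathrm{Bernoulli}(\alpha)$ indicator $B$ that chooses the component of $\mu'$ from which $(X,Y)$ is drawn, the chain rule gives
\[
I^{\mu'}(X:Y)\;\leq\;I(X:Y,B)\;=\;I(X:B)+I(X:Y\mid B)\;\leq\;H(\alpha)+\alpha\cdot I^{\mu_{n,k}}(X:Y)\;\leq\;H(\alpha)+\alpha k\;\leq\;1,
\]
where the point-mass component contributes zero to $I(X:Y\mid B)$.

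Second I would transfer a low-error protocol from $\mu'$ to $\mu_{n,k}$. For any deterministic $P$ with $\mathrm{err}_{\mu'}(P)\leq\epsilon$, the linearity $\mathrm{err}_{\mu'}(P)=\alpha\cdot\mathrm{err}_{\mu_{n,k}}(P)+(1-\alpha)\cdot\mathbf{1}[P(0^n,0^n)\neq 1]$ forces $P(0^n,0^n)=1$ (otherwise $\mathrm{err}_{\mu'}(P)\geq 1-\alpha>\epsilon$), whence $\mathrm{err}_{\mu_{n,k}}(P)\leq\epsilon/\alpha=\epsilon_0$. Theorem~\ref{thm:low} then yields communication $\Omega(\sqrt{n(k+1)})=\Omega(\sqrt{n/\epsilon})$, and since $\mu'\in D(1)$ this proves the claim.

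The main obstacle is the tightness of the information budget: one needs $\alpha k=\Theta(1)$ simultaneously with only a $\Theta(\epsilon)$ mixing weight on the hard part (in order to recover a small constant error there). The choice $\alpha=\Theta(\epsilon)$ and $k=\Theta(1/\epsilon)$ is essentially forced and just barely fits, exploiting $H(\alpha)=o(1)$ to accommodate $\alpha k=\Theta(1)$ inside the budget $I^{\mu'}(X:Y)\leq 1$. The auxiliary constraint $k\leq n$ required by Theorem~\ref{thm:low} then translates into $\epsilon\geq\Omega(1/n)$, matching the hypothesis of the theorem.
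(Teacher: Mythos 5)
Your proof is correct and takes essentially the same approach as the paper: mix the hard distribution $\mu_{n,k}$ (with $k=\Theta(1/\epsilon)$) with a low-information ``filler'' distribution at weight chosen so that a small error on the mixture forces constant error on $\mu_{n,k}$, then bound the information of the mixture by $H(\alpha)+\alpha k\leq1$ via the indicator chain-rule argument. The only cosmetic difference is that the paper uses a product distribution $\rho$ with half its weight on $1$-inputs as the filler, whereas you use a point mass at $(0^n,0^n)$; both choices work equally well.
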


\begin{proof}
Above we have described a distribution $\mu_{n,k}$ with information at most $k$ such that $\Omega(\sqrt{n(k+1)})$ communication is needed for some constant error $\delta$. We define $\tau_{n,k}$ to be $1/(2k)\cdot\mu_{n,k}+(1-1/(2k))\rho$, where $\rho$ is some product distribution for DISJ that puts weight 1/2 on 1-inputs.
Clearly, for error $\delta/{(4k)}$ the communication must be at least $\Omega(\sqrt{n(k+1)})$. Set $k=4\delta/\epsilon$ (note that $k\leq n$).

It remains to show that the information under $\tau$ is at most 1.
Let $E$ be an indicator random variable that indicates that $x,y$ have been chosen according to $\mu_{n,k}$. Then $I(X:Y)\leq I(XE:Y)=
I(E:Y)+I(X:Y|E)\leq H(E)+(1/2k)\cdot k\leq H(1/(2k))+1/2\leq 1$.

\end{proof}

\begin{corollary}\label{coro:boost}
The class of distributions with information $k$ with $1\leq k\leq n^{1-\Omega(1)}$ is not boost-able for randomised protocols.
\end{corollary}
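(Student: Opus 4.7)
The plan is to take $f = DISJ$ and combine the upper bound of Theorem \ref{thm:cldisj}(1) at constant error with the polynomial-in-$1/\epsilon$ lower bound of Theorem \ref{thm:error} to obtain a quantitative gap that cannot be closed by a merely logarithmic overhead in $\log(1/\epsilon)$.

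I would proceed by contradiction, assuming that the class $D(k)$ is boost-able. Applied to Disjointness this assumption gives a constant $C$ such that for every small $\epsilon$,
\[R^{I \leq k}_\epsilon(DISJ) \leq C \cdot R^{I \leq k}_{1/3}(DISJ) \cdot \log(1/\epsilon) = O\left(\sqrt{n(k+1)} \cdot \log(1/\epsilon)\right),\]
where Theorem \ref{thm:cldisj}(1) was used on the right. On the other hand, since $D(1) \subseteq D(k)$ for every $k \geq 1$, we have the monotonicity $R^{I \leq k}_\epsilon(DISJ) \geq R^{I \leq 1}_\epsilon(DISJ)$, and Theorem \ref{thm:error} supplies $R^{I \leq 1}_\epsilon(DISJ) = \Omega(\sqrt{n/\epsilon})$ whenever $\epsilon \geq \Omega(1/n)$. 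Comparing the upper and lower bounds reduces the hypothesis to $1/\epsilon = O((k+1) \log^2(1/\epsilon))$.

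To derive the contradiction, fix a constant $\delta > 0$ with $k+1 \leq n^{1-\delta}$, and set $\epsilon := n^{-(1-\delta/2)}$; this choice remains above $1/n$, so Theorem \ref{thm:error} applies. Substituting into the derived inequality gives $n^{1-\delta/2} = O(n^{1-\delta} \log^2 n)$, i.e., $n^{\delta/2} = O(\log^2 n)$, which fails for all sufficiently large $n$.

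The argument is essentially algebraic once the two theorems are in hand. The one point that merits verification is the monotonicity step, namely that the lower bound of Theorem \ref{thm:error} (stated for information bound $1$) transfers to any information bound $k \geq 1$ via the trivial class inclusion $D(1) \subseteq D(k)$; beyond this I do not anticipate any serious obstacle.
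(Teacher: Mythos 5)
Your proposal is correct and follows essentially the same route as the paper: contradict boost-ability by combining the upper bound $R^{I\leq k}_{1/3}(DISJ)=O(\sqrt{n(k+1)})$ with the $\Omega(\sqrt{n/\epsilon})$ lower bound of Theorem~\ref{thm:error}, transferred up via $D(1)\subseteq D(k)$, and then observe that for a suitable $\epsilon$ in the admissible range $[\Omega(1/n),1]$ the resulting constraint $1/\epsilon=O((k+1)\log^2(1/\epsilon))$ fails whenever $k\leq n^{1-\Omega(1)}$. Your version simply makes the algebra and the choice $\epsilon=n^{-(1-\delta/2)}$ explicit where the paper leaves the final step implicit.
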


\begin{proof}
Consider $k=1$. We have that $R^{I\leq 1}_{1/3}(DISJ)\leq O(\sqrt{n})$. If distributions with at most 1 bit information were boost-able, then
we would have $R^{I\leq 1}_\epsilon(DISJ)\leq O(\sqrt{n}\log(1/\epsilon))$. But the left hand side is at least $\Omega(\sqrt{n/\epsilon})$, which
puts a lower bound on $\epsilon$, whereas boost-ability should work for all $\epsilon$.

In the case of larger $k$ we use the same proof, to get that $\sqrt{\epsilon}\cdot\log(1/\epsilon)\geq\Omega(1/\sqrt{k+1})$, which remains a restriction on $\epsilon$ until
$k$ exceeds $ n^{1-\Omega(1)}$, and the assumption of Theorem \ref{thm:error} is violated.
\end{proof}

\section{Quantum Complexity of Disjointness}

\subsection{Upper Bound: First Attempt}

 Consider the two-phase approach from the previous section. The second phase `quantises' readily, if we do not care about log-factors: Simply use distributed quantum search by amplitude amplification to obtain a quadratic speedup in this part \cite{BuhrmanCleveWigderson98}. We mention here that the tight protocol for DISJ due to Aaronson and Ambainis \cite{aaronson&ambainis:searchj} does not seem to work well for the small set case, and so we do not know if the logarithmic factor is needed or not.

The problem is the first phase of the classical protocol, which seems impossible to quantise. Since phase 2 is now cheaper one can re-balance the costs of the two phases (details are left to the reader) and find a protocol with cost $\tilde{O}((n(k+1))^{1/3}$.

In the next section we will show that this bound is not optimal. We do note here, however, that the error dependence for the case $I(X:Y)=0$ is a factor of $O(-\log \epsilon)$ for the above, which will not be the case in the protocol we present next.

\ssect[ss_up]{Upper Bound: Almost Optimal Protocol}

We now describe a different approach that also works in the classical case, but loses a logarithmic factor and has worse error dependence for product distributions.
The approach we use identifies two moderately-sized blocks of ``interesting'' positions (i.e., the blocks are subsets of $[n]$), such that Alice can conduct a search efficiently on one block, and Bob on the other one.
The situation when the input sets intersect, but not on any interesting position, will be ``unlikely''.
This conforms to the intuition that if ``large'' $x$ and $y$ come from a product distribution that puts constant weight on 1-inputs, then there must be many ``semi-interesting'' or ``uninteresting'' positions -- i.e., such $i\in[n]$ that not both $i\in x$ and $i\in y$ is likely.

Let $\mu$ be a distribution on $\{0,1\}^n\times\{0,1\}^n$ with $I^\mu(X:Y)\leq k$. Denote by $E_i$ the event that $\sum_{1\leq j\leq i-1}X_jY_j=0$ -- i.e., $X$ and $Y$ are disjoint on $\{1,\ldots, i-1\}$.
Let $\mu_i$ be $\mu$, conditioned on $E_i$.
Let $q_i^x=Prob(Y_i=1|X=x, E_i)=Prob_{\mu_i}(Y_i=1|X=x)$ and $p_i^y=Prob(X_i=1|Y=y, E_i)=Prob_{\mu_i}(X_i=1|Y=y)$.\fn
{To the end of \sref{ss_up}, all the probabilities and the expectations are taken w.r.t.\ $\mu$, unless stated otherwise.}

{\bf The Protocol.}\\
Let $\tau=\dr{\eps^3}{\sq{(k+1)n}}$, $C_A=X\cap\set[q_i^X\ge\tau]i$ and $C_B=Y\cap\set[p_i^Y\ge\tau]i$.
Alice locally computes $C_A$ and, using Bob as an oracle, applies Grover's search to check whether $C_A\cap Y=\emptyset$ with error at most $\eps$ -- unless $\sz{C_A}\ge\fr{\log\dr1\eps}\tau$, in which case the protocol halts and declares ``$X\cap Y\neq\emptyset$''.
Bob does the same for $C_B$.
If an intersection has been found, the protocol declares ``$X\cap Y\neq\emptyset$''.
Otherwise, it is declared that ``$X\cap Y=\emptyset$''.

Intuitively, the protocol requires that each player searches among those positions, where he has `1' and the opponent is likely to have `1' as well.

The {\bf communication cost} of the protocol is
\m{\asO{\fr{\log^{\dr32}\fr1\eps\cdot\log n}{\sq\tau}}
 =\asO{\sqrt[4]{(k+1)n}\tm\log n\tm\lf(\fr{\log\dr1\eps}\eps\rt)^{\dr32}}.}

{\bf Error Analysis.}\\
The protocol can make a mistake in one of the following 3 cases: \bl{(a)} when $(C_A\cap Y)\cup(C_B\cap X)\neq\emptyset$ but this fact has not been detected\fn
{Note that $(C_A\cap Y)\cup(C_B\cap X) = \set[X_i=Y_i=1]{i\in C_A\cup C_B}$.};
\bl{(b)} when $\sz{C_A}\ge\dr1{\eps\tau}$ or $\sz{C_B}\ge\dr1{\eps\tau}$ but $X\cap Y=\emptyset$; \bl{(c)} when $(C_A\cap Y)\cup(C_B\cap X)=\emptyset$ but $X\cap Y\neq\emptyset$.
In the first case, the error can only happen if Grover's search fails -- with probability at most $\eps$ for each player (at most $2\eps$ in total); in the second case, the intersection was empty in spite of the fact that, conditioned on ``$X=x$'', the probability of this has been at most $(1-\tau)^{|C_A|}<\eps$, or similarly for ``$Y=y$'' and $|C_B|$ (or both) -- this adds less than $2\eps$ to the error.
So, the combined probability of the cases (a) and (b) is less than $4\eps$.

We will show that (c) occurs with probability \asO{\eps}.
For that we will introduce a classification of the positions $i\in[n]$, and we need a few more definitions to describe our classes.
By $\vec{x},\vec{y}$ we denote prefixes of strings $x,y$ of length $i-1$, where $i$ is usually clear from the context.
The random variable $\vec{X}$ is the prefix of length  $i-1$ of the random variable $X$ (Alice's inputs), and similarly for $Y$.

Denote $p_i^{\vec{x}}=Prob(X_i=1|E_i, \vec{X}=\vec{x})$, $q_i^{\vec{x}}=Prob(Y_i=1|E_i, \vec{X}=\vec{x})$, and similarly for $p_i^{\vec{y}}$ and $q_i^{\vec{y}}$.
Denote also $p_i'^{\vec{y}}=Prob(X_i=1|E_i,Y_i=1,\vec{Y}=\vec{y})$, and similarly for $p_i'^{\vec{x}}$, $q_i'^{\vec{x}}$ and $q_i'^{\vec{y}}$.
Denote by $r_i^{\vec{x}}=p_i^{\vec{x}}q_i'^{\vec{x}}=p_i'^{\vec{x}}q_i^{\vec{x}}$ the probability that $X_i=Y_i=1$ under the conditions $\vec{X}=\vec{x}$ and $E_i$, and similarly for other conditions (i.e., the super-script specifies the condition):\ say, $r_i$ is the probability that $X_i=Y_i=1$ conditioned on $E_i$, and so on.
Let $s_i=Prob(E_i)$, and like above, use super-scripts to specify conditions ($s_i^{\vec x}=Prob(E_i|\vec{X}=\vec{x})$, and so on).

We say that a position $i\in[n]$ is \e{bad for $x$} if $x_i=1$ and $q_i^{x}\leq\epsilon q_i'^{\vec{x}}$.
Informally, these are the positions where $x$ ``depresses'' the probability of intersection compared to $\vec x$.
Similarly define \pl[i] that are \e{bad for $y$} ($y_i=1$ and $p_i^y\leq\epsilon p_i'^{\vec y}$), and call $i$ \e{bad} if it is bad for $x$ \e{or} for $y$.

Say that $i$ is \e{lucky for $(\vec{x},\vec{y})$} if $p_i^{\vec{x}}\leq(k+1)p_i'^{\vec{y}}/\epsilon^3$.
Informally, these are the positions where the probability of the event ``$X_i=1$'', conditioned on $\vec{x}$ is not much higher than the probability of the same event, conditioned on $\vec{y}$ and (more importantly) on the event ``$Y_i=1$''.

Finally, we call every $i\in C_A\cup C_B$ \e{chosen}.
Recall that we are dealing with case (c) -- that is, we are analysing the probability that $X$ intersects $Y$ over the non-chosen coordinates.
There are 3 cases to consider:\ when the first intersection is on a bad position, when it is on
a non-chosen, non-bad, lucky position, and when it is on a non-chosen, non-lucky position.

Denote by $V_i$ the event that $i$ is bad for $x$, and by $W_i$ the event that $i$ is bad for $y$.
Set $\tilde{q}_i'^{\vec{x}}=Prob(Y_i=1 \wedge V_i|X_i=1,\vec{X}=\vec{x},E_i)$ and $\tilde{p}_i'^{\,\vec{y}}=Prob(X_i=1\wedge W_i|Y_i=1,\vec{Y}=\vec{y}, E_i)$.

For every $i\in[n]$, the probability that it is bad for $x$ and the first intersection is on $i$ is $p^{\vec{x}}_i \tilde{q}_i'^{\vec{x}} s_i^{\vec{x}}$.
Similarly, the probability that it is bad for $y$ and the first intersection is on it is $\tilde{p}_i'^{\vec{y}}q_i^{\vec{y}}s_i^{\vec{y}}$.

The following lemma shows that the first intersection cannot be on a bad position often.

\begin{lemma}\label{lem:tilde}
$\tilde{q}_i'^{\vec{x}}\leq\epsilon q_i'^{\vec{x}}$.
\end{lemma}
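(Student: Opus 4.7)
The plan is a straightforward conditional-expectation calculation, unpacking the definition of ``bad for $x$''. The key observation is that once we condition on the full value $X=x$, the event $V_i$ becomes deterministic (it is purely a statement about $x$ and the fixed prefix $\vec x$), so we can separate the ``$Y_i=1$'' part from the ``$V_i$'' part by integrating $Y_i$ out first.

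\textbf{Step 1.} Condition on $X$ inside the conditional expectation defining $\tilde{q}_i'^{\vec{x}}$. Writing $\mathbf{1}[V_i]$ for the indicator of $V_i$, the tower property gives
\[
\tilde{q}_i'^{\vec{x}}
 \;=\; \mathbb{E}\!\left[\, Prob(Y_i=1\mid X, E_i)\cdot \mathbf{1}[V_i]\;\middle|\;X_i=1,\,\vec{X}=\vec{x},\,E_i\right].
\]
This step is legitimate because $V_i$ is a function of $X$ only (it asks whether $x_i=1$ and $q_i^{x}\le\epsilon q_i'^{\vec x}$, both determined by $x$ together with the prefix, which is fixed by the conditioning).

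\textbf{Step 2.} By definition, $Prob(Y_i=1\mid X=x, E_i)=q_i^{x}$, and on the event $V_i$ we have the bound $q_i^{x}\le\epsilon q_i'^{\vec x}$ by the very definition of a position being bad for $x$. Plugging in,
\[
\tilde{q}_i'^{\vec{x}}
 \;=\; \mathbb{E}\!\left[q_i^{X}\cdot \mathbf{1}[V_i]\;\middle|\;X_i=1,\,\vec{X}=\vec{x},\,E_i\right]
 \;\le\; \epsilon\, q_i'^{\vec x}\cdot Prob(V_i\mid X_i=1,\vec{X}=\vec{x},E_i)
 \;\le\; \epsilon\, q_i'^{\vec x},
\]
which is the desired inequality.

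There is essentially no obstacle here: the whole content is in noticing that the conditioning fixes $\vec x$, so $q_i'^{\vec x}$ is a constant, and that $V_i$ being $X$-measurable lets us pull the a.s.\ bound $q_i^{X}\le\epsilon q_i'^{\vec x}$ out of the expectation. The analogous statement $\tilde{p}_i'^{\,\vec y}\le\epsilon p_i'^{\vec y}$ will follow by the symmetric argument with the roles of Alice and Bob swapped, and both bounds are what one needs to control the contribution of bad positions in case (c) of the error analysis.
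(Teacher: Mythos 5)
Your proof is correct and essentially the same as the paper's: the paper writes $\tilde{q}_i'^{\vec{x}}$ out explicitly as a sum over full inputs $x$ consistent with $X_i=1$, $\vec X=\vec x$, and $Bad(x,i)$, then uses the badness bound $q_i'^{x}\leq\epsilon q_i'^{\vec{x}}$ termwise before collapsing the sum back to a probability bounded by $1$ — exactly the same decomposition-by-$X$ and constant-extraction idea that you phrase via the tower property and the $X$-measurability of $V_i$.
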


\begin{proof}
Denote by $V_i(x)$ the property that $x$ is A-bad for $i$ and by $E_i(x,y)$ the property that $x,y$ are disjoint on $\{1,\ldots,i-1\}$.
\begin{eqnarray*}
\tilde{q}_i'^{\vec{x}}&=&\frac{Prob(V_i\wedge Y_i=1\wedge X_i=1\wedge\vec{X}=\vec{x}\wedge E_i)}{Prob(X_i=1\wedge\vec{X}=\vec{x}\wedge E_i)}\\
&=&\sum_{x:x_i=1,x_1,\ldots, x_{i-1}=\vec{x},Bad(x,i)}\ \sum_{y:y_i=1, E_i(x,y)}\mu(x,y)/ Prob(X_i=1\wedge\vec{X}=\vec{x}\wedge E_i)\\
&=&\sum_{x:x_i=1,x_1,\ldots, x_{i-1}=\vec{x},Bad(x,i)}q_i'^{x} \cdot Prob(X=x\wedge E_i)/Prob(X_i=1\wedge\vec{X}=\vec{x}\wedge E_i)\\
&\leq&\eps q_i'^{\vec{x}}\cdot\sum_{x:x_i=1,x_1,\ldots x_{i-1}=\vec{x}} Prob(X=x\wedge E_i)/Prob(X_1=1\wedge\vec{X}=\vec{x}\wedge E_i)\\
 &\leq&\eps q_i'^{\vec{x}}.
\end{eqnarray*}
\end{proof}

Therefore,
\m{{\bf E}_x\sum_{i:V_i(x)}p_i^{\vec{x}}\tilde{q}_i'^{\vec{x}}s_i^{\vec{x}}
 \leq\eps\tm{\bf E}_x\sum_{i:V_i(x)}p_i^{\vec{x}}q_i'^{\vec{x}}s_i^{\vec{x}}
 \leq\eps\tm{\bf E}_x\sum_ip_i^{\vec{x}}q_i'^{\vec{x}}s_i^{\vec{x}}
 \leq\eps,
}

and similarly for $\tilde{p}_i'^{\bar{y}}q_i^{\vec{y}}s_i^{\vec{y}}$; so, the first intersection is on a bad position (either chosen or not) with probability at most $2\eps$.

Fix prefixes $(\vec{x},\vec{y})$. Assume $i$ is lucky for $(\vec{x},\vec{y})$.
Now consider some input $(x,y)$
consistent with $(\vec{x}, \vec{y})$, such that $i$ has not been chosen on $(x,y)$ and that $i$ is not bad for $(x,y)$.
If no such $(x,y)$ exists, then all non-chosen $i$ on all $(x,y)$ consistent with $(\vec x, \vec y)$ are bad, and the error on them has been accounted for above. Hence we assume we can find $(x,y)$ such that $i$ is not bad and not chosen.

We get
\m{p_i'^{\vec{y}} q_i'^{\vec{x}}
 \leq \fr{p_i'^{y}q'^{x}_i}{\eps^2}
 \leq \fr{\eps^4}{(k+1)n},}
where the first inequality holds because $i$ is not bad,
and the second one holds because $i$ is not chosen.

We still consider a lucky $i$ for $(\vec x,\vec y)$.
The probability, conditioned on $\vec{x}$, that the first intersection is at the position $i$ satisfies
\m{r_i^{\vec{x}}s_i^{\vec{x}}\leq r_i^{\vec{x}}
 = p_i^{\vec{x}}q_i'^{\vec{x}}
 \leq \fr{(k+1)\tm p_i'^{\vec{y}}q_i'^{\vec{x}}}{\eps^3}
 \leq\fr{\eps}n,}
where the second inequality holds because $i$ is lucky.
Therefore,
\[\sum_{i,\vec x,\vec y: \mbox{\small lucky}} Prob(\vec x,\vec y)r^{\vec x}_is^{\vec x}_i\leq\epsilon,\] where
we do not count error contributed by $i$'s that are bad on inputs $x,y$, but the first intersection of $x,y$.
So a first intersection occurs on a non-chosen lucky position with probability at most $\eps$ (again, ignoring error from bad $i,x,y$).

It remains to consider the case of non-chosen non-lucky positions.
The difference from the previous (lucky) case is that now
$p_i^{\vec{x}}>(k+1)p_i'^{\vec{y}}/\epsilon^3$ -- that is, being conditioned on $\vec{x}$, the event ``$X_i=1$'' is much more likely than conditioned on $\vec{y}$.
We will see that in this case an upper bound on the  probability of ``$X_i=Y_i=1$'' follows, essentially, from the assumption that $I^\mu(X:Y)\leq k$ -- the core assumption used in the proof of the following.

\begin{lemma}\label{lem:qinf}
Assume that for no $x$ or $y$ the conditional probability of non-intersection is less than $\alpha$ and that for no $x$, $y$ and $i$ the probability that $X_i=Y_i=1$ is larger than $1/2$ when conditioned on $E_i$, $E_i$ and $X=x$, or $E_i$ and $Y=y$ (i.e., $s_i\ge\alpha$ and $r_i^x,r^y_i,r_i\le1/2$ always).
Then
\m{\sum_i {\bf E}_{\vec x,\vec y}^{\mu_i}\,\ p_i^{\vec{x}}q_i^{\vec{y}}
 \leq 16 k/\alpha+68/\alpha^2.}
\end{lemma}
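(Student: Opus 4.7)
My plan is to bound $p_i^{\vec{x}}q_i^{\vec{y}}$ pointwise for each $i$ and each prefix-pair $(\vec{x},\vec{y})$ satisfying $E_i$, and then sum. Write $r_i^{\vec{x},\vec{y}}$ for $\Pr(X_i=Y_i=1\,|\,\vec{X}=\vec{x},\vec{Y}=\vec{y},E_i)$. The key move is to apply Lemma~\ref{lem:inf} to the conditional joint distribution $\mu_i^{\vec{x},\vec{y}}$ of $(X_i,Y_i)$ together with the product distribution $\sigma_i^{\vec{x},\vec{y}}=p_i^{\vec{x}}\times q_i^{\vec{y}}$. Here $p_i^{\vec{x}}q_i^{\vec{y}}$ is exactly $\Pr_{\sigma_i^{\vec{x},\vec{y}}}(X_i=Y_i=1)$, so either $p_i^{\vec{x}}q_i^{\vec{y}}\le 4r_i^{\vec{x},\vec{y}}$, or Lemma~\ref{lem:inf} delivers $D(\mu_i^{\vec{x},\vec{y}}\,||\,p_i^{\vec{x}}\times q_i^{\vec{y}})\ge p_i^{\vec{x}}q_i^{\vec{y}}/16$. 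The hypotheses $r_i,r_i^{\vec{x}},r_i^{\vec{y}}\le 1/2$ keep the marginals non-degenerate so that Lemma~\ref{lem:inf} applies cleanly. In both cases we get the pointwise estimate
\[p_i^{\vec{x}}q_i^{\vec{y}}\le 4r_i^{\vec{x},\vec{y}}+16\cdot D(\mu_i^{\vec{x},\vec{y}}\,||\,p_i^{\vec{x}}\times q_i^{\vec{y}}).\]

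The first piece sums easily: ${\bf E}_{\mu_i}[r_i^{\vec{x},\vec{y}}]=r_i$, and the events ``$X$ and $Y$ have their first intersection at position $i$'' are pairwise disjoint, so $\sum_i s_ir_i\le 1$; combined with $s_i\ge\alpha$ this gives $\sum_ir_i\le 1/\alpha$, for a total contribution of at most $4/\alpha$. For the relative-entropy piece I use the decomposition $D(\mu\,||\,\sigma)=I^\mu(X:Y)+D(\mu_A\times\mu_B\,||\,\sigma_A\times\sigma_B)$ (from the unlabeled lemma just before Lemma~\ref{lem:lowerdisc}) followed by $D(\rho_A\times\rho_B\,||\,\sigma_A\times\sigma_B)=D(\rho_A\,||\,\sigma_A)+D(\rho_B\,||\,\sigma_B)$. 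Averaging over $(\vec{x},\vec{y})\sim\mu_i$ and using that $p_i^{\vec{x}}$ is precisely the $\mu_i$-marginal of $X_i$ given $\vec{X}=\vec{x}$ (and symmetrically for $q_i^{\vec{y}}$), the three resulting pieces turn into conditional mutual informations:
\[{\bf E}_{\mu_i}\!\bigl[D(\mu_i^{\vec{x},\vec{y}}\,||\,p_i^{\vec{x}}\times q_i^{\vec{y}})\bigr]=I^{\mu_i}(X_i:Y_i\,|\,\vec{X},\vec{Y})+I^{\mu_i}(X_i:\vec{Y}\,|\,\vec{X})+I^{\mu_i}(Y_i:\vec{X}\,|\,\vec{Y}).\]

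Now I sum over $i$. Since $E_i$ is a function of $(\vec{X},\vec{Y})$, adding $E_i$ to any conditioning that already contains $\vec{X}$ (for the first two terms) or $\vec{Y}$ (for the third) cannot raise mutual information, so $I^\mu(\cdot\,|\,E_i)\le I^\mu(\cdot)$; also by averaging over $E_i$, $I^{\mu_i}(\cdot)\le(1/s_i)I^\mu(\cdot\,|\,E_i)\le(1/\alpha)I^\mu(\cdot)$. Chain rule then yields $\sum_i I^\mu(X_i:Y_i\,|\,\vec{X}_{<i},\vec{Y}_{<i})\le I^\mu(X:Y)=k$ by keeping the $i=j$ diagonal of $I(X:Y)=\sum_{i,j}I(X_i:Y_j\,|\,\vec{X}_{<i},\vec{Y}_{<j})$, and analogously $\sum_i I^\mu(X_i:\vec{Y}_{<i}\,|\,\vec{X}_{<i})\le\sum_i I^\mu(X_i:Y\,|\,\vec{X}_{<i})=k$, and similarly for the $Y$-side. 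Each of the three sums is therefore at most $k/\alpha$, giving $\sum_i{\bf E}_{\mu_i}[D]\le 3k/\alpha$ and an overall bound $\sum_i{\bf E}_{\mu_i}[p_i^{\vec{x}}q_i^{\vec{y}}]\le 4/\alpha+48k/\alpha$, of the claimed qualitative form.

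The main obstacle I expect is the passage from $\mu_i$-conditional expectations of relative entropy to chain-rule-summable mutual informations under $\mu$, which needs both the observation that $E_i$ is determined by the conditioning variables and the compensating $1/s_i\le 1/\alpha$ factor from averaging over $E_i$. The precise constants $16k/\alpha+68/\alpha^2$, as opposed to the crude $48k/\alpha+4/\alpha$ produced by the scheme above, presumably arise either from a tighter single-shot decomposition that avoids the triple accounting, or from invoking Lemma~\ref{lem:rest} directly on $D(\mu_i\,||\,\sigma)$, where the $-\log\alpha$ correction amplified by a further $1/\alpha$ factor naturally yields the $1/\alpha^2$ term.
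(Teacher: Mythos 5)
Your proof is correct and takes a genuinely different route from the paper's for the hard part, which is bounding $\sum_i k_i$ where $k_i = {\bf E}_{\vec x,\vec y}^{\mu_i}\,D(\mu_i^{\vec x,\vec y}(X_i,Y_i)\,||\,\sigma_i^{\vec x,\vec y}(X_i,Y_i))$. Both proofs begin identically: Lemma~\ref{lem:inf} applied to the conditional pair gives the pointwise estimate $p_i^{\vec x}q_i^{\vec y}\le 4r_i^{\vec x,\vec y}+16k_i^{\vec x,\vec y}$, and $\sum_i s_ir_i\le 1$ together with $s_i\ge\alpha$ disposes of the $r_i$ contribution. The paper then bounds $\sum_ik_i$ by a delicate induction over positions: it shows $D(\mu_{i+1}||\sigma_{i+1})\le D(\mu_i||\sigma_i)/(1-r_i)+O(-r_i\log r_i)$ and a per-prefix analogue, then unwinds the chain rule through every index while keeping track of the cumulative $(1-r_1)\cdots(1-r_{i-1})$ weight and various small additive losses. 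Your route instead decomposes each $k_i$ directly as $I^{\mu_i}(X_i:Y_i\,|\,\vec X,\vec Y)+I^{\mu_i}(X_i:\vec Y\,|\,\vec X)+I^{\mu_i}(Y_i:\vec X\,|\,\vec Y)$ (correct, since $\sigma_i$ has the same marginals as $\mu_i$, so $p_i^{\vec x}$ and $q_i^{\vec y}$ are exactly the conditional marginals under $\mu_i$), passes each piece to the unconditioned $\mu$ at a multiplicative cost of $1/s_i\le 1/\alpha$, and then invokes two clean chain-rule identities. The one nontrivial step in this passage, which you state somewhat tersely, does hold: for $g=g(Z,Y)$ a deterministic function of the conditioning variable $Z$ and of one of the arguments $Y$, one always has $I(X:Y\,|\,Z,g)\le I(X:Y\,|\,Z)$, since $H(X\,|\,Z,Y,g)=H(X\,|\,Z,Y)$ while $H(X\,|\,Z,g)\le H(X\,|\,Z)$; this gives $I^{\mu}(X_i:\vec Y\,|\,\vec X,\mathbb{1}_{E_i})\le I^{\mu}(X_i:\vec Y\,|\,\vec X)$ and the analogues, and then $I^{\mu_i}(\,\cdot\,)\le (1/\alpha) I^\mu(\,\cdot\,)$ follows. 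Your approach entirely avoids the paper's term-by-term entropy accounting, at the cost of a constant-factor loss on the $k$ term: you get $48k/\alpha+4/\alpha$, which is incomparable to the stated $16k/\alpha+68/\alpha^2$ (it is larger when $k\gg 1/\alpha$), so strictly speaking you have not proved the lemma with its exact constants. However both bounds are $O(k/\alpha+1/\alpha^2)$, and the downstream application (where the lemma is invoked with $\alpha=\eps$ and the output is absorbed into an $O(\eps)$ error budget) is insensitive to the difference, so this is a cosmetic rather than a substantive gap.
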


We will apply the lemma with $\alpha=\eps$.
Before we continue, let us see why we can assume that one-sided conditional probability of non-intersection is never less than $\eps$ and that $r_i^x$, $r^y_i$ and $r_i$ are never greater than $1/2$ for our target distribution.
If the original $\mu$ is not like that, we replace it with $\mu'$, obtained via sampling $(X,Y)\sim\mu$ and replacing $X$ by $0^n$ with probability $1/2$ and -- independently -- replacing $Y$ by $0^n$ with probability $1/2$.
Note that $\mu'$ satisfies the lemma requirement (as long as $\eps\le1/2$), $I^{\mu'}(X:Y)\leq I^\mu(X:Y)$ and any protocol solving $DISJ$ over $\mu'$ with error at most $\delta$ solves it over $\mu$ with error at most $4\delta$.

The probability that the first intersection is at a non-lucky non-chosen (and not necessarily non-bad) position satisfies
\m{\sum_i {\bf E}_{\vec y}^{\mu_i}\,p_i'^{\vec{y}}q_i^{\vec{y}}
 <\sum_i {\bf E}_{\vec x,\vec y}^{\mu_i}\,
  \fr{\eps^3p_i^{\vec{x}}q_i^{\vec{y}}}{k+1}
 \leq \fr{\eps^3\tm(16k/\eps+ 68/\eps^2)}{k+1}
 <84\eps,}
where the first inequality holds because $i$ is non-lucky and the second is \lemref{lem:qinf} with $\alpha=\eps$.

Taking into account all the steps of our analysis (including possible replacing of $\mu$ by $\mu'$), the error of our protocol is \asO\eps.
Via appropriate re-scaling of $\eps$, we get the required result:

\begin{theorem}
\f{Q^{I\leq k}_\epsilon(DISJ)\leq
 \asO{\sqrt[4]{(k+1)n}\tm\log n\tm\lf(\fr{\log\dr1\eps}\eps\rt)^{\dr32}}.}
\end{theorem}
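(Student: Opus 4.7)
The plan is to design a two-sided quantum search protocol whose error is controlled by a careful case analysis on where the first intersection of $X$ and $Y$ lies, with the mutual-information hypothesis entering only in the ``residual'' case.

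First I would fix a threshold $\tau$ of order $\epsilon^{3}/\sqrt{(k+1)n}$ and let each player identify their ``promising'' positions: Alice forms $C_A = X \cap \{i : q_i^X \ge \tau\}$ and Bob forms $C_B = Y \cap \{i : p_i^Y \ge \tau\}$. Each player runs distributed Grover's search against the opponent to decide whether $C_A \cap Y = \emptyset$ (respectively $C_B \cap X = \emptyset$), and aborts with a ``non-disjoint'' verdict if the promising set ever exceeds $O(\log(1/\epsilon)/\tau)$ elements. The abort is safe because, conditioned on $X \cap Y = \emptyset$ and $X = x$, the inequality $(1-\tau)^{|C_A|} < \epsilon$ already kicks in at $|C_A| \ge \log(1/\epsilon)/\tau$, contributing at most $\epsilon$ to the error. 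Grover then costs $\tilde O(\sqrt{\log(1/\epsilon)/\tau})$ qubits, which after substituting $\tau$ matches the claimed bound.

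The error analysis is the substance. The only nontrivial case is when $X \cap Y \neq \emptyset$ but the first intersection occurs outside $C_A \cup C_B$. I would classify such a first-intersection coordinate $i$ into three classes: \emph{bad} positions where knowing the full input $x$ (or $y$) shrinks the conditional intersection probability by a factor of $\epsilon$ relative to the prefix, \emph{lucky non-bad} positions where $p_i^{\vec x} \le (k+1)\, p_i'^{\vec y}/\epsilon^3$, and \emph{non-lucky} positions otherwise. For the bad case, a swap-of-conditioning argument (Lemma~\ref{lem:tilde}) bounds the total contribution by $\epsilon \sum_i \Pr[\text{first intersection at } i] \le \epsilon$. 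For lucky non-bad non-chosen positions, unrolling the two defining inequalities yields $p_i'^{\vec y} q_i'^{\vec x} \le \epsilon^4/((k+1)n)$, and summing the per-coordinate intersection probability $r_i^{\vec x} s_i^{\vec x}$ over at most $n$ coordinates gives $O(\epsilon)$.

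The main obstacle is the non-lucky case, which is where the mutual information bound finally does real work. Here one needs an inequality of the form $\sum_i \mathbf{E}^{\mu_i}_{\vec x, \vec y}[p_i^{\vec x} q_i^{\vec y}] = O(k/\epsilon + 1/\epsilon^2)$, which I would prove by expanding the expectation and invoking the chain-rule decomposition $I^\mu(X:Y) = \sum_i I(X_i:Y \mid \vec X, E_i)$ together with a log-sum-inequality estimate in the spirit of Lemma~\ref{lem:inf}. To enforce the required side-conditions (one-sided non-intersection probabilities $\ge \epsilon$ and per-coordinate intersection probabilities $\le 1/2$), I would preprocess $\mu$ into $\mu'$ by independently replacing each side by $0^n$ with probability $1/2$; this preserves the information bound and inflates the error by at most $4$. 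Dividing by $k+1$ (from the non-lucky definition) brings the total contribution of this case down to $O(\epsilon)$. Summing the three cases and rescaling $\epsilon$ yields the theorem.
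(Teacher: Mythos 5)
Your protocol and error-analysis framework track the paper's proof step for step: same threshold $\tau = \epsilon^3/\sqrt{(k+1)n}$, same sets $C_A, C_B$, same abort rule, same three-way split of case (c) into bad / lucky non-bad / non-lucky positions, the same swap-of-conditioning lemma for bad positions, and the same arithmetic for lucky positions. That all checks out.

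The gap is in the one place you acknowledge as ``the main obstacle'': the information-theoretic lemma behind the non-lucky case. You propose to prove $\sum_i \mathbf{E}^{\mu_i}_{\vec x, \vec y}[p_i^{\vec x} q_i^{\vec y}] = O(k/\epsilon + 1/\epsilon^2)$ by ``invoking the chain-rule decomposition $I^\mu(X:Y) = \sum_i I(X_i:Y \mid \vec X, E_i)$.'' That identity does not hold. The chain rule gives $I(X:Y) = \sum_i I(X_i : Y \mid X_{<i})$ with conditioning on prefixes of $X$ only; the event $E_i$ depends on prefixes of \emph{both} $X$ and $Y$, and conditioning on it does not produce a telescoping decomposition of $I^\mu(X:Y)$. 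Worse, the per-coordinate quantity you need (the paper's $k_i$) is \emph{not} a conditional mutual information: it is the divergence of $\mu_i^{\vec x, \vec y}(X_i, Y_i)$ against $\sigma_i^{\vec x, \vec y}(X_i, Y_i)$, where $\sigma_i$ is the product of marginals of $\mu_i$ and the conditioning on the prefix is applied \emph{after} taking the product — the paper explicitly warns this is a different object from a conditional mutual information. The technical heart of the proof, which your proposal elides, is tracking how $D(\mu_{i+1}\|\sigma_{i+1})$ and its conditional versions grow from $D(\mu_i\|\sigma_i)$: one shows $\sigma_{i+1}(x,y) = \sigma_i(x,y)\cdot (1-r_i^x)(1-r_i^y)/(1-r_i)^2$, pays a $O(r_i)$ loss at each step, and then unrolls a chain-rule-like recursion over $i$ using Lemma~\ref{lem:lowerdisc} and the bound $\sum_i r_i \le 1/\alpha$. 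As sketched, your argument does not reach this and the lemma remains unproved.
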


It remains to prove the lemma.

\prfstart[\lemref{lem:qinf}]
The following argument extensively uses the assumption that
\m{k\geq I^\mu(X:Y)=D(\mu||\sigma),}
where $\sigma$ is the product of the marginals of $\mu$.
We need some new definitions:
For all $i,j\in[n]$, denote by $\sigma_i$ the product of the marginals of $\mu_i$, and by $\mu_i^{\vec{x},\vec{y},j}$ the distribution
$\mu_i$, conditioned on the event $X_1=x_1,\ldots, X_{j}=x_{j},Y_1=y_1,\ldots, Y_{j}=y_{j}$, which we abbreviate by $F^{\vec{x},\vec{y},j}$. Similarly, $\sigma^{\vec{x},\vec{y},j}_i$ is $\sigma_i$ conditioned on $F^{\vec{x},\vec{y},j}$. Note that for the latter probability distribution we first take the product of marginals of $\mu_i$, and then condition. This is different from considering conditional mutual information, in which one would first condition and then take the product of marginals.
We also stress that here $j$ denotes the length of $\vec x,\vec y$, unlike before.
In the following, when we do not mention $j$ explicitly, it is $i-1$:\ e.g., $\mu_i^{\vec x,\vec y}=\mu_i^{\vec x,\vec y,i-1}$.
Let
\m{k^{\vec x,\vec y}_i
 \deq D(\mu_i^{\vec{x},\vec{y}}(X_i,Y_i)||\sigma_i^{\vec{x},\vec{y}}(X_i,Y_i))}
and
\f{k_i \deq{\bf  E}_{\vec{x},\vec{y}}^{\mu_i}\, k^{\vec x,\vec y}_i.}

Observe that $p_i^{\vec{x}}q_i^{\vec{y}}$ is the probability that $X_i=Y_i=1$ under the distribution $\sigma_i^{\vec{x},\vec{y}}(X_i,Y_i)$.
As $\sigma_i$ is a product distribution, conditioning on $Y$ does note change the probability of $X_i=1$, and so,
\m{p_i^{\vec x}=Prob_{\sigma_i}(X_i=1|\vec{X}=\vec{x},\vec{Y}=\vec{y}).}
We can now use \lemref{lem:inf} to conclude that either $p^{\vec{x}}_iq^{\vec{y}}_i\leq 4 r^{\vec{x},\vec{y}}_i$
or $D(\mu^{\vec{x},\vec{y}}_i(X_i,Y_i)||\sigma^{\vec{x},\vec{y}}_i(X_i,Y_i))\geq p^{\vec{x}}_iq^{\vec{y}}_i/16$.
Hence,
\m{p_i^{\vec{x}}q_i^{\vec{y}}
 \leq 4r_i^{\vec{x},\vec{y}}
  + 16 D(\mu^{\vec{x},\vec{y}}_i(X_i,Y_i)||\sigma^{\vec{x},\vec{y}}_i(X_i,Y_i))
 = 4r_i^{\vec{x},\vec{y}} + 16k_i^{\vec x\vec y}}
and
\m{\sum_i {\bf E}^{\mu_i}_{\vec x,\vec y}\, p_i^{\vec x} q_i^{\vec{y}}
 \leq \sum_i {\bf E}^{\mu_i}_{\vec x,\vec y}
   \lf(4r_i^{\vec{x},\vec{y}} + 16k_i^{\vec x\vec y}\rt)
 =4\sum_i{\bf E}_{\vec x,\vec y}^{\mu_i} r_i^{\vec x,\vec y}
  +16\sum_i k_i.}
Note that
\m{\sum_i {\bf E}^{\mu_i}_{\vec x,\vec y} r_i^{\vec x,\vec y}s_i
 \leq \sum r_i s_i\leq 1,}
and since we are assuming that $s_i\ge\alpha$ always,
\m[m_Linf_1]{\sum_i {\bf E}_{\vec x,\vec y}^{\mu_i}\,\ p_i^{\vec{x}}q_i^{\vec{y}}
 \leq 4/\alpha + 16\sum_i k_i \leq 4/\alpha^2 + 16\sum_i k_i.}

It remains to bound $\sum k_i$ by $k/\alpha+4/\alpha^2$.
Note that for $(x,y)\in E_{i+1}$ it holds that $\mu_{i+1}(x,y)=\mu_i(x,y)/(1-r_i)$, and so,
\mal{\sigma_{i+1}(x,y) %
 &= \sum_{y':(x,y')\in E_{i+1}}\fr{\mu_i(x,y')}{1-r_i}
   \cdot\sum_{x':(x',y)\in E_{i+1}}\fr{\mu_i(x',y)}{1-r_i}\\
 &= \sum_{y':(x,y')\in E_i}\fr{\mu_i(x,y')\cdot (1-r_i^x)}{1-r_i}\cdot
   \sum_{x':(x',y)\in E_i}\fr{\mu_i(x',y)\cdot (1-r_i^y)}{1-r_i}\\
 &= \fr{\mu_i(x)\cdot\mu_i(y)\cdot (1-r_i^x)(1-r_i^y)}{(1-r_i)^2}\\
 &= \sigma_i(x,y)\dt\fr{(1-r_i^x)(1-r_i^y)}{(1-r_i)^2}.}
Then
\mal[P]{&D(\mu_{i+1}||\sigma_{i+1})\\
 =&\sum_{(x,y)\in E_{i+1}}\mu_{i+1}(x,y)
   \log\frac{\mu_{i+1}(x,y)}{\sigma_{i+1}(x,y)}\\
 =&\sum_{(x,y)\in E_{i+1}}\fr{\mu_{i}(x,y)}{1-r_i}
   \cdot\log\frac{\mu_{i}(x,y)\cdot(1-r_i)}{\sigma_{i}(x,y)(1-r_i^x)(1-r_i^y)}\\
\malabel{mal1}
 \leq&\sum_{(x,y)\in E_i}\fr{\mu_i(x,y)}{1-r_i}
   \cdot\log\frac{\mu_i(x,y)\cdot(1-r_i)}{\sigma_i(x,y)(1-r_i^x)(1-r_i^y)}\\
 &-\fr{\mu_i(E_i\smin E_{i+1})}{1-r_i}
   \cdot\fr{\log(4\mu_i(E_i\smin E_{i+1})}{\sigma_i(E_i\smin E_{i+1})}\\
 \leq& \sum_{(x,y)\in E_i}\fr{\mu_i(x,y)}{1-r_i}
   \cdot\log\frac{\mu_i(x,y)}{\sigma_i(x,y)}\\
\malabel{mal15}
 &+\sum_{(x,y)\in E_{i}}\fr{\mu_{i}(x,y)}{1-r_i}
   \cdot\log\frac{1-r_i}{(1-r_i^x)(1-r^y_i)}-\fr{r_i\cdot\log(4r_i)}{1-r_i}\\
\malabel{mal2}
 \leq& \fr{D(\mu_i||\sigma_i)}{1-r_i} + 2\sum_{(x,y)\in E_{i}}\mu_{i}(x,y)
   \cdot 2\cdot (r^x_i+r^y_i) -2r_i\log(4r_i)\\
 \leq& \fr{D(\mu_i||\sigma_i)}{1-r_i} + 12 r_i-2r_i\log(r_i),}
where in \bref{mal1} we used Lemma \ref{lem:lowerdisc},
in \bref{mal15} substituted $r_i=\mu_i(E_i\smin E_{i+1})$,
in \bref{mal2} used the fact that $-\log(1-\lambda)\leq 2\lambda$ for $0\leq\lambda\leq 1/2$, and repeatedly applied $r_i^x,r^y_i,r_i\leq1/2$.
The above inequality tells us that the corresponding relative entropy increases only slightly between the positions $i$ and $i+1$.

Next we fix $X_1=x_1,\ldots, X_{i}=x_i$ and $Y_1=y_1,\ldots,Y_i=y_i$, and look at the term $D(\mu^{\vec{x},\vec{y},i}_{i+1}||\sigma^{\vec{x},\vec{y},i}_{i+1})$, comparing it to $D(\mu^{\vec{x},\vec{y},i}_i||\sigma^{\vec{x},\vec{y},i}_i)$.
Recall that the involved distributions $\mu^{\vec{x},\vec{y},i}_{\dots}$ and $\sigma^{\vec{x},\vec{y},i}_{\dots}$ are on $X_{i+1},\ldots, X_n,Y_{i+1},\ldots, Y_n$, and they are \e{determined} by the values $\vec{x}=x_1\dc x_i$ and $\vec{y}=y_1\dc y_i$.
Below we assume that $x_jy_j\neq 1$ for all $j\le i$ (otherwise the input is not in the support of $\mu_{i+1}$ and the following upper bound on $D(\mu_{i+1}^{\vec x,\vec y,i}||\sigma_{i+1}^{\vec x,\vec y,i})$ holds trivially).
\mal[P]{
&D(\mu_{i+1}^{\vec x,\vec y,i}||\sigma_{i+1}^{\vec x,\vec y,i})\\
\malabel{mal25}
=&\sum_{x,y\in E_{i+1}: (x_1,\ldots, x_i)=\vec x, (y_1,\ldots, y_i)=\vec y} \mu_{i+1}(x,y|\vec x,\vec y)\log\left(\frac{\mu_{i+1}(x,y|\vec x,\vec y)}{\sigma_{i+1}(x,y|\vec x,\vec y)}\right)\\
\malabel{mal3}
\leq&\sum_{x,y\in E_{i}: (x_1,\ldots, x_i)=\vec x, (y_1,\ldots, y_i)=\vec y}\mu_i(x,y|\vec x,\vec y)\log\left(\frac{\mu_i(x,y|\vec x,\vec y)}
{\sigma_i(x,y|\vec x,\vec y)\cdot (1-r^{x}_i)(1-r^y_i)}\right)\\
\leq& D(\mu_i^{\vec x,\vec y,i}||\sigma_i^{\vec x,\vec y,i})
 +\sum_{x,y: (x_1,\ldots, x_i)=\vec x, (y_1,\ldots, y_i)=\vec y} \mu_i(x,y|\vec x,\vec y)\cdot\log\left(\frac{1}{(1-r^x_i)(1-r^y_i)}\right)\\
\leq& D(\mu_i^{\vec x,\vec y,i}||\sigma_i^{\vec x,\vec y,i})
 +\sum_{x,y: (x_1,\ldots, x_i)=\vec x, (y_1,\ldots, y_i)=\vec y}
   \mu_i(x,y|\vec x,\vec y)(2r^x_i +2r^y_i)\\
=& D(\mu_i^{\vec{x},\vec{y},i}||\sigma_i^{\vec{x},\vec{y},i})+2 r_{i}^{\vec y}+2r_i^{\vec x},}
where in \bref{mal25} and \bref{mal3} the condition $E_{i+1}$ is satisfied by all the considered inputs (this is implied by the values $(\vec x,\vec y)$), and so, no ``re-scaling'' happens while going from $\mu_{i+1}$ to $\mu_i$.
In particular,
\m[m_part]{{\bf E}^{\mu_i}_{\vec x,\vec y}\,
  D(\mu_{i+1}^{\vec x,\vec y,i}||\sigma_{i+1}^{\vec x,\vec y,i})
 \le{\bf E}^{\mu_i}_{\vec x,\vec y}\,
  D(\mu_i^{\vec{x},\vec{y},i}||\sigma_i^{\vec{x},\vec{y},i}) +4r_i.}

We are ready to bound $\sum_i k_i=\sum_i {\bf E}^{\mu_i}_{\vec x,\vec y} \ k^{\vec x,\vec y}_i$.
Note that $\mu_1=\mu$, let us use the ``chain rule'' for relative entropy:
\mal[P]{
k\geq&D(\mu||\sigma)\\
=& D(\mu_1(X_1,Y_1)||\sigma_1(X_1,Y_1))\\
&+{\bf E}^{\mu_1}_{x_1,y_1} D(\mu_1^{x_1,y_1}(X_2,\ldots, X_n,Y_2,\ldots, Y_n)||\sigma_1^{x_1,y_1}(X_2,\ldots, X_n,Y_2,\ldots, Y_n))\\
\malabel{mal4}
\geq & D(\mu_1(X_1,Y_1)||\sigma_1(X_1,Y_1))
 +{\bf E}^{\mu_1}_{x_1,y_1}
   D(\mu_2^{x_1,y_1}||\sigma_2^{x_1,y_1})-4r_1^{}\\
\geq & D(\mu_1(X_1,Y_1)||\sigma_1(X_1,Y_1))
 +{\bf E}^{\mu_2}_{x_1,y_1}
   D(\mu_2^{x_1,y_1}||\sigma^{x_1,y_1}_2)\cdot(1-r_1)-4r_1^{}\\
=& D(\mu_1(X_1,Y_1)||\sigma(X_1,Y_1))\\
&+ {\bf E}^{\mu_2}_{x_1,y_1} D(\mu^{x_1,y_1}_2(X_2,Y_2)||\sigma^{x_1,y_1}(X_2,Y_2))\cdot(1-r_1)\\
&+ {\bf E}^{\mu_2}_{x_1,x_2,y_1,y_2} D(\mu^{x_1,x_2,y_1,y_2}_2(X_3,\ldots)||\sigma_2^{x_1,x_2,y_1,y_2}(X_3,\ldots))\cdot(1-r_1)-4r_1\\
\malabel{mal5}
\geq& D(\mu_1(X_1,Y_1)||\sigma_1(X_1,Y_1))+{\bf E}^{\mu_2}_{x_1,y_1}
  D(\mu^{x_1,y_1}_2(X_2,Y_2)||\sigma^{x_1,y_1}_2(X_2,Y_2))\cdot(1-r_1)\\
&+ {\bf E}^{\mu_2}_{x_1,x_2,y_1,y_2}
  D(\mu^{x_1,x_2,y_1,y_2}_3||\sigma_3^{x_1,x_2,y_1,y_2})\cdot(1-r_1)
   -4r_1-4r_2\cdot(1-r_1)\\
\geq& D(\mu_1(X_1,Y_1)||\sigma_1(X_1,Y_1))+{\bf E}^{\mu_2}_{x_1,y_1}
  D(\mu^{x_1,y_1}_2(X_2,Y_2)||\sigma^{x_1,y_1}_2(X_2,Y_2))\cdot(1-r_1)\\
&+ {\bf E}^{\mu_3}_{x_1,x_2,y_1,y_2}
  D(\mu^{x_1,x_2,y_1,y_2}_3||\sigma_3^{x_1,x_2,y_1,y_2})\cdot(1-r_1)(1-r_2)
   -4r_1-4r_2\cdot(1-r_1)\\
\vdots\\
\geq&\sum_i {\bf E}^{\mu_{i}}_{\vec x,\vec y}D(\mu^{\vec x,\vec y}_i(X_i,Y_i)||\sigma^{\vec x,\vec y}_i(X_i,Y_i))\cdot\alpha-4\sum_i r_i\\
=&\sum_i k_i\cdot\alpha-4/\alpha,}
where \bref{mal4} and \bref{mal5} followed from \bref{m_part} and in the last inequality we used the assumption that $\prod_{i=1}^n(1-r_i)\geq\alpha$ and
\m{\sum_{i=1}^nr_i\le\prod_i(1+r_i)\le\prod_i\fr1{1-r_i}\le\fr1\alpha.}

This means that $\sum k_i\leq k/\alpha+4/\alpha^2$, as required.
\prfend  

\subsection{Lower Bound}

We use exactly the same hard distribution for the quantum case as for the classical case, see Section 3.2, where also the mutual information of this distribution is shown to be at most $k$. Conveniently, Razborov \cite{razborov:qdisj} has done most of the hard work for us by analysing the quantum complexity of Disjointness for all set sizes. We get the following main result:

\begin{theorem}
The distributional quantum communication complexity of Disjointness under $\mu_{n,k}$ is at least $\Omega((n(k+1))^{1/4})$.
\end{theorem}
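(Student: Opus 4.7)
The plan is to reduce directly to Razborov's quantum lower bound for Disjointness on sets of bounded size~\cite{razborov:qdisj}, applied to the same distribution $\mu_{n,k}$ used for the classical lower bound in Section~3.2. Recall that under $\mu_{n,k}$ the inputs $(X,Y)$ are always subsets of $[n]$ of size $m=c\sqrt{n(k+1)}$, and with probability $3/4$ the pair is uniformly disjoint, while with probability $1/4$ it intersects in exactly one uniformly random element. The mutual information bound $I^{\mu_{n,k}}(X:Y)\leq k$ has already been verified in Section~3.2, so it suffices to lower bound $Q_\epsilon^{\mu_{n,k}}(DISJ)$.

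Razborov's analysis via the polynomial method and the approximate-degree lower bound for the OR function yields a \emph{distributional} quantum lower bound of $\Omega(\sqrt{m})$ for Disjointness on pairs of $m$-element subsets of $[n]$, as long as $m$ is bounded by a sufficiently small constant fraction of $n$. Crucially, his hard distribution is (up to constants) exactly a mixture of uniform disjoint pairs and uniform pairs intersecting in one element, i.e.\ the distribution $\mu_{n,k}$. Substituting $m=c\sqrt{n(k+1)}$ gives
\[Q_\epsilon^{\mu_{n,k}}(DISJ)\;\geq\;\Omega(\sqrt{m})\;=\;\Omega\bigl((n(k+1))^{1/4}\bigr),\]
which is the desired bound.

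The only issue to address is the regime where $k$ is large enough that $m=c\sqrt{n(k+1)}$ is no longer a small constant fraction of $n$, i.e.\ when $k=\Omega(n)$. In that regime the target bound is just $\Omega(\sqrt{n})$, which follows from Razborov's original quantum $\Omega(\sqrt{n})$ lower bound for Disjointness under a hard distribution supported on sets of size $\Theta(n)$; that distribution has mutual information $\Theta(n)\leq k$, so it is admissible in the definition of $Q^{I\leq k}_\epsilon(DISJ)$. Taking the better of the two regimes for each value of $k$ yields the uniform bound $\Omega((n(k+1))^{1/4})$ for all $0\leq k\leq n$.

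The main obstacle is purely bookkeeping: one must check that Razborov's quantum lower bound is robust enough to apply at the precise set size $m=c\sqrt{n(k+1)}$ (and not only at $m=\Theta(n)$), and that the constants in the definition of $\mu_{n,k}$ match those required by his analysis. Since Razborov's proof is stated in terms of approximate degree of the OR function on $m$ bits and gives $\Omega(\sqrt{m})$ for every $m\leq \delta n$ with $\delta$ an absolute constant, this verification is routine, and no further argument is needed beyond invoking his theorem with the appropriate parameters.
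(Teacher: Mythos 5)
Your proposal is correct and takes essentially the same route as the paper: reduce to Razborov's quantum lower bound for Disjointness on size-$m$ sets, using the observation that the polynomial-method construction averages acceptance probabilities over all size-$m$ inputs and hence already yields a \emph{distributional} bound under a mixture of uniform disjoint and uniform singly-intersecting pairs, which (up to constants) is $\mu_{n,k}$. The paper makes this last observation explicit as Fact~\ref{fac:razb2} and a short paragraph, whereas you assert it; that is the only difference.
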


\begin{proof}

Recall the distributions $\nu_{n,k},\sigma_{n,k}$ as defined in Section 3.2. These are the distributions of sets of size $s=O(\sqrt{n(k+1)}$ from a size $n$ universe (not intersecting resp.~intersecting).
We employ the following result by Razborov \cite{razborov:qdisj}:

\begin{fact}\label{fac:razb2}
Any quantum protocol that solves DISJ with error $\epsilon$ under $\nu_{n,k}$ and error $\epsilon$ under $\sigma_{n,k}$ needs communication $\Omega(\sqrt s)=\Omega(({n(k+1)})^{1/4})$.
\end{fact}

This follows from Razborov's proof, in which given a quantum protocol with communication $c$ for DISJ (on inputs of size $s$ from a size $n$ universe), a uni-variate polynomial of degree $O(c)$ on $\{0,1,\ldots, s\}$ is constructed such that $p(i)$ is close to 0 for all $\{0,1,\ldots, s-1\}$ and $p(s)=1$. Such a polynomial must have degree $\Omega(\sqrt s)$.
The construction is done by averaging of the acceptance probabilities on all inputs $x,y$ where $x,y$ have size $s$, and hence it is enough if the given protocol for DISJ is correct on average inputs under $\nu_{n,k}$ and under $\sigma_{n,k}$. But any protocol with small error under $\mu_{n,k}$ must also have small error under both of these distributions, and we get the same lower bound under this distribution as in the worst case, as stated by Razborov.
\end{proof}

We also note that again, the error dependence cannot be poly-logarithmic.
The proof is the same as in the classical case.
\begin{theorem}
$Q^{I\leq 1}_\epsilon(DISJ)\geq\Omega((n/\epsilon)^{1/4})$ for $\epsilon\geq\Omega(1/n)$.
\end{theorem}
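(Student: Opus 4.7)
My plan is to mirror the classical argument of Theorem~\ref{thm:error} verbatim, substituting the quantum lower bound under $\mu_{n,k}$ (the theorem just above) for its classical counterpart. For a parameter $k$ to be tuned, I form the mixture $\tau_{n,k} := \frac{1}{2k}\mu_{n,k} + (1 - \frac{1}{2k})\rho$, where $\rho$ is a product distribution on DISJ placing weight $1/2$ on $1$-inputs. Since $\mu_{n,k}$ enters $\tau_{n,k}$ with weight $1/(2k)$, any protocol whose error under $\tau_{n,k}$ is at most $\epsilon'$ has error at most $2k\epsilon'$ under $\mu_{n,k}$.

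Next I would invoke Razborov's quantum lower bound (the previous theorem): for some small constant $\delta > 0$, any quantum protocol solving DISJ with error $\delta$ under $\mu_{n,k}$ requires $\Omega((n(k+1))^{1/4})$ qubits of communication. Setting $k = \Theta(1/\epsilon)$ so that $2k\epsilon \leq \delta$ transfers this to a lower bound of $\Omega((n(k+1))^{1/4}) = \Omega((n/\epsilon)^{1/4})$ on the $\epsilon$-error distributional quantum complexity under $\tau_{n,k}$. The hypothesis $\epsilon \geq \Omega(1/n)$ is used precisely to keep $k \leq n$, the range in which $\mu_{n,k}$ is defined and its lower bound holds.

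It remains to verify that $I^{\tau_{n,k}}(X:Y) \leq 1$. Letting $E$ be the indicator of which component of the mixture was sampled, the same calculation as in the classical case gives $I(X:Y) \leq I(XE:Y) = H(E) + I(X:Y \mid E) \leq H(1/(2k)) + \frac{1}{2k} \cdot k \leq 1/2 + 1/2 = 1$, using $I^{\mu_{n,k}}(X:Y) \leq k$ (established for the hard distribution) and $I^{\rho}(X:Y) = 0$ (product distribution). This requires $k$ to be at least a small absolute constant so that $H(1/(2k)) \leq 1/2$, which is automatic in the parameter range of interest.

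I do not expect any serious obstacle: the authors explicitly state that the argument is identical to the classical one. The only care required is in the numerical bookkeeping -- the error-transfer factor $2k$ from the mixture weight, and the entropy plus conditional-mutual-information budget both being at most $1/2$ -- both of which are elementary. All of the substantive work lies in the quantum lower bound for $\mu_{n,k}$ itself, which is already in hand from the preceding theorem.
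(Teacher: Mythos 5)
Your proposal is correct and matches the paper's intended argument exactly: the paper dispatches this theorem with the remark that ``the proof is the same as in the classical case,'' and you have faithfully instantiated that classical proof (the mixture $\tau_{n,k}$, the $2k$-factor error transfer, the $H(1/(2k))+1/2\le 1$ information bound) with the quantum lower bound $\Omega((n(k+1))^{1/4})$ under $\mu_{n,k}$ substituted for its classical counterpart. Nothing further is needed.
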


We again obtain this following.

\begin{corollary}\label{coro:boostq}
The class of distributions with information $k$ with $1\leq k\leq n^{1-\Omega(1)}$ is not boost-able for quantum protocols.
\end{corollary}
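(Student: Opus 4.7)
My plan is to mirror the proof of Corollary~\ref{coro:boost} almost verbatim, substituting the quantum upper bound of Theorem~\ref{thm:cldisj} and the just-established quantum error-dependence lower bound for their classical counterparts. The argument is purely a comparison of upper and lower bounds on $Q^{I\leq k}_\epsilon(DISJ)$, so no new technical ingredient is required.

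First, assume toward a contradiction that the class $D(k)$ is boost-able for quantum protocols. Combined with the upper bound $Q^{I\leq k}_{1/3}(DISJ)\leq \tilde O((n(k+1))^{1/4})$ from Theorem~\ref{thm:cldisj}, boost-ability would give
\[ Q^{I\leq k}_\epsilon(DISJ) \leq \tilde O\!\left((n(k+1))^{1/4}\cdot \log(1/\epsilon)\right).\]
On the other hand, since $D(1)\subseteq D(k)$ for $k\geq 1$, the preceding theorem yields $Q^{I\leq k}_\epsilon(DISJ)\geq Q^{I\leq 1}_\epsilon(DISJ)\geq \Omega((n/\epsilon)^{1/4})$ for every $\epsilon\geq\Omega(1/n)$. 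Equating the two estimates and cancelling a common $n^{1/4}$ produces the necessary condition
\[ \epsilon^{-1/4} \;\leq\; (k+1)^{1/4}\cdot p(n,1/\epsilon), \]
where $p$ absorbs the polylogarithmic slack hidden in $\tilde O$ together with the boosting factor $\log(1/\epsilon)$.

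Finally, I would exhibit an admissible $\epsilon$ that violates this inequality whenever $k\leq n^{1-\Omega(1)}$. Writing $k+1\leq n^{1-c}$ for a fixed $c>0$ and setting $\epsilon = n^{-\delta}$ for any constant $\delta\in(1-c,1)$, the constraint $\epsilon\geq\Omega(1/n)$ is satisfied, while the right-hand side becomes $n^{(1-c)/4}$ times a polylogarithmic factor and the left-hand side is $n^{\delta/4}$. Since $\delta>1-c$, the left-hand side grows faster than any polylog factor, so the inequality fails for all sufficiently large $n$, contradicting the assumed boost-ability. The only place that requires care is making sure the polylog factors in $\tilde O$ (arising from the $\log n\cdot(\log(1/\epsilon)/\epsilon)^{3/2}$ in the protocol of \sref{ss_up}) do not swallow the power-of-$n$ gap; because any positive power of $n$ dominates every fixed polylog in $n$ and in $1/\epsilon=n^\delta$, this bookkeeping is routine and constitutes the full extent of the ``obstacle'' in the proof.
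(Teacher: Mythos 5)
Your proposal is correct and follows essentially the same route the paper implicitly takes --- mirroring the proof of Corollary~\ref{coro:boost} with the quantum upper bound $Q^{I\leq k}_{1/3}(DISJ)\leq\tilde O((n(k+1))^{1/4})$ and the quantum error-dependence lower bound $Q^{I\leq 1}_\epsilon(DISJ)\geq\Omega((n/\epsilon)^{1/4})$ in place of their classical counterparts, then contradicting the constraint on $\epsilon$ in the regime $k\leq n^{1-\Omega(1)}$. One small wording caveat: the factor $(\log(1/\epsilon)/\epsilon)^{3/2}$ from the protocol is polynomial, not polylogarithmic, in $1/\epsilon$; it enters your argument only evaluated at the fixed $\epsilon=1/3$ where it is a constant, so the only genuine polylog overhead is $\log n$ from the protocol times the $\log(1/\epsilon)$ from boost-ability, exactly as your bookkeeping assumes.
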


\section{Large Correlation is Needed for Tight Bounds}

In this section we show that there is a function, for which the distributional communication complexity is far from the randomised communication complexity if the information in the distribution is less than $\Omega(n)$. The main idea is that random sparse problems make it hard for low information distributions to `focus' on the 1-inputs.

Define $f_{n,d}$ as a random variable that takes as its values functions $f:\{0,1\}^n\times\{0,1\}^n\to\{0,1\}$. The functions are generated randomly as follows.
Each input $x,y$ is chosen to be a 1-input independently with probability $d/2^n$.

Note that the communication matrix of $f_{n,d}$ has expected $d$ 1-inputs for each row and column. In the following $d$ should be thought of as some value like $2^{\sqrt n}$.
We need $2^{n/100}\geq d\geq 6n$.

We first show that the complexity of $f_{n,d}$ is $\Theta(\log d)$ with high probability.
Then, we show that with high probability $f_{n,d}$ has a property that allows an $O(\log n)$ protocol under all low information distributions.

First we note that by the Chernoff bound the probability that a row or column has more than $2d$ or less than $d/2$ 1-inputs is at most $2e^{-d/3}\leq 2^{-2n}$. By the union bound it is true for all rows and columns (with high probability) that they contains between $d/2$ and $2d$ 1-inputs. Throughout this section we assume that $f_{n,d}$ has this property.

\begin{lemma} $R(f_{n,d})\leq O(\log d)$ with high probability.
\end{lemma}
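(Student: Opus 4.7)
The plan is to reduce the evaluation of $f_{n,d}$ to a set-membership test and solve that test by public-coin hashing. Under the assumed row/column weight bound, for every $x$ the set $S_x \deq \{y' \in \{0,1\}^n : f_{n,d}(x,y')=1\}$ has size at most $2d$. Since both players know $f_{n,d}$, Alice can compute $S_x$ locally from $x$, and evaluating $f_{n,d}(x,y)$ is equivalent to testing whether $y \in S_x$.

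The protocol is as follows. Using the public coin, Alice and Bob sample a $2$-universal hash function $h \colon \{0,1\}^n \to [R]$ with range $R = 6d$. Bob sends $h(y)$ to Alice using $\lceil \log R \rceil = O(\log d)$ bits. Alice computes the multiset $h(S_x)$ and checks whether $h(y) \in h(S_x)$; she sends the resulting bit back to Bob. Total communication is $O(\log d)$.

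For the error analysis, if $y \in S_x$ then trivially $h(y) \in h(S_x)$, so there are no false negatives. If $y \notin S_x$, then by $2$-universality, for each fixed $s \in S_x$ we have $\Pr_h[h(s)=h(y)] \leq 1/R$, and a union bound over the at most $2d$ elements of $S_x$ gives a false-positive probability of at most $2d/R = 1/3$. Combined with the negligible probability (bounded just before the lemma) that $f_{n,d}$ violates the row/column weight bound, this yields the claimed upper bound with high probability over the choice of $f_{n,d}$.

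There is no real obstacle here: the only care needed is in choosing the hash range $R$ large enough to drive the per-element collision probability below $1/(6d)$ while keeping $\log R = O(\log d)$, and in observing that the bounded weight property established immediately above the lemma statement is exactly what is required to make the union bound work. An alternative implementation using the fingerprinting primitive of Fact~\ref{fac:finger} on each element of $S_x$ would cost $\Theta(d\log d)$, hence the direct hashing approach, which packs all $2d$ tests into one $O(\log d)$-bit message, is essential for the claimed bound.
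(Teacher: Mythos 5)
Your proof is correct and uses essentially the same approach as the paper: both reduce evaluation of $f_{n,d}$ to a membership test in a set of at most $2d$ candidates and resolve it by public-coin hashing to a range of size $\mathrm{poly}(d)$, so that one $O(\log d)$-bit message can be compared against all candidates locally, with a union bound controlling the error. The paper has Alice fingerprint $x$ and Bob check it against the $\le 2d$ rows that are $1$ in his column; you hash $y$ and let Alice check it against $S_x$ --- a symmetric choice of direction, same idea. One correction to your closing remark: the fingerprinting primitive of Fact~\ref{fac:finger} does \emph{not} need to be invoked once per element of $S_x$. A public-coin equality fingerprint of length $k$ is just a shared random hash into $\{0,1\}^k$ with pairwise collision probability $\le 2^{-k}$; once Alice sends her single $O(\log d)$-bit fingerprint of $x$, Bob compares it locally against the fingerprints of all candidate $x_j$ for free, exactly as in your hashing protocol. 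This is precisely how the paper achieves $O(\log d)$ via Fact~\ref{fac:finger}, so your ``packing'' is a restatement of how that primitive is used rather than a genuinely different optimisation.
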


\begin{proof}
With high probability there are at most $2d$ 1-inputs $(x_1,y),\ldots, (x_{2d},y)$ in Bob's column. If Alice sends a fingerprint of $x$ as in Fact \ref{fac:finger}, using $2\log d$ bits, then Bob can check whether $x=x_j$ for some $1\leq j\leq 2d$ with error $2d\cdot2^{-2\log d}\leq 2/d$. If so, then he accepts, otherwise he rejects.
\end{proof}

\begin{lemma}
$R(f_{n,d})\geq\Omega(\log d)$ with high probability.
\end{lemma}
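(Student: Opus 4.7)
The plan is to establish $R(f_{n,d}) \geq \Omega(\log d)$ by exhibiting, for each $f$ in a high-probability event, a distribution $\mu$ on inputs under which the distributional complexity $D^\mu_{1/3}(f)$ is $\Omega(\log d)$; the randomized bound then follows from the minimax theorem. I take $\mu = \tfrac12 \mu_1 + \tfrac12 \mu_0$ where $\mu_b$ is uniform over $f^{-1}(b)$. The row/column concentration already established ensures that $|f^{-1}(1)| = (1 \pm o(1)) d \cdot 2^n$ and $|f^{-1}(0)| = (1 \pm o(1)) 2^{2n}$.

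The lower bound comes from the discrepancy method: any deterministic protocol of cost $c$ with error $\leq 1/3$ under $\mu$ satisfies $2^c \geq 1/(3\,\mathrm{disc}_\mu(f))$, where $\mathrm{disc}_\mu(f) = \max_R |\mu(R \cap f^{-1}(1)) - \mu(R \cap f^{-1}(0))|$ is the maximum over combinatorial rectangles. A direct calculation, using the concentration of $|f^{-1}(b)|$, shows that for a rectangle $R$ of area $s_R$ with $k_R$ one-inputs,
$$|\mu(R \cap f^{-1}(1)) - \mu(R \cap f^{-1}(0))| = \Theta\!\left(\frac{|k_R - s_R p|}{d \cdot 2^n}\right),$$
where $p = d/2^n$. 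So it suffices to show the uniform deviation bound $\max_{R = A \times B}|k_R - s_R p| = O(2^n \sqrt d)$ w.h.p.\ over $f$; this yields $\mathrm{disc}_\mu(f) = O(1/\sqrt d)$ and hence $c \geq \tfrac12 \log d - O(1)$.

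The deviation bound is a Chernoff plus union bound. For a fixed rectangle $R = A \times B$ with $|A| = a$, $|B| = b$, the count $k_R$ is distributed as $\mathrm{Bin}(ab, p)$. Applying a Bernstein-type Chernoff and taking $\lambda = C \cdot 2^n \sqrt d$, we obtain
$$\Pr\!\bigl[|k_R - s_R p| \geq \lambda\bigr] \leq 2 \exp\!\bigl(-\Omega\bigl(\min(\lambda,\, \lambda^2/(abp))\bigr)\bigr),$$
and using $ab \leq 2^{2n}$ and $abp \leq 2^n d$, a short case analysis shows both branches of the minimum are at least $\Omega(2^n)$ for $C$ a sufficiently large absolute constant; so each rectangle fails with probability at most $\exp(-\Omega(2^n))$. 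Union-bounding over all $\leq 2^{2 \cdot 2^n}$ choices of $(A, B)$ still gives total failure probability $o(1)$.

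The main obstacle to foresee is that the Chernoff exponent switches regime at $\lambda = abp$, requiring a case split. The cleanest way is to dispatch tiny rectangles up front: if $ab \leq 2^n \sqrt d$ then $|k_R - s_R p| \leq ab \leq 2^n \sqrt d$ already, so it suffices to apply Chernoff only to rectangles with $ab \geq 2^n \sqrt d$, for which $abp \geq \sqrt d$ and the standard multiplicative Chernoff form applies cleanly.
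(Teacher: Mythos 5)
Your proof is correct and follows the paper's high-level strategy: choose the balanced distribution $\mu$ that splits its weight evenly between 1-inputs and 0-inputs (uniformly within each class), apply the discrepancy method, and show via Chernoff plus a union bound over the $\le 2^{2\cdot 2^n}$ rectangles that $\mathrm{disc}_\mu(f)=O(1/\sqrt d)$ with high probability over $f_{n,d}$. The concentration step, though, is executed differently, and your version is cleaner. The paper works with one-sided discrepancy, dispatches rectangles of small 1-mass ($\mu(R\cap f^{-1}(1))\le 4/d^{1/4}$) directly, and for the remaining rectangles (forced to have $|A|,|B|\gtrsim 2^n/d^{1/4}$) applies a \emph{multiplicative} Chernoff bound with relative deviation $d^{-1/2}$; the exponent this actually yields is $\Theta(|A||B|/2^n)$, which at the boundary $|A||B|\approx 2^{2n}/\sqrt d$ is only $\Theta(2^n/\sqrt d)$ and, for polynomially large $d$, does not obviously beat the $2^{2\cdot 2^n}$ union bound as written. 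Your Bernstein/additive-deviation formulation with $\lambda = C\, 2^n\sqrt d$ sidesteps this entirely: using $abp\le 2^n d$, both branches $\lambda$ and $\lambda^2/(abp)$ are $\Omega(2^n)$ uniformly over all rectangles, so the union bound goes through for a sufficiently large constant $C$, while small-area rectangles ($ab\le 2^n\sqrt d$) are dispatched trivially via $|k_R - ps_R|\le ab$. Two small repairs to make: (i) the $(1\pm o(1))$-concentration of $|f^{-1}(1)|$ does not follow from the row/column concentration, which only gives a factor-$2$ window; instead apply Chernoff directly to $N_1=|f^{-1}(1)|\sim\mathrm{Bin}(2^{2n},d/2^n)$, which concentrates at scale $\sqrt{2^n d}$. (ii) Your per-rectangle advantage is not literally $\Theta(|k_R - ps_R|/(d2^n))$; a short calculation gives $(2^{2n}k_R - N_1 s_R)/(2N_1N_0)$, and substituting the bound on $|N_1 - d2^n|$ from (i), this is $O(|k_R - ps_R|/(d2^n)) + O(1/\sqrt d)$, so the additional term is already of the target order and the conclusion $\mathrm{disc}_\mu(f)=O(1/\sqrt d)$ stands.
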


\begin{proof}

The proof is by the probabilistic method.
We use the minimax theorem and the following hard distribution: Put 1/2 weight on 1-inputs and 1/2 weight on 0-inputs to $f_{n,d}$.
Note that the mutual information of this distribution is $\Omega(n)$: for 1-inputs, given $x$ there are at most $d$ inputs $y$ out of $2^n$ such that $x,y$ is a 1-input.
Hence the information is at least $(n-\log d)/2$.

We employ a 1-sided version of the discrepancy method (this is a relaxation of the 1-sided rectangle corruption method). The 1-sided discrepancy under a distribution $\mu$ is $disc'(f,\mu)=\max_R \mu(f^{-1}(1)\cap R)-\mu(f^{-1}(0)\cap R)$, where the maximum is over all rectangles. Then $R^\mu(f)\geq -\log disc'(f,\mu)-O(1)$ for all $\mu$ that put weight 1/2 on the 1-inputs. Our goal is to show that the 1-sided discrepancy is small with high probability over the choice of $f_{n,d}$.

Fix a rectangle $R$ and consider a random $f_{n,d}$.
We would like to compute the probability that $disc'(R)=\mu(f_{n,d}^{-1}(1)\cap R)-\mu(f_{n,d}^{-1}(0)\cap R)$ is large. Note that this is a random variable and that $\mu$ depends on $f_{n,d}$

If $\mu(R\cap f_{n,d}^{-1}(1))\leq 4/d^{1/4}$, then $disc'(R)\leq 4/d^{1/4}$ and we are done. Hence we assume the opposite.
For $R$ to contain at least a $4/d^{1/4}$ fraction of all 1-inputs it must be the case that $R$ contains at least $(4/d^{1/4})\cdot 2^n d/2$ 1-inputs, and no row or column contains more than $2d$ of them, which implies that $R$ must have at least $2^n/d^{1/4}$ rows and columns.

Write $R=A\times B$, where $|A|,|B|\geq 2^n/d^{1/4}$. The expected number of 1-inputs in $R$ is at most $|A|\cdot|B|\cdot d/2^n$. The 1-inputs are chosen independently, and the Chernoff bound yields that $Prob(R$ contains more than $(1+d^{-1/2})|A||B|d/2^n$ 1-inputs$)\leq e^{-|A||B| d/(3\cdot 2^n d^{1/4})}\leq e^{-2^n d^{1/4}/3} $.
Similarly, we can bound $Prob(R$ contains less than $(1-d^{-1/2})|A||B|d/2^n$ 1-inputs$)$.

Furthermore, since there are at most $2^{2^{n+1}}$ rectangles, by the union bound with high probability these estimates are correct for {\em all} rectangles with enough rows and columns (in particular the rectangle consisting of all inputs).

Note that $R$ contains at least $|A|\cdot|B|-|A|2d$ 0-inputs, each of which have weight at least $1/(2^{2n+1})$, for a total 0-weight of
at least $|A||B|/2^{2n+1}-d/2^{n}$. The weight of a single 1-input is at most $1/(1-d^{-1/2})\cdot 1/(d2^{n+1})$ and the total 1-weight of $R$ is at
 most $(1+d^{-1/2})/(1-d^{-1/2})\cdot|A||B|/2^{2n+1}$ by the above. Hence the one-sided discrepancy is at most $O(d^{-1/2}|A||B|/2^{2n+1})\leq O(d^{-1/2})$.
\end{proof}

We will now show that most functions $f_{n,d}$ are easy under all low information distributions, but hard for information $n$ distributions, by showing that $f_{n,d}$ has a certain property with high probability. We assume in the following that $d\leq2^{\epsilon^2 n}$ and set $\epsilon=1/10$.

\begin{definition}
We say a Boolean $2^n\times 2^n$ matrix is {\em good}, if it is true that every rectangle  $A\times B$ with $\min\{|A|,|B|\}\leq 2^{2n/3}$ has no more than $100\max\{|A|,|B|\}$ 1-entries. We also call any rectangle $A\times B$ with $\min\{|A|,|B|\}\leq 2^{2n/3}$ in a good matrix {\em good}.
\end{definition}

\begin{lemma}
With high probability the communication matrix of $f_{n,d}$ is good.
\end{lemma}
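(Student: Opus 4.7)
The plan is a standard Chernoff-plus-union-bound argument on the choice of rectangle. Fix positive integers $a, b$ with $a \leq b$ and $a \leq 2^{2n/3}$, and fix a rectangle $R = A \times B$ with $|A| = a$ and $|B| = b$; by the symmetry of the random model we may take $a = \min\{|A|,|B|\}$. The number of $1$-entries of $f_{n,d}$ inside $R$ is a sum of $ab$ independent $\mathrm{Bernoulli}(d/2^n)$ variables, hence has mean $\mu = abd/2^n$. Using the hypotheses $a \leq 2^{2n/3}$ and $d \leq 2^{n/100}$ we obtain
\[
\frac{ad}{2^n} \;\leq\; 2^{\,2n/3 + n/100 - n} \;=\; 2^{-97n/300},
\]
so the threshold $100b$ sits super-polynomially above $\mu$.

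Next I would apply the standard upper-tail Chernoff bound $\Pr[X \geq t] \leq (e\mu/t)^{t}$ (valid once $t \geq e\mu$, which here holds with enormous slack). With $t = 100b$ this yields
\[
\Pr\bigl[X \geq 100b\bigr] \;\leq\; \left(\frac{e \cdot ad}{100 \cdot 2^n}\right)^{100b} \;\leq\; 2^{-30 n b}
\]
for all sufficiently large $n$. The number of rectangles with these side-sizes is at most $\binom{2^n}{a}\binom{2^n}{b} \leq 2^{n(a+b)} \leq 2^{2nb}$, so a union bound over all such rectangles gives a failure probability of at most $2^{2nb - 30nb} = 2^{-28nb} \leq 2^{-28n}$. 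Finally, there are fewer than $2^{2n}$ admissible pairs $(a,b)$, so a union bound over all of these gives a total failure probability of at most $2^{-26n}$, which is $o(1)$.

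The only real substance is the observation that the calibration of the parameters ($a \leq 2^{2n/3}$ together with $d \leq 2^{n/100}$) forces $\mu/(100b) = ad/(100 \cdot 2^n)$ to be exponentially small in $n$, so that Chernoff contributes a $2^{-\Omega(nb)}$ tail that swamps the crude $2^{O(nb)}$ count of rectangles by a wide margin. Consequently there is no genuine obstacle: the arithmetic works out by design.
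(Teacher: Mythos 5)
Your proof is correct and follows essentially the same route as the paper: bound the probability that a fixed rectangle has too many $1$-entries (the paper uses $\binom{|A||B|}{100|B|}(d/2^n)^{100|B|}$, which is the same estimate you obtain from the Chernoff form $\Pr[X\geq t]\leq(e\mu/t)^t$), then union-bound over all rectangles and side-sizes. The arithmetic calibration exploiting $a\leq 2^{2n/3}$ and $d\leq 2^{n/100}$ is likewise identical in spirit.
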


\begin{proof}
Fix $A,B$. Assume that $|B|\geq |A|$ and that $|A|\leq 2^{2n/3}$. The probability that a fixed $x,y$ is a 1-input is $d/2^n$. The probability that there are at least $100|B|$ 1-inputs in $R$ is at most
${|A||B|\choose 100|B|}\cdot (d/2^n)^{100|B|}\leq (\frac{|A|d}{2^n})^{100|B|}\leq\frac{d}{2^{n/3}}^{100|B|}$.

There are ${2^n\choose|A|}{2^n\choose |B|}\leq (e2^n/|B|)^{2|B|}$ rectangles of this size. By the union bound the probability that there is a rectangle that is not good is small.
\end{proof}

Now assume that $f$ (or rather its matrix) is good. Consider any $\nu$ such that $I(X:Y)\leq\epsilon^3 n$. We have to give a protocol for $f$ under $\nu$.
By Fact \ref{fac:sub} there is another distribution $\mu$, that is $\epsilon/2$-close to $\nu$ and has $I_\infty(X:Y)\leq 8\epsilon^2n$. We describe a protocol for $f$ under $\mu$ with error $\epsilon/2$. The same protocol has error at most $\epsilon$ under $\nu$.
We assume $d\leq 2^{\epsilon^2n}$.

Alice and Bob consider the marginal distributions $\mu_A$ and $\mu_B$.
Alice sends 0, if $\mu_A(x)\leq2^{-n/2-\epsilon n}$, and 1 otherwise, while Bob does the same for $\mu_B(y)$.
We first consider the rectangle $R_{00}$ of inputs on which the messages were $00$.
Then $\mu_A(x)\cdot\mu_B(y)\leq2^{-n-2\epsilon n}$ for all $x,y$ in $R_{00}$. Hence on this rectangle
$\sum_{x,y\in R:f(x,y)=1}\mu_A(x)\mu_B(y)\leq 2d2^{-2\epsilon n}$. That means that under $\mu_A\times \mu_B$ the probability of 1-inputs in $R_{00}$ is at most $2d2^{-2\epsilon n}$. But since $I_\infty(X:Y)\leq 8\epsilon^2n$, the probability of 1-inputs there under $\mu$ is at most $2^{-2\epsilon n+O(\epsilon^2n)}$. We can reject on $R_{00}$ without introducing much error.

Now consider one of the remaining rectangles, say $R_{10}=A\times B$ (the rectangle where Alice sent 1 and Bob 0).
Clearly $|A|\leq 2^{n/2+\epsilon n}$. Assume $|A|\leq |B|$.
By the above lemma this means that $A\times B$ is good, i.e., contains relatively few 1-inputs, on average only $100$ per column.

On $R_{10}$ Alice and Bob can send public coin fingerprints of $x,y$ each, with error guarantee $\epsilon/1000$ (see Fact \ref{fac:finger}). This takes communication $O(-\log\epsilon)$. If a column (or row) contains few 1-inputs Alice resp.~Bob can test with the fingerprint whether $x,y$ is one of these. But $R_{10}$ only contains few 1-inputs only on average, and it is quite possible that both the row and the column of $x,y$ have many 1-inputs.
Namely, while the (uniformly) average column has at most 100 1-inputs, the distribution on $R_{10}$ is not uniform, but instead rows with more 1-inputs have increased probability, so we need to be more careful.

Let $A=A_0$ and $B=B_0$ (we still assume that $|A|\leq |B|)$.
Define $A_i$ as the set of $x\in A_{i-1}$ such that there are at least 1000 1-inputs $x,y'$ with $y'\in B_{i-1}$ and $B_i$ the set of $y\in B_{i-1}$ such that there are at least 1000 1-inputs $x',y$ with $x'\in A_{i-1}$.

Clearly, all $A_i\times B_i$ are good.
Assume that $|A_i|\leq |B_i|$. $A_i\times B_i$ has at most $100 |B_i|$ 1-inputs.
$A_i\times B_{i+1}$ has at least $1000|B_{i+1}|$ 1-inputs, hence $|B_{i+1}|\leq |A_i|/10$, because $A_i\times B_{i+1}$ is good: $1000|B_{i+1}|\leq100\max\{|A_i|,|B_{i+1}|\}$. That means that for odd $i$ we have $|B_{i}|\leq |A_{i-1}|/10$ and for even $i$ we have $|A_i|\leq |B_{i-1}|/10$.

All sets $A_i$, $B_i$ are known to Alice and Bob without communication. Also, due to the shrinking sizes, all $i\leq O(n)$.

The protocol works as follows: Alice determines the first $i$ such that on $A_i\times B_{i-1}$ her row contains at most 1000 1-inputs and sends this information.
Bob also sends the index $j$, such that on $A_{j-1}\times  B_j$ his column contains at most 1000 1-inputs. If $i<j$, then Bob also sends a fingerprint of $y$ with error guarantee $1/10000$ (see Fact \ref{fac:finger}). If there is a $y'\in B_{i-1}$ with the same fingerprint and $f(x,y')=1$ then Alice accepts, otherwise she rejects. If $i>j$, then Alice sends the fingerprint, and Bob accepts if and only if there is an $x'\in A_{j-1}$ with $f(x',y)=1$.
Clearly the communication is $2\log n+O(1)$, and is done in 2 rounds.

Correctness: Assume $i<j$. The players can be sure that $x,y\in A_i\times B_{i-1}$. There are at most $1000$ 1-inputs in row $x$ in $B_{i-1}$. If $f(x,y)=1$, then certainly the fingerprints will coincide, and Alice accepts.
Otherwise the probability that the fingerprints equal is at most $100/10000=1/10$.

\begin{lemma} Under any  $\nu$ with information at most $\epsilon^3 n$ and for $6n\leq d\leq 2^{\epsilon^2n}$ we have that $R^\nu_\epsilon(f_{n,d}) \leq O(\log n)$, if $f_{n,d}$ is good.
\end{lemma}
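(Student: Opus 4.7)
The plan is to follow the protocol outline already sketched in the preceding discussion, filling in the error analysis. First I would invoke Fact~\ref{fac:sub} to replace $\nu$ with a distribution $\mu$ satisfying $\|\mu-\nu\|\le\epsilon/2$ and $I_\infty^\mu(X:Y)\le 8\epsilon^2 n$, so that it suffices to design a protocol with error $\epsilon/2$ under $\mu$. Then Alice and Bob each send one bit telling whether their marginal probability is below or above the threshold $2^{-n/2-\epsilon n}$, partitioning inputs into four rectangles $R_{00},R_{01},R_{10},R_{11}$.

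For $R_{00}$ I would reject outright. To justify this, observe that for all $(x,y)\in R_{00}$ one has $\mu_A(x)\mu_B(y)\le 2^{-n-2\epsilon n}$, and since any matrix row/column contains at most $2d$ ones (by the Chernoff concentration established at the beginning of the section), the total $\mu_A\times\mu_B$-weight of $1$-inputs in $R_{00}$ is at most $2d\cdot 2^n\cdot 2^{-n-2\epsilon n} = O(d\cdot 2^{-2\epsilon n})$. Applying the definition of $I_\infty^\mu\le 8\epsilon^2 n$, the $\mu$-probability of $1$-inputs in $R_{00}$ is at most $2^{-2\epsilon n+O(\epsilon^2 n)}$, which is negligible for $d\le 2^{\epsilon^2 n}$ and $\epsilon=1/10$.

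For each of the other rectangles, say $R_{10}=A\times B$ with $|A|\le 2^{n/2+\epsilon n}$, I would use goodness (every sub-rectangle with the short side at most $2^{2n/3}$ contains at most $100$ times the long side in $1$-entries). Build the nested sequence $A_0=A,B_0=B$, with $A_i$ the subset of $A_{i-1}$ whose rows have $\ge 1000$ ones in $B_{i-1}$ and $B_i$ analogously. A short counting argument using goodness shows that the ``long'' side of $A_i\times B_i$ shrinks by a factor at least $10$ every other step, so there are only $O(n)$ such rectangles and moreover they are all known to both players without any communication. The protocol then has Alice send the first index $i$ for which her row of $A_i\times B_{i-1}$ has at most $1000$ ones, Bob analogously sends his index $j$; whichever is larger triggers the other to send a $O(1)$-bit fingerprint of their input (Fact~\ref{fac:finger} with error $1/10000$), and the party with at most $1000$ candidate $1$-inputs in the relevant row/column checks the fingerprint against each of them.

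Correctness is immediate: if $f(x,y)=1$ the matching fingerprint is found exactly, while a false match in at most $1000$ candidates against a fingerprint of error $10^{-4}$ gives total error $\le 1/10$. Total communication is $O(\log n)$ bits for the indices plus $O(1)$ for the fingerprints and the initial $2$-bit partition; together with the $\epsilon/2$ statistical gap to $\nu$ and the negligible error on $R_{00}$, the overall error is at most $\epsilon$. The step I expect to require the most care is the $R_{00}$ bound, because that is where the information hypothesis on $\nu$ is actually used; everything else is purely combinatorial and relies only on the goodness of $f_{n,d}$'s matrix.
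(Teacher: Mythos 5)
Your proposal reproduces the paper's argument essentially verbatim: invoking the Substate Theorem to pass from $\nu$ to a distribution with bounded $I_\infty$, thresholding the marginals at $2^{-n/2-\epsilon n}$, rejecting on $R_{00}$ via the product-measure count of $1$-inputs plus the $I_\infty$ bound, and then handling the remaining three rectangles via the nested $A_i,B_i$ shrinking-rectangle construction with constant-error fingerprints. The routes are the same; the only remark worth making is that your phrase ``whichever is larger triggers the other to send'' reads backwards as written, but your following clause (that the party with at most $1000$ candidates does the checking) makes the intended protocol unambiguous and matches the paper.
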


\begin{theorem}
For every $6n\leq d\leq 2^{n/100}$ there is a function $f_{n,d}$ such that

\begin{itemize}
\item $R(f_{n,d})=\Theta(\log d)$,
\item $R^{I\leq n/1000}_{1/10}(f_{n,d})\leq O(\log n)$.
\end{itemize}

\end{theorem}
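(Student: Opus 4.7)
The plan is to derive the theorem by a simple probabilistic-method argument that combines the preceding lemmas; essentially all the technical work has already been done, and what remains is a union bound over the ``with high probability'' events established in the section.

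First I would recall the three probabilistic statements proved above for the random function $f_{n,d}$: (i) with high probability every row and column of the communication matrix contains between $d/2$ and $2d$ ones, which together with the fingerprinting protocol yields $R(f_{n,d})\leq O(\log d)$; (ii) with high probability the one-sided discrepancy under the distribution putting weight $1/2$ on 1-inputs is at most $O(d^{-1/2})$, which via the minimax theorem gives $R(f_{n,d})\geq\Omega(\log d)$; (iii) with high probability the communication matrix is good, so that every rectangle $A\times B$ with $\min\{|A|,|B|\}\leq 2^{2n/3}$ contains at most $100\max\{|A|,|B|\}$ one-entries.

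Next I would observe that each of these events fails with probability $o(1)$ (in fact exponentially small in $n$ for the range $6n\le d\le 2^{n/100}$, as required by the goodness lemma which used $d\le 2^{\epsilon^2 n}$ with $\epsilon=1/10$). A union bound therefore guarantees that with positive probability a random $f_{n,d}$ satisfies all three properties simultaneously, and so there exists an explicit (nonconstructive) choice of $f_{n,d}$ enjoying all of them. For such an $f_{n,d}$, combining (i) and (ii) immediately gives $R(f_{n,d})=\Theta(\log d)$, while goodness together with the final lemma of the section (applied with $\epsilon=1/10$, noting that $n/1000<\epsilon^3 n$) yields $R^{\nu}_{1/10}(f_{n,d})\leq O(\log n)$ for every distribution $\nu$ with $I(X:Y)\leq n/1000$; taking the max over such $\nu$ gives the second conclusion.

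There is no real obstacle here: the only thing to double-check is that the parameter regime $6n\le d\le 2^{n/100}$ is compatible with all three lemma hypotheses at once (the goodness lemma needs $d\le 2^{\epsilon^2 n}=2^{n/100}$, the $O(\log d)$ upper bound uses $d\ge 6n$ so that $2/d$ is a reasonable error, and the lower bound argument needs $d$ large enough for Chernoff tail bounds to beat the $2^{2^{n+1}}$ rectangle count), and then to remark that the resulting distributional upper bound really does hold for every $\nu$ with $I(X:Y)\leq n/1000$, not just in expectation over $\nu$, because goodness is a property of the fixed matrix and the protocol constructed in the last lemma works uniformly over all low-information distributions.
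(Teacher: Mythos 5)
Your proposal is correct and matches the paper's (implicit) argument: the theorem is simply the conjunction of the preceding lemmas, and since each of the three required properties of $f_{n,d}$ fails with probability $o(1)$ (in fact exponentially small in the stated parameter regime), a union bound yields a fixed $f_{n,d}$ enjoying all three, from which the two bullet points follow exactly as you describe.
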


\section{One-Round Error Dependence}

We now consider the general question of error dependence under distributions with limited information. In the case, where the information is bounded only by $n$, we get the standard randomised (resp.~quantum) communication complexity, for which the usual boosting techniques
(i.e., the Chernoff bound) show that the error dependence is at most factor of $O(\log (1/\epsilon))$. Furthermore, Corollary \ref{coro:boost} shows that for all information parameters $1<k<n^{1-\Omega(1)}$
the error dependence is polynomial. This leaves the case of product distributions, where in the randomised two-way communication case
DISJ has logarithmic error dependence. In this section we show that for {\em all} total functions, in the case of one-way communication
complexity the error dependence is small under product distributions. The corresponding statement about two-way protocols remains open.

In \cite{knr:rand1round} Kremer et al. show that the complexity of one-way protocols for total functions under product distributions is determined by the VC-dimension (see also \cite{kearns&vazirani:learn}).

\begin{definition}
The VC-dimension of a Boolean matrix $M$ is the largest $k$ such that there is a $2^k\times k$ rectangle $R$ in $M$ such that $R$ contains all Boolean strings of length $k$ as rows.
\end{definition}

The VC-dimension in turn characterises the number of examples needed to PAC-learn the concept class given by the rows of the communication matrix of $f$, under any distribution on the columns. Usually in learning theory a concept class is a set of Boolean functions ('concepts'), and here we view rows of the communication matrix of $f$ as functions $f_x(y)=f(x,y)$. The task of PAC learning is for the learner to be able to compute $f_x(y)$ for most $y$ under a distribution $\mu$, after having seen labelled examples from the same distribution. It is well known, that $O(VC(f)\cdot1/\epsilon\cdot \log (1/\epsilon))$ examples suffice \cite{kearns&vazirani:learn}.

 Kremer et al.~\cite{knr:rand1round} proved the following upper bound on one-way communication complexity:   $R^{A\to B,I=0}_\epsilon(f)\leq O(VC(f)\cdot1/\epsilon\cdot\log(1/\epsilon))$. The idea is that Alice and Bob can choose examples $y'$ from the public coin, which Alice can label by sending $f(x,y')$. Bob simulates the PAC learning algorithm  for the rows of the communication matrix, and hence he can successfully predict $f(x,y)$ for most $y$, including (likely) his own input. Note that there is also a lower bound of $Q^{A\to B,I=0}(f)\geq (1-H(\epsilon))VC(f)$ (which is even true in the entanglement assisted case with an additional factor of 1/2)\cite{knr:rand1round,ambainis:racj,klauck:qpcom}.

While it is known, that the number of examples needed to PAC-learn is at least $\Omega(VC(f)/\epsilon)$ \cite{kearns&vazirani:learn}, we get an exponentially better dependence on the error here for the one-way communication model under product distributions.

Our result has an appealing interpretation. Both the one-way model under product distributions and the PAC model can be viewed as learning models (for this it is crucial that the distributional one-way model is considered under product distributions). In the PAC model Alice (or nature) labels random examples drawn from a distribution, and Bob has to end up being able to label new examples mostly correct (under the same, unknown distribution). In the one-way model, there is a known distribution on examples (columns), and a known distribution on concepts (rows). The one-way model under product distributions can clearly simulate any PAC algorithm.
But Alice can send any information she deems useful, not just label examples. Nevertheless, in both models the complexity is determined by the VC-dimension. Is a teacher like Alice not more useful than random labelled examples? We show that the one-way model (i.e., a teacher) is better in the sense that making the error small is exponentially cheaper there, compared to the PAC model.

\begin{theorem}
For all total $f$: $R^{A\to B, I=0}_\epsilon(f)\leq O(Q^{A\to B, I=0}_{1/3}(f)\cdot\log(1/\epsilon))$

\end{theorem}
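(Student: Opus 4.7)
The plan is to exhibit a protocol of cost $O(VC(f)\cdot\log(1/\epsilon))$ under any product distribution, and then invoke the known fact $Q^{A\to B,I=0}_{1/3}(f)=\Theta(VC(f))$ (upper bound by trivial simulation of the one-way VC protocol; lower bound from \cite{ambainis:racj,klauck:qpcom}) to replace $VC(f)$ by $Q^{A\to B,I=0}_{1/3}(f)$ in the final bound. The whole point, exactly as hinted in the Introduction, is that once Alice and Bob both know $\mu_B$, they can \emph{precompute} an $\epsilon$-net in the concept class given by the rows of the communication matrix, so Alice does not have to ``teach'' the net itself --- she only needs to name the nearest point.

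Concretely, fix a product distribution $\mu=\mu_A\times\mu_B$ and consider the concept class $\mathcal{F}=\{f_x:y\mapsto f(x,y)\mid x\in\{0,1\}^n\}$, equipped with the $\mu_B$-disagreement pseudo-metric $d_{\mu_B}(g,h)=\Pr_{y\sim\mu_B}[g(y)\neq h(y)]$. The VC-dimension of $\mathcal{F}$ equals $VC(f)$, so by Haussler's packing/covering lemma there is a subset $\mathcal{N}\subseteq\mathcal{F}$ that is an $\epsilon$-cover of $\mathcal{F}$ with $|\mathcal{N}|\leq (c/\epsilon)^{O(VC(f))}$ for some absolute constant $c$. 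Since $f$ and $\mu_B$ are known to both players, Alice and Bob can agree on such a cover $\mathcal{N}$ in advance using only shared (or even no) randomness. On input $x$, Alice picks any $x^*$ such that $f_{x^*}\in\mathcal{N}$ and $d_{\mu_B}(f_x,f_{x^*})\leq\epsilon$, sends the index of $x^*$ in $\mathcal{N}$ (which takes $\lceil\log|\mathcal{N}|\rceil=O(VC(f)\log(1/\epsilon))$ bits), and Bob outputs $f(x^*,y)$.

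Correctness follows because, for \emph{every} fixed $x$, the probability over $y\sim\mu_B$ that $f(x^*,y)\neq f(x,y)$ is at most $\epsilon$ by the cover property; averaging over $x\sim\mu_A$ (which is independent of $y$ under a product distribution --- this is where $I=0$ is essential) yields distributional error at most $\epsilon$ under $\mu$. Taking the maximum over product distributions $\mu$ gives $R^{A\to B,I=0}_\epsilon(f)\leq O(VC(f)\cdot\log(1/\epsilon))$, and substituting $VC(f)=O(Q^{A\to B,I=0}_{1/3}(f))$ finishes the proof.

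The only non-routine step is the size bound on $\mathcal{N}$: one needs the $L_1(\mu_B)$-covering number of a VC-class of dimension $d$ to be at most $(c/\epsilon)^d$ up to a polynomial factor in $d$ and $1/\epsilon$. The standard reference is Haussler's covering theorem (a sharpening of the earlier Dudley bound); invoking it as a black box is clean, but one should state it in a form suitable for arbitrary (not just empirical) measures, so that it applies to the known marginal $\mu_B$. Everything else --- the encoding, the error analysis, and the reduction from $VC(f)$ to $Q^{A\to B,I=0}_{1/3}(f)$ --- is then a direct bookkeeping step.
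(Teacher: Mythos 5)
Your protocol and its analysis are exactly the paper's: under a product distribution Alice and Bob precompute an $\epsilon$-net (in the $\mu_B$-disagreement pseudo-metric) among the rows of the communication matrix, Alice sends the index of the nearest net element, and Bob evaluates that row at $y$; correctness and the reduction $VC(f)=\Theta\bigl(Q^{A\to B,I=0}_{1/3}(f)\bigr)$ match the paper line for line. The one place you diverge is the lemma used to bound $\lvert\mathcal{N}\rvert$: you invoke Haussler's $L_1$-covering theorem for VC classes directly, whereas the paper constructs the net via the standard PAC sample bound and then invokes Sauer's lemma to argue that the distinct row-patterns on the $c=O\bigl(VC(f)\cdot\tfrac1\epsilon\log\tfrac1\epsilon\bigr)$ sampled columns number at most $\Phi(c,VC(f))=(1/\epsilon)^{O(VC(f))}$. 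Both yield the same bound; Haussler's theorem is the sharper, cleaner black box (and gives the covering number deterministically for an arbitrary measure, which is exactly what is wanted since $\mu_B$ is known, not empirical), while the paper's route is more elementary and self-contained, staying within Sauer's lemma and the basic PAC consistency argument. Your flagged caveat — that Haussler must be applied for arbitrary rather than empirical measures — is the right thing to be careful about, and is indeed satisfied by the theorem in its usual form.
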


\begin{proof}
First, $Q^{A\to B, I=0}_{1/3}(f)=\Theta(VC(f)).$ Hence we need to show only that $R^{A\to B,I=0}_\epsilon(f)\leq O(VC(f)\cdot\log(1/\epsilon))$.

For a given distribution $\mu$ on the columns, an $\epsilon$-net among the rows of the communication matrix is a subset $N$ of the set of rows, such that for every row
$x$ there is a row $x'\in N$ which coincides with $x$ with probability $1-\epsilon$ under $\mu$.
We have the following simple observation, due to the fact that Alice can simply send the name of the closest $x'\in N$ to  Bob.

\begin{lemma}
$R^{\mu_A\times \mu_B}_\epsilon(f)$ is upper bounded by the logarithm of the size of the smallest $\epsilon$-net for $f$ and $\mu_B$.
\end{lemma}

Hence instead of the simulation Alice and Bob can agree on an $\epsilon$-net beforehand, and the size of the $\epsilon$-net determines the
complexity of the protocol. Note that PAC-learners also try to find an $\epsilon$-net, but they are restricted to finding one from random examples.
The size of the constructed $\epsilon$-net is much smaller than the number of examples (this is not surprising, since otherwise the concept is not learned yet). Indeed, Sauer's lemma tells us enough about the size of the $\epsilon$-net, when the specified number of examples have been chosen.

\begin{fact}[Sauer]
Let $M$ be a Boolean matrix with $r$ rows and $c$ columns and VC-dimension $d$. Then $r\leq\Phi(c,d)$, where $\Phi(c,d)=\sum_{i=0,\ldots,d}{c\choose i}\leq d\cdot {c\choose d}$.
\end{fact}

We now state the fundamental result from PAC learning (see Theorem 3.3 in \cite{kearns&vazirani:learn}).
\begin{fact}\label{fac:learn}
Consider any function $f:\{0,1\}\times \{0,1\}^n\to\{0,1\}$.
Assume there is a distribution $\mu$ on $y$'s that does not depend on $x$, and $x$ is also fixed but unknown.
The PAC learner is given $c=O(VC(f)\cdot1/\epsilon\cdot\log(1/\epsilon))$ random examples $y_1,\ldots, y_c$ from the distribution together with labels $\ell_1=f(x,y_1),\ldots, \ell_c=f(x,y_c)$.
If the learner chooses any $x'$ that is consistent with these values (i.e., $f(x',y_i)=\ell_i$ for all $i=1,\ldots,c$), then the probability that $f(x',y)\neq f(x,y)$ is at most
$\epsilon$ under $\mu$. Hence, if we choose a string $x'$ consistent with any vector $\ell_1,\ldots, \ell_c$, then we get an $\epsilon$-net for $f,\mu$.
\end{fact}

The size of this $\epsilon$-net is clearly at most $2^c$.
Sauer's lemma can now be used to show that the constructed $\epsilon$-net can be made much smaller.
The size of the $\epsilon$-net constructed in Fact \ref{fac:learn}, without repetitions, is at most the size of the set of {\em distinct} rows
in the matrix for $f$, when we restrict the matrix to the $c$ chosen columns (we may choose one $x'$ for every distinct value of the $c$ labels appearing and add it into the $\epsilon$-net).

The size of the number of distinct rows is bounded now by Sauer's lemma as follows: $VC(f)\cdot {c\choose VC(f)}= VC(f)\cdot {const\cdot VC(f)\cdot1/\epsilon\cdot\log(1/\epsilon)\choose VC(f)}\leq
(1/\epsilon)^{O(VC(f))}$.
Hence the communication is at most the logarithm of this size, which yields the theorem.
\end{proof}

\section{Open Problems}

\begin{itemize}
\item Can the error dependence of a tight upper bound on $Q^{I=0}_\epsilon(DISJ)$ be improved to $\log(1/\epsilon)$?
\item Can the error dependence of $R^{I=0}_\epsilon(f)$ be improved to $\log(1/\epsilon)$ for {\em every} total function $f$?
\item What is the trade-off between the number of rounds and the randomised complexity of DISJ under product distributions?
\item What is the quantum communication complexity of DISJ where the inputs are sets of size $\sqrt n$ from a size $n$ universe? The best known lower bound is $\Omega(n^{1/4})$, the best known upper bound is $O(n^{1/4}\log n)$.
\item What is the largest gap between $Q^{I=0}(f)$ and $R^{I=0}(f)$? In the one-way model there is at most a constant gap for any total function. We have shown a quadratic gap for DISJ.
\end{itemize}

\bibliography{qc}

\section*{Appendix}

\section{Randomised Protocol for DISJ under Product Distributions}

\begin{proof}[Proof of Theorem \ref{thm:prot1}]
Fix any product distribution $\mu$ on $\{0,1\}^n\times\{0,1\}^n$. The main idea is (just like in \cite{bfs:classes}) to have a first phase in which large sets are reduced in size until both sets have size $O(\sqrt n)$. In phase 2 we employ the randomised protocol for DISJ on small sets given by Hastad and Wigderson \cite{hastad:disj} (instead of communicating the sets). To simplify our presentation we describe a randomised protocol.

Set $S=\sqrt n$.
In phase 1 Alice and Bob try to shrink the universe $U$ (without removing positions in $x\cap y$) until the size of $U$ is at most $S$. At that point also $|x\cap U|$ and $|y\cap U|$ are at most of size $S$ and the players move to phase 2. The protocol starts with the universe $U_0=\{1,\ldots,n\}$. The players maintain a current universe $U_i$ until $U_i$ is small at some point.

The protocol proceeds in rounds during phase 1 (we later explain how to get rid of all but two rounds). In each round Alice and Bob exchange a bit each, indicating whether $|x|,|y|\geq S$ or not. If both are smaller, they move to phase 2. The players also maintain a current rectangle of inputs $R_i=A_i\times B_i$ (this would be immediate in a deterministic protocol, but needs to be maintained in the randomised case).

After this exchange, Alice and Bob each compute $Prob(x,y$ are disjoint$)$ on the current distribution restricted to $R_i$ and their row/column. If this probability is less than $\epsilon$ for someone, they reject and quit the protocol.
Otherwise, one player who has a large set still, say Alice, uses the public coin to generate samples $y'\in B_i$. These are disjoint from $x$ with probability at least $\epsilon$. Hence, Alice can name a disjoint $y'$ with expected communication $O(\log (1/\epsilon))$. Since $x\cap y$ is disjoint with $y'$ they set $U_{i+1}=U_i-y'$. The size of the universe decreases by at least $\sqrt n$ in each round in phase 1, the communication is expected $O(\log (1/\epsilon))$ per round, and there are at most $\sqrt n$ rounds.

Phase 2, as mentioned, is the protocol from \cite{hastad:disj}, which solves DISJ with communication $O(\sqrt n\log(1/\epsilon))$ and worst case error $\epsilon$ on sets of size at most $\sqrt n$.

Hence the total expected communication is at most $O(\sqrt n\log(1/\epsilon))$. We need a protocol with a {\em worst case} communication bound, though, but note that during each round in phase 1, using the public coin to pick a new $y'$ corresponds to a Bernoulli trial with success probability at least $\epsilon$.
The communication cost is the logarithm of the number of the first successful trial. The probability that this is larger than $t\log(1/\epsilon)$ is at most
$e^{-1/\epsilon^{t-1}}$. Assume there are $T$ rounds in phase 1. The probability that the message length in any round is more than $(T+1)\log(1/\epsilon)$ is at most $T\cdot e^{-1/\epsilon^T}\leq\epsilon$. Hence we can assume that the message length is at most $(T+1)\log(1/\epsilon)$ in all rounds (the probability that this is not the case is bounded by $\epsilon$).

We now bound the probability that the total message length is more than $10T\log(1/\epsilon)$, by appealing to the Hoeffding bound. Note that the message lengths of all rounds are (still) independent, and that we just established an upper bound on the message length. The Hoeffding bound now implies that the probability of the total message length being larger than the stated bound is at most $\epsilon$. Furthermore, we have that $T\leq \sqrt n$ with certainty. This shows that the communication of phase 1 is at most $O(\sqrt n\log(1/\epsilon)).$ Note that the protocol needs to be modified such that it aborts if the communication in phase 1 exceeds this bound. This introduces error at most $\epsilon$.
\end{proof}

\section{Randomised Protocol and Distributions with Bounded Mutual Information}

\begin{proof}[Proof of Theorem \ref{thm:prot2}]
Fix any distribution $\mu'$ that has information at most $k$.
The protocol we describe again has 2 phases. Informally, the first phase shrinks the sets of Alice and Bob (which could be arbitrarily large) until their sizes are both small enough. The second phase is small set disjointness, as considered before by Hastad and Wigderson \cite{hastad:disj}, and more recently by Saglam and Tardos \cite{tardos:disj}.
We will establish an upper bound of $O(\sqrt{n(k+1)}/\epsilon)$ on the {\em expected} communication complexity with error $\epsilon$. Then the theorem (which claims a worst-case bound) follows via the Markov inequality: if the stated communication bound is violated, stop the protocol and output a random bit.

Set $S=\sqrt{(k+1)n}$. The goal of the first phase is to make both sets smaller than $S$.
Suppose Alice holds $x$ and Bob $y$. They communicate to determine one of them has a set larger than $S$. This needs communication $O(1)$.
If both sets are small we move to phase 2 described below.

In phase 1 Alice and Bob try to shrink the universe $U$ until the size of $U$ is at most $S$. At that point also $|x\cap U|$ and $|y\cap U|$ are at most $S$ and the players move to phase 2. The protocol starts with the universe $U_0=\{1,\ldots,n\}$. The players maintain a current universe $U_i$ until $U_i$ is small at some point.

Note that while the information under $\mu_0=\mu$ is at most $k$, in some branches of the protocol the information on the current sub-rectangle can grow, and we need
that on average it is bounded by $k$. We keep a transcript $T_i=A_i,B_i,C_i$, where $C_i$ is the public coin randomness, and $R_i=A_i\times B _i$ is the rectangle in the communication
matrix induced by the messages up to round $i$, for public coin value $C_i$. Note that the rectangles $R_i$ form a partition of the communication matrix if we fix $C_i$ (since they result from a deterministic protocol then).
By Lemma \ref{lem:decrease} we have that $I(X:Y|T_i)\leq I(X:Y)$.

Denote by $\mu_{t_i}$ the distribution on inputs conditional on the transcript being $T_i=t_i$.
$\mu_{t_i}^x$ is $\mu_{t_i}$ restricted to the row $X=x$. $\tilde{\mu},\tilde{\mu}_{t_i},\tilde{\mu}_{t_i}^x$ denote the distributions restricted to 1-inputs of DISJ.
$\mu_{t_i,Y}$ is the marginal of $\mu_{t_i}$ on Bob's inputs.
$\tilde{\mu}_{t_i,Y}^x$ is the distribution on $y$'s under $T_i=t_i$, for fixed $x$ and conditioned on $x\cap y=\emptyset$.
$\mu_{t_i,Y}^x$ is the distribution on $y$'s under $T_i=t_i$, for fixed $x$.

Here is the protocol for phase 1. Explanations follow.

 \begin{enumerate}
 \item Alice and Bob check whether $|x|\leq S$ and $|y|\leq S$ on $U_i$. If both are, they move to phase 2.
W.l.o.g.~assume that $|y|\geq  S$, otherwise the following steps are done by Bob in an analogous fashion.
\item Alice computes the probability that $DISJ(x,y')=1$ if $y'$ is chosen from $\mu_{t_i}^x$. If this probability is less than $\epsilon/2$, she ends the protocol with output 0.
\item Alice computes $\tilde{\mu}_{t_i,Y}^x$. Another distribution, this one known to both players, is $\mu_{t_i,Y}$.
\item Alice and Bob use rejection sampling as in Fact \ref{fac:sample} (using the distributions $\tilde{\mu}_{t_i,Y}^x$ and $\mu_{t_i,Y}$) to discover a $y'_i$ distributed according to $\tilde{\mu}_{t_i,Y}^x$.
\item Alice and Bob set $U_{i+1}=U_i-y'_i$.
\item $t_{i+1}$ is $t_i$ together with the message and randomness from 1. $\mu_{t_{i+1}}$ is $\mu$ conditioned on $T_{i+1}=t_{i+1}$.
\item Move to step 1.
 \end{enumerate}

 We note the following on the different steps.
 \begin{enumerate}
 \item Communication is $O(1)$.
 \item Clearly the total error introduced by these steps under $\mu$ can never be more than $\epsilon/2$. If the protocol moves ahead the probability of $DISJ(x,y)=1$ is at least $\epsilon/2$ under $\mu_{t_i}^x$.
 \item Since $I(X:Y|T_i)\leq k$ we have that $E_{t_i,x} D(\mu_{t_i,Y}^x||\mu_{t_i,Y})\leq k$.
 \item $D(\tilde{\mu}_{t_i,Y}^x||\mu_{t_i,Y})\leq 2(D(\mu_{t_i,Y}^x||\mu_{t_i,Y})+1)/\epsilon-\log(\epsilon/2)$  due to Lemma \ref{lem:rest} and hence
  the rejection sampling protocol from Fact \ref{fac:sample} uses expected communication $O((k+1)/\epsilon)$. Drawn $y'_i$ are always disjoint from $x$.
\item $|y'_i|\geq S$. Hence $|U_i-U_{i+1}|\geq S$. This step can be performed at most $n/\sqrt{n/(k+1)}$ times.
 \end{enumerate}

 The protocol ends phase 1 with sets $x\cap U_j$ held by Alice and $y\cap U_j$ held by Bob, and $|x\cap U_j|,|y\cap U_j|\leq S$, and $DISJ(x,y)=1\Leftrightarrow DISJ(x \cap U_j,y\cap U_j)=1$. The probability that the protocol ends during phase 1 and makes an error is at most $\epsilon/2$. The expected communication is at most $O(\sqrt{n(k+1)}/\epsilon$.

  Phase 2 is simply the Hastad Wigderson protocol for small set disjointness \cite{hastad:disj}, that finishes the protocol in communication $O(\sqrt{n(k+1)}\log(1/\epsilon))$ and with worst case error $\epsilon/2$. Hence we get a protocol with error $\epsilon$, and expected  communication $O(\sqrt{n(k+1)}/\epsilon)$.

\end{proof}

\section{Randomised Lower Bound for DISJ}

We first bound the information. Letting $X$ and $Y$ follow the marginal distributions of $\mu_{n,k}$, respectively, we have:
\begin{align*}
I(X:Y)&=H(X)-H(X|Y)=\log{n\choose m}-{\bf E}_{y\in Y}(Pr(y)H(X|Y=y))\\
&=\log{n\choose m}-H(X|Y=y_{0})\textrm{ (where $y_{0}$ is any set with $P(y_{0})>0$})\\
&=\log{n\choose m}-\left(2\log{n-m\choose m}+2\log{n-m\choose m-1}+2\right)\\
&\leq\log{n\choose m}-\log{n-m\choose m}=\log\frac{n(n-1)\cdot\ldots\cdot(n-m+1)}{(n-m)\cdot\ldots\cdot(n-2m+1)}\\
&\leq\log\left(1+\frac{m}{n-2m+1}\right)^{m}\leq (\log e)\frac{m^{2}}{n-2m+1}\\
&=c^{2}(\log e)(1+o(1))(k+1)\leq k
\end{align*}
for any sufficiently large $n$.

\begin{proof}[Proof of Theorem \ref{thm:low}]
We may assume that $k=o(n)$, since otherwise (if $k=\Omega(n)$), the original proof by Razborov \cite{razborov:disj} applies directly. Let $l\in\mathbb{N}$ be given and assume that $n=4l-1$. Let $\gamma=\log_{l}(c\sqrt{n(k+1)})$, where $c=(\log e)^{-1}$. Thus $\gamma\in \left(\frac{1}{2},1\right)$ (for $n$ sufficiently large) and our distribution will pick sets of size $l^{\gamma}=c\sqrt{n(k+1)}$. Throughout the proof we will treat numbers like $l^{\gamma}$ as natural numbers, and avoid using the floor function for the sake of readability. We will also identify $\mathcal{P}(\{1,\ldots,n\})$ with $\{0,1\}^{n}$.

We now give an alternative definition for the distribution $\mu=\mu_{n,k}$, as the distribution induced by the following process: First, a triple $T=(T_{1},T_{2},i)$ is chosen uniformly among all such triples, where $\vert T_{1}\vert=\vert T_{2}\vert=2l-1$ and $\{T_{1},T_{2},\{i\}\}$ form a partition of the set $\{1,\ldots,n\}$. Then, with probability $\frac{1}{2}$ the set $x$ is chosen uniformly among all subsets of $T_{1}\cup\{i\}$ with $l^{\gamma}$ elements and such that they contain $i$, and with probability $\frac{1}{2}$ the set $x$ is chosen as a subset of $T_{1}$ with $l^{\gamma}$ elements, again uniformly among all such subsets of $T_{1}$. Similarly, and independently of the choice of $x$, $y$ is chosen with probability $\frac{1}{2}$ uniformly as a subset of $T_{2}\cup\{i\}$ with $l^{\gamma}$ elements and such that it contains $i$, and with probability $\frac{1}{2}$ uniformly among the subsets of $T_{2}$ with $l^{\gamma}$ elements (not containing $i$). Thus non-zero probabilities are assigned only on the set $\{(x,y)\mid x,y\subseteq\{1,\ldots,n\},\vert x\vert=\vert y\vert=l^{\gamma},\vert x\cap y\vert\in\{0,1\}\}$.

Now the statement that $D^{\mu_{n,k}}_\epsilon(DISJ)=\Omega(\sqrt{n(k+1)})$ for any sufficiently small constant $\epsilon>0$, follows directly from Lemma \ref{lem:1} below.
\end{proof}

\begin{lemma}\label{lem:1}
Let $\gamma$ and $\mu$ be defined as in the proof of Theorem \ref{thm:low}. Let $A=\{(x,y)\mid \mu(x,y)>0\textrm{\ and\ }x\cap y=\emptyset\}$ and $B=\{(x,y)\mid \mu(x,y)>0\textrm{\ and\ }x\cap y\neq\emptyset\}$. For any sufficiently small $\epsilon>0$ we have for any rectangle $R=C\times D\subseteq\mathcal{P}(\{1,\ldots,n\})^{2}$ that
\begin{displaymath}
\mu(B\cap R)\geq\Omega(\mu(A\cap R))-2^{-\Omega(n^{\gamma})}.
\end{displaymath}\end{lemma}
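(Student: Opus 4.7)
The plan is to follow the structure of Razborov's corruption bound for DISJ but with a finer combinatorial counting step tailored to the smaller set size $l^\gamma=\Theta(\sqrt{n(k+1)})$.

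First I would decompose both $\mu(A\cap R)$ and $\mu(B\cap R)$ by conditioning on the triple $T=(T_1,T_2,i)$. For a rectangle $R=C\times D$ and a fixed $T$, define
\[
C_T^i=\{x\in C:\, i\in x,\; x\setminus\{i\}\subseteq T_1,\; |x|=l^\gamma\},\qquad C_T^{\bar i}=\{x\in C:\, i\notin x,\; x\subseteq T_1,\; |x|=l^\gamma\},
\]
and analogously $D_T^i,\, D_T^{\bar i}$. Setting $N_1=\binom{2l-1}{l^\gamma-1}$, $N_0=\binom{2l-1}{l^\gamma}$, and $\alpha=N_1/N_0\approx l^{\gamma-1}/2$, a direct calculation gives
\[
\mu(B\cap R\mid T)=\frac{|C_T^i|\,|D_T^i|}{4N_1^2},
\]
\[
\mu(A\cap R\mid T)=\frac{|C_T^{\bar i}|\,|D_T^{\bar i}|}{4N_0^2}+\frac{|C_T^i|\,|D_T^{\bar i}|+|C_T^{\bar i}|\,|D_T^i|}{4N_0 N_1}.
\]
(The intersection structure is forced: a $B$-input must have $i\in x\cap y$, since $T_1\cap T_2=\emptyset$.) This reduces the problem to comparing these expressions averaged over a uniformly random triple $T$.

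Second, call $T$ \emph{balanced} if $|C_T^i|\ge c\alpha |C_T^{\bar i}|$ and $|D_T^i|\ge c\alpha |D_T^{\bar i}|$ for a small absolute constant $c>0$. A term-by-term estimate using $\alpha|C_T^{\bar i}|\le|C_T^i|/c$ (and the analogue for $D$) yields $\mu(A\cap R\mid T)\le (2/c+1/c^2)\mu(B\cap R\mid T)$ for balanced $T$, so balanced triples already contribute $\mu(B\cap R)\ge\Omega(1)\sum_{T\text{ balanced}}\Pr(T)\mu(A\cap R\mid T)$. It remains to show that the unbalanced triples contribute at most $2^{-\Omega(n^\gamma)}$ to $\mu(A\cap R)$.

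The main obstacle, and where the sharper combinatorial counting is needed, is this unbalanced bound. Intuitively, since the triple $(T_1,T_2,i)$ is drawn from a symmetric distribution and each $x\in C$ contains $l^\gamma$ of the $n$ coordinates, the fraction of $x\in C$ containing $i$ should concentrate around $l^\gamma/n=\Theta(\alpha)$, so ``unbalanced'' should be exponentially rare in $l^\gamma$. The difficulty is that the split $C=C_T^i\sqcup C_T^{\bar i}$ depends not just on whether $i\in x$ but on whether $x\setminus\{i\}\subseteq T_1$, so a naive Chernoff argument does not apply. Razborov's original proof handled the analogous step (in the $l^\gamma=\Theta(l)$ regime) through an entropy-counting argument that spends $H(1/2)=1$ per ``level,'' costing $\Omega(n)$; this is too expensive here, since we need slack $2^{-\Omega(n^\gamma)}$ with $\gamma<1$. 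My plan is to replace it by grouping triples according to the pair $(|C_T^i|,|C_T^{\bar i}|)$ and using the transitive action of coordinate swaps between $\{i\}$ and elements of $T_1$ to transport mass between buckets: each such swap changes $|C_T^i|$ by a controlled amount, and a direct counting (or a Chernoff-type bound applied to the resulting nearly-independent indicators) shows that any bucket with $|C_T^i|/|C_T^{\bar i}|<c\alpha$ and $|C_T^{\bar i}|$ not negligibly small has measure at most $2^{-\Omega(l^\gamma)}$ compared to a bucket with a balanced ratio. A symmetric bound for $D$ and a union bound give the required $2^{-\Omega(n^\gamma)}$ slack, and summing this against the balanced contribution yields $\mu(B\cap R)\ge\Omega(\mu(A\cap R))-2^{-\Omega(n^\gamma)}$.
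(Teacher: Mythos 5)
Your decomposition and framework (conditioning on the triple $T=(T_1,T_2,i)$, the quantities $|C_T^i|, |C_T^{\bar i}|$ which are $N_1\cdot Row_1(T)$ resp.\ $N_0\cdot Row_0(T)$, and the dichotomy ``balanced/unbalanced'') mirrors the paper's Razborov-style setup. But the key step of your plan --- that unbalanced triples contribute at most $2^{-\Omega(n^\gamma)}$ to $\mu(A\cap R)$, based on the intuition that the fraction of $x\in C$ containing a random $i$ concentrates around $l^\gamma/n$ --- is simply false, because $C$ is completely adversarial. Take $C=\{x : 1\notin x,\ |x|=l^\gamma\}$ and $D$ equal to all $l^\gamma$-subsets: for every triple with $i=1$ you have $|C_T^i|=0$, so all those triples are maximally unbalanced, and they have probability $\Theta(1/n)$, contributing $\Theta(1/n)$ --- not $2^{-\Omega(n^\gamma)}$ --- to $\mu(A\cap R)$. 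No swap or Chernoff argument can repair this, because the partisanship of a coordinate is a property that $C$ can impose globally. Your plan would therefore only reach an error term of order $1/\mathrm{poly}(n)$, which is useless for Theorem \ref{thm:low}: the protocol argument needs to absorb a $2^c$ factor, so the additive slack really has to be $2^{-\Omega(n^\gamma)}$.

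The paper's route is structurally different precisely where yours breaks. ``Bad'' is defined with a built-in additive slack: $T$ is $x$-bad if $Row_1(T)<\frac{1}{6}Row_0(T)-2^{-\epsilon n^\gamma}$. The paper does \emph{not} prove bad triples are exponentially rare; it proves only a \emph{constant} bound $\Pr[Bad_x(T)=1\mid T_2=t_2]\leq 1/5$ (Claim 1), via a counting argument (cardinality of a family $Q$ of ``$B$-concentrated'' sets vs.\ a Stirling lower bound on $\frac{3}{4}|S|$, with a colouring lemma doing the double counting), where the slack $2^{-\epsilon n^\gamma}$ is what makes the contradiction go through. That constant probability bound is upgraded to an expectation statement ${\bf E}[Row_0 Col_0 (1-Bad)]\geq\frac{1}{5}{\bf E}[Row_0 Col_0]$ (Claim 2), and the $2^{-\Omega(n^\gamma)}$ error in the final inequality comes out when the slack in the definition of ``bad'' is unbunched in the step $Row_1 Col_1\geq(\frac{1}{6}Row_0-2^{-\epsilon n^\gamma})(\frac{1}{6}Col_0-2^{-\epsilon n^\gamma})$. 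So the exponential error term is sourced from the definitional slack, not from scarcity of bad triples. To repair your approach you would need to (a) redefine unbalanced with the same additive slack, (b) prove the constant probability bound (Claim 1, which is where the ``finer combinatorial counting'' of the paper actually happens), and (c) run the expectation and unbunching steps (Claims 2 and 3) --- at which point you have the paper's proof.
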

\begin{proof}
We consider $\epsilon>0$ to be fixed (but will specify its value later). We begin by defining for any triple $T=(T_{1},T_{2},\{i\})$ as above, the numbers $Row(T)=Pr[x\in C\mid x\subseteq T_{1}\cup\{i\}]$, $Row_{0}(T)=Pr[x\in C\mid x\subseteq T_{1}\cup\{i\},i\notin x]$ and $Row_{1}(T)=Pr[x\in C\mid x\subseteq T_{1}\cup\{i\},i\notin x]$, and similarly $Col(T)=Pr[y\in D\mid y\subseteq T_{2}\cup\{i\}]$, $Col_{0}(T)=Pr[y\in D\mid y\subseteq T_{2}\cup\{i\},i\notin y]$ and $Col_{1}(T)=Pr[y\in D\mid y\subseteq T_{2}\cup\{i\},i\notin y]$. It is important to note that $Row(T)=\frac{1}{2}(Row_{0}(T)+Row_{1}(T))$ and $Col(T)=\frac{1}{2}(Col_{0}(T)+Col_{1}(T))$, just as in the case of Razborov's original distribution, and for the same reasons.

Next, for a triple $T=(T_{1},T_{2},\{i\})$ (and under the above distribution $\mu$) we say that $T$ is \emph{$x$-bad} if $Row_{1}(T)<\frac{1}{6}Row_{0}(T)-2^{-\epsilon n^{\gamma}}$, and that $T$ is \emph{$y$-bad} if $Col_{0}(T)<\frac{1}{6}Col_{0}(T)-2^{-\epsilon n^{\gamma}}$. If $T$ is $x$-bad or $y$-bad, we say that $T$ is \emph{bad}. Let $Bad_{x}(T)$, $Bad_{y}(T)$ and $Bad(T)$ be the respective event indicators.

\begin{claim} For all $t_{2}\subseteq\{1,\ldots,n\}$, with $\vert t_{2}\vert=2l-1$, we have that $Pr[Bad_{x}(T)=1\mid T_{2}=t_{2}]\leq\frac{1}{5}$ and $Pr[Bad_{y}(T)=1\mid T_{2}=t_{2}]\leq\frac{1}{5}$.\end{claim}

\emph{Proof of the Claim.} We prove the first statement, the second one having an almost identical proof.

Let $t_{2}\subseteq\{1,\ldots,n\}$, with $\vert t_{2}\vert=2l-1$, be fixed. Under our distribution, $Row(T)$ can take different values even when $T$ is restricted to partitions for which $T_{2}=t_{2}$. Thus we first treat the case when $\max\{Row(T)\mid T_{2}=t_{2}\}\leq 2^{-\epsilon n^{\gamma}}$. If this inequality holds, then for all $T$ with $T_{2}=t_{2}$ we have: $Row(T)\leq 2^{-\epsilon n^{\gamma}}$, and hence $Row_{0}(T)\leq 2 Row(T)\leq 2\cdot 2^{-\epsilon n^{\gamma}}$ so that $\frac{Row_{0}(T)}{6}-2^{-\epsilon n^{\gamma}}<0\leq Row_{1}(T)$ holds trivially (and hence $Pr[Bad_{x}(T)=1\mid T_{2}]=0$).

Next we treat the case where $\max\{Row(T)\mid T_{2}=t_{2}\}> 2^{-\epsilon n^{\gamma}}$. Define $S=\{x\in C\mid \vert x\vert=l^{\gamma}, x\subset\{1,\ldots,n\}\setminus t_{2}\}$. Note that for any $T$ with $T_{2}=t_{2}$, $Row(T)$ measures the conditional probability (conditioned on $T$) of the same set $S$, with each $x\in S$ having a different (conditional) probability depending on whether $i\in x$. Specifically, if $i\in x$ then the probability of $x$ being chosen, conditioned on $T$, is $\frac{1}{2}{2l-1\choose l^{\gamma}-1}^{-1}$, otherwise the probability is $\frac{1}{2}{2l-1\choose l^{\gamma}}^{-1}=\frac{1}{2}{2l-1\choose l^{\gamma}-1}^{-1}\frac{l^{\gamma}}{2l-l^{\gamma}}=\frac{1}{2}{2l-1\choose l^{\gamma}-1}^{-1}\frac{1}{2l^{1-\gamma}-1}$. Thus, when $T$ is fixed, the probability of each set $x$ containing $i$ is $2l^{1-\gamma}-1$ times that of a set which does not contain $i$.

The proof of this case will proceed as follows: First, we show that under the assumption that a sufficiently large part of the partitions $T$ with $T_{2}=t_{2}$ are $x$-bad, three quarters of the elements of $S$ (which are subsets of $\{1,\ldots,n\}\setminus T_{2}$) must have at least $\frac{21}{25}$ of their elements in a subset of $\{1,\ldots,n\}\setminus T_{2}$ of size $\frac{8l}{5}$. We will then upper-bound the number of subsets of $\{1,\ldots,n\}\setminus T_{2}$ of size $l^{\gamma}$ that have this property (regardless of whether they are in $C$ or not). Next, we will lower-bound $\frac{3}{4}\vert S\vert$ in terms of $\epsilon$, and show that for a suitable choice of $\epsilon$, the lower bound for $\frac{3}{4}\vert S\vert$ is in fact larger than the upper bound we computed before, which is a contradiction showing that it is not possible for that $T$ with $T_{2}=t_{2}$ to be $x$-bad for that many choices of $i$.

Note first that whenever $T_{2}$ is fixed (in our case to $t_{2}$), the choice of $i\in\{1,\ldots,n\}\setminus T_{2}$ also fixes $T_{1}$ and hence all of $T$, and that the choice of $i$ determines the proportion of $x\in S$ whose weights are counted in $Row_{1}(T)$. If for a particular choice of $i$ the resulting $T$ is $x$-bad, then by definition we have that $Row_{1}(T)<\frac{1}{6}Row_{0}(T)-2^{-\epsilon n^{\gamma}}$, and in particular that $Row_{1}(T)<\frac{1}{6}Row_{0}(T)$. If we let $S'$ be the set of $x\in S$ with $i\in x$, then we may rewrite this inequality as:
\begin{align*}
\frac{\vert S'\vert}{{2l-1\choose l^{\gamma}-1}}<\frac{\vert S\vert-\vert S'\vert}{6{2l-1\choose l^{\gamma}}}&\Longleftrightarrow\vert S'\vert<\frac{\vert S\vert-\vert S'\vert}{6(2l^{1-\gamma}-1)}\\
&\Longleftrightarrow\vert S'\vert\left(1+\frac{1}{6(2l^{1-\gamma}-1)}\right)<\frac{\vert S\vert}{6(2l^{1-\gamma}-1)},
\end{align*}
and we may conclude that for $l$ sufficiently large, $\vert S'\vert<\frac{\vert S\vert}{10l^{1-\gamma}}$ (under the assumption that $T$ is $x$-bad). For the last inequality we have used the fact that $\lim_{n\to\infty} l^{1-\gamma}=\infty$, which holds because: $\lim_{n\to\infty}\log l^{1-\gamma}=\lim_{n\to\infty}(1-\gamma)\log l=\lim_{n\to\infty}(1-\log_{l}(c\sqrt{n(k+1)}))\log l\geq\lim_{n\to\infty}(\log l-\log\sqrt{n(k+1)})\geq\lim_{n\to\infty}\log\sqrt\frac{n+1}{16(k+1)}=\infty$ (since $k=o(n)$).

Let $B=\left\{i\in\{1,\ldots,n\}\mid\textrm{the\ partition\ }\left(\{1,\ldots,n\}\setminus (t_{2}\cup\{i\}),t_{2},\{i\}\right)\textrm{ is $x$-bad}\right\}$, and assume that $\vert B\vert\geq\frac{2l}{5}$, that is, assume that for at least one fifth of the possible choices for $i$ the corresponding partition is $x$-bad. By excluding some elements of $B$, we may assume that $\vert B\vert=\frac{2l}{5}$. Now, if we consider the number of pairs $(x, i)$ with $x\in S$ and $i\in x$, we have by the inequality in the last paragraph that each of the $i\in B$ can be the second element of at most $\frac{\vert S\vert}{10l^{1-\gamma}}$ such pairs, and hence $B$ can contribute the second element of at most $\frac{2l}{5}\frac{\vert S\vert}{10l^{1-\gamma}}=\frac{1}{25}l^{\gamma}\vert S\vert$ of the total of $l^{\gamma}\vert S\vert$ pairs. Applying the Colouring Lemma below with $X=S$, $Y=\{1,\ldots,l^{\gamma}\}$, $c(x,i)=0$ if and only if the $i$-th smallest element of $x$ is in $B$ (so that $p\geq\frac{24}{25}$) and $r=\frac{21}{25}$, we have that at least three quarters of all $x\in S$ have the property that more than $\frac{21}{25}$ of their elements lie in $G=\{1,\ldots,n\}\setminus (t_{2}\cup B)$. Let $Q$ be the set of subsets $x\subseteq B\cup G=\{1,\ldots,n\}\setminus t_{2}$, with $\vert x\vert=l^{\gamma}$ and the property that $\vert x\cap G\vert\geq\frac{21}{25}l^{\gamma}$. Then we must have that $\vert Q\vert\geq\frac{3}{4}\vert S\vert$. We will now upper-bound the size of the set $Q$.

Since every $x\in Q$ can have a proportion of at most $4/25$ of its elements in $B$, we have that
\begin{align*}
\log\vert Q\vert&\leq\log\left[\sum^{\frac{4}{25}l^{\gamma}}_{i=0}{\frac{2l}{5}\choose i}{\frac{8l}{5}\choose l^{\gamma}-i}\right]\leq\log\left[\sum^{\frac{4}{25}l^{\gamma}}_{i=0}\left(\frac{2le}{5i}\right)^{i}\left(\frac{8le}{5(l^{\gamma}-i)}\right)^{l^{\gamma}-i}\right]\\
&\leq\log\left[\frac{4}{25}l^{\gamma}\left(\frac{2le}{5}\frac{25}{4l^{\gamma}}\right)^{\frac{4}{25}l^{\gamma}}\left(\frac{8le}{5}\frac{25}{21l^{\gamma}}\right)^{\frac{21}{25}l^{\gamma}}\right]\\
&=\log\left[\frac{4}{25}l^{\gamma}\left(\frac{5e}{2}l^{1-\gamma}\right)^{\frac{4}{25}l^{\gamma}}\left(\frac{40e}{21}l^{1-\gamma}\right)^{\frac{21}{25}l^{\gamma}}\right]\\
&\leq\gamma\log l+\frac{4}{25}l^{\gamma}\log\left(\frac{5e}{2}l^{1-\gamma}\right)+\frac{21}{25}l^{\gamma}\log\left(\frac{40e}{21}l^{1-\gamma}\right)+O(1)\\
&=(1-\gamma)l^{\gamma}\log l+\left(\frac{4}{25}\log\frac{5e}{2}+\frac{21}{25}\log\frac{40e}{21}\right)l^{\gamma}+O(\log l)\\
&\leq(1-\gamma)l^{\gamma}\log l+2.43508\cdot l^{\gamma}+O(\log l),
\end{align*}
where in the first line we used the inequality ${m\choose k}\leq\left(\frac{em}{k}\right)^{k}$ for each term of the sum. The inequality sign between the first and second line can be justified as follows: For $x\in(0,\frac{1}{2})$, consider the expression
\begin{displaymath}
\log\left[\left(\frac{\frac{2}{5}el}{x\cdot l^{\gamma}}\right)^{x\cdot l^{\gamma}}\left(\frac{\frac{8}{5}el}{(1-x)l^{\gamma}}\right)^{(1-x)l^{\gamma}}\right]=(1-\gamma)l^{\gamma}\log l+l^{\gamma}\left(x\log\frac{2e}{5x}+(1-x)\log\frac{8e}{5(1-x)}\right)
\end{displaymath}
and set $f(x)=x\log\frac{2e}{5x}+(1-x)\log\frac{8e}{5(1-x)}=x\log\frac{1}{x}+(1-x)\log\frac{1}{1-x}+(1+\log e)x+(3+\log e)(1-x)-\log 5=3+\log e-\log 5+H(x)-2x$. Then $f'(x)=H'(x)-2=\log\frac{1-x}{x}-2$. Note that the function $\frac{1-x}{x}$ is decreasing but positive on $(0,1)$, and we have that the smallest value $x_{0}\in (0,\frac{1}{2})$ for which we can have $f'(x_{0})=0$ is $x_{0}=\frac{1}{5}$, which implies that $f(x)$, and hence also the argument of the logarithm in the expression above, is strictly increasing on $(0,\frac{1}{5})$. Thus the terms of the sum
\begin{displaymath}
\sum^{\frac{4}{25}l^{\gamma}}_{i=0}\left(\frac{2le}{5i}\right)^{i}\left(\frac{8le}{5(l^{\gamma}-i)}\right)^{l^{\gamma}-i}
\end{displaymath}
are increasing, so that each term is upper-bounded by the final term, which justifies the inequality between the the first and second line above.

Next we compute a lower bound for $\frac{3}{4}\vert S\vert$. Let $T^{*}$ be a partition with $T^{*}_{2}=t_{2}$ and $Row(T^{*})=\max\{Row(T)\mid T_{2}=t_{2}\}$. Then we have that $\frac{3}{4}\vert S\vert=\frac{3}{4}[Row_{0}(T^{*}){2l-1\choose l^{\gamma}}+Row_{1}(T^{*}){2l-1\choose l^{\gamma}-1}]\geq\frac{3}{4}(Row_{0}(T^{*})+Row_{1}(T^{*})){2l-1\choose l^{\gamma}-1}=\frac{6}{4}Row(T^{*}){2l-1\choose l^{\gamma}-1}>2^{-\epsilon n^{\gamma}}{2l-1\choose l^{\gamma}-1}$. Finally we have:
\begin{align*}
\log\frac{3}{4}\vert S\vert&>\log\left[2^{-\epsilon n^{\gamma}}{2l-1\choose l^{\gamma}-1}\right]\geq l^{\gamma}\log\left((e-o(1))\frac{2l-1}{l^{\gamma}-1}\right)-\epsilon n^{\gamma}-\Theta(\log l)\\
&\geq (1-\gamma)l^{\gamma}\log l+l^{\gamma}\log(2(e-o(1)))-\epsilon\cdot(4l-1)^{\gamma}-\Theta(\log l)\\
\textrm{(for large $l$)}&\geq (1-\gamma)l^{\gamma}\log l+2.4426\cdot l^{\gamma}-\epsilon\cdot(4l)^{\gamma}-\Theta(\log l).
\end{align*}
For $\epsilon\leq\frac{1}{1000\cdot 4^{\gamma}}$ we get the desired contradiction, that $\frac{3}{4}\vert S\vert>\vert Q\vert$.

(The lower-bound for ${2l-1\choose l^{\gamma}-1}$ above can be obtained using the Stirling bounds for the factorial, $\sqrt{2\pi n}\left(\frac{n}{e}\right)^{n}\leq n!\leq e\sqrt{n}\left(\frac{n}{e}\right)^{n}$, as follows:
\begin{align*}
{2l-1\choose l^{\gamma}-1}&\geq\frac{\sqrt{2\pi(2l-1)}\cdot(2l-1)^{2l-1}}{e^{2}\sqrt{(l^{\gamma}-1)(2l-l^{\gamma})}\cdot(l^{\gamma}-1)^{l^{\gamma}-1}\cdot(2l-l^{\gamma})^{2l-l^{\gamma}}}\\
&=\frac{\sqrt{2\pi(2l-1)}}{e^{2}\sqrt{(l^{\gamma}-1)(2l-l^{\gamma})}}\left(\frac{2l-1}{l^{\gamma}-1}\right)^{l^{\gamma}-1}\left(\frac{2l-1}{(2l-1)-(l^{\gamma}-1)}\right)^{2l-l^{\gamma}}\\
&=\frac{\sqrt{2\pi(2l-1)}}{e^{2}\sqrt{(l^{\gamma}-1)(2l-l^{\gamma})}}\left(\frac{2l-1}{l^{\gamma}-1}\right)^{l^{\gamma}-1}\left[\left(1+\frac{l^{\gamma}-1}{2l-l^{\gamma}}\right)^{\frac{2l-l^{\gamma}}{l^{\gamma}-1}}\right]^{l^{\gamma}-1}\\
&\geq\frac{\sqrt{2\pi}}{e^{2}}\frac{1}{\sqrt{(l^{\gamma}-1)}}\left(\frac{2l-1}{l^{\gamma}-1}\right)^{l^{\gamma}-1}(e-o(1))^{l^{\gamma}-1}.)
\end{align*}

\begin{claim} ${\bf E}[Row_{0}(T)Col_{0}(T)(1-Bad(T))]>\frac{1}{5}{\bf E}[Row_{0}(T)Col_{0}(T)]$.\end{claim}

\emph{Proof of the Claim.} Since $Bad(T)\leq Bad_{x}(T)+Bad_{y}(T)$, it is enough to prove that\\ ${\bf E}[Row_{0}(T)Col_{0}(T)Bad_{x}(T)]\leq\frac{2}{5}{\bf E}[Row_{0}(T)Col_{0}(T)]$, with a similar statement for $Bad_{y}(T)$ being proved in the same fashion. For each $t_{2}\subseteq\{1,\ldots,n\}$, with $\vert t_{2}\vert=2l-1$, we will show that the desired inequality holds when conditioned on $T_{2}=t_{2}$, which implies that the unconditioned inequality holds. All triples $T$ with $T_{2}=t_{2}$ have the same value for $Col_{0}(T)$, so let this value be called $c'$. Also let $r={\bf E}[Row(T)\mid T_{2}=t_{2}]$. Now we have:
\begin{align*}
&{\bf E}[Row_{0}(T)Col_{0}(T)Bad_{x}(T)\mid T_{2}=t_{2}]\\
\leq& c'{\bf E}[Row_{0}(T)Bad_{x}(T)\mid T_{2}=t_{2}]\\
\leq&c'{\bf E}\left[2\cdot {\bf E}[Row(T)\mid T_{2}=t_{2}]\cdot Bad_{x}(T)\mid T_{2}=t_{2}\right]\\
\leq& 2 c'r{\bf E}[Bad_{x}(T)\mid T_{2}=t_{2}]\\
\leq&\frac{2}{5}c' r\textrm{\ (by Claim 1)}\\
=&\frac{2}{5}c'{\bf E}[Row(T)\mid T_{2}=t_{2}]\\
=&\frac{2}{5}c'{\bf E}[Row_{0}(T)\mid T_{2}=t_{2}]\\
=&\frac{2}{5}{\bf E}[Row_{0}(T)Col_{0}(T)\mid T_{2}=t_{2}]
\end{align*}
The inequality between the second and the third line can be justified as follows: Recall that, as observed in the proof of Claim 1, even when considering only triples $T_{2}=t_{2}$, the value of $Row(T)$ can differ by a factor of at most $2l^{1-\gamma}-1$. This is due to the fact that $Row(T)$ measures the probability (conditioned on $T$) of the same set $S=\{x\in C\mid \vert x\vert=l^{\gamma},x\subseteq\{1,\ldots,n\}\setminus t_{2}\}$, but depending on whether $i\in x$ (for a particular choice of $i$ and hence of $T$), an $x\in S$ will have (conditional) probability either $\frac{1}{2}{2l-1\choose l^{\gamma}-1}^{-1}$ or $\frac{1}{2}{2l-1\choose l^{\gamma}}^{-1}$. Thus if $T^{*}$ is a triple with $T^{*}_{2}=t_{2}$ for which $Row_{0}(T^{*})=\max\{Row_{0}(T)\mid T_{2}=t_{2}\}$, then $Row(T^{*})$ must be the minimum among all values of $Row(T)$ when $T_{2}=t_{2}$, because when $T=T^{*}$ the largest portion of elements of $S$ have probability $\frac{1}{2}{2l-1\choose l^{\gamma}}^{-1}$ instead of $\frac{1}{2}{2l-1\choose l^{\gamma}-1}^{-1}$. It follows that for all $T$ with $T_{2}=t_{2}$ we have $Row(T)\geq Row(T^{*})\geq\frac{1}{2}Row_{0}(T^{*})$, and hence that ${\bf E}[2Row(T)\mid T_{2}=t_{2}]\geq Row_{0}(T^{*})$. On the other hand we have that for all $T$ with $T_{2}=t_{2}$, $Row_{0}(T)\leq Row_{0}(T^{*})$, so finally we get that $Row_{0}(T)\leq 2{\bf E}[Row(T)\mid T_{2}=t_{2}]$ for all $T$ with $T_{2}=t_{2}$. $\blacksquare$

\begin{claim} For any rectangle $R$: $\mu(B\cap R)=\frac{1}{4}{\bf E}[Row_{1}(T)Col_{1}(T)]$ and $\mu(A\cap R)=\frac{3}{4}{\bf E}[Row_{0}(T)Col_{0}(T)]$ (with the expectation taken over all partitions $T$).\end{claim}

The proof of this claim is identical to the case where $\mu$ is the distribution in Razborov's proof (see \cite{kushilevitz&nisan:cc}), since the relevant observations also apply to our modified distribution: 1. $\mu(B)=\frac{1}{4}$ (and hence $\mu(A)=\frac{3}{4}$), because for every fixed partition $T$, $i\in x$ with probability $\frac{1}{2}$ and $i\in y$ with probability $\frac{1}{2}$, independently. 2. $i\in x$ and $i\in y$ are independent events (for the same reason). 3. For every $(x,y)$ with $x\cap y=\emptyset$ we have that $Pr[(x,y)\mid (i\notin x)\wedge(i\notin y)]=Pr[(x,y)\mid ((i\notin x)\wedge(i\notin y))\vee((i\in x)\wedge(i\notin y))\vee((i\notin x)\wedge(i\in y))]$, because conditioning on either one of the two events induces the uniform distribution on the set $\left\{(x,y)\mid x,y\subset\{1,\ldots,n\},x\cap y=\emptyset,\vert x\vert=\vert y\vert=l^{\gamma}\right\}$.

We now use claims 2 and 3 to prove the statement of the lemma:
\begin{align*}
&\mu(B\cap R)=\frac{1}{4}{\bf E}[Row_{1}(T)Col_{1}(T)]\\
\geq&\frac{1}{4}{\bf E}[Row_{1}(T)Col_{1}(T)(1-Bad(T))]\\
\geq&\frac{1}{4}{\bf E}\left[\left(\frac{Row_{0}(T)}{6}-2^{-\epsilon n^{\gamma}}\right)\left(\frac{Col_{0}(T)}{6}-2^{-\epsilon n^{\gamma}}\right)(1-Bad(T))\right]\textrm{\ (by\ def.\ of\ $Bad$)}\\
=&\frac{1}{4}{\bf E}\left[\left(\frac{Row_{0}(T)Col_{0}(T)}{36}-\frac{2^{-\epsilon n^{\gamma}}}{6}\left(Row_{0}(T)+Col_{0}(T)\right)+2^{-2\epsilon n^{\gamma}}\right)(1-Bad(T))\right]\\
\geq&\Omega\left({\bf E}[Row_{0}(T)Col_{0}(T)(1-Bad(T))]\right)-2^{-\epsilon n^{\gamma}}\textrm{\ (since $Row_{0}(T)+Col_{0}(T)\leq 2$)}\\
\geq&\Omega\left({\bf E}[Row_{0}(T)Col_{0}(T)]\right)-2^{-\epsilon n^{\gamma}}\textrm{\ (by Claim 2)}\\
\geq&\Omega(\mu(A\cap R))-2^{-\epsilon n^{\gamma}}\textrm{\ (by Claim 3)}
\end{align*}
Choosing $\epsilon$ to be smaller than both the constant in front of $\mu(A\cap R)$ and $\frac{1}{1000\cdot 4^{\gamma}}$ completes the proof.
\end{proof}

\begin{lemma}[Colouring Lemma.] Let $X$ and $Y$ be non-empty finite sets, and let $c:X\times Y\mapsto\{0,1\}$ be a colouring of $X\times Y$ such that a proportion $p\in(0,1)$ of the elements of $X\times Y$ are mapped to $1$, that is, such that $\vert c^{-1}(1)\vert /\vert X\times Y\vert=p$. Then for any $r\in(0,p)$ such that $r\vert Y\vert\in\mathbb{N}$, we have that for at least $\frac{p-r}{1-r}\vert X\vert$ elements $x\in X$, $\vert(\{x\}\times Y)\cap c^{-1}(1)\vert>r\vert Y\vert$.
\end{lemma}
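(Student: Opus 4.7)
The plan is to prove this by a straightforward double-counting (averaging) argument on the number of $1$-cells of the colouring.

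First, I would define the set of \emph{heavy rows}
\[ H = \bigl\{ x \in X : \bigl|(\{x\}\times Y)\cap c^{-1}(1)\bigr| > r|Y| \bigr\}, \]
and let $N = |H|$. The goal is to show $N \ge \tfrac{p-r}{1-r}|X|$. The total number of $1$-cells in $X \times Y$ equals $p|X||Y|$ by hypothesis. I would then split this count across rows: each heavy row contributes at most $|Y|$ ones, while each non-heavy row $x \notin H$ contributes at most $r|Y|$ ones. Here the assumption $r|Y|\in\mathbb{N}$ is used: the condition ``$> r|Y|$'' failing means the count is at most $r|Y|$ as an integer, so the bound is clean and no floor function is needed.

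Combining these row-wise bounds gives
\[ p|X||Y| \;=\; |c^{-1}(1)| \;\le\; N\cdot|Y| \;+\; (|X|-N)\cdot r|Y|. \]
Dividing through by $|Y|$ (which is nonzero since $Y$ is nonempty) yields $p|X| \le r|X| + N(1-r)$, and since $r<1$ one rearranges to obtain
\[ N \;\ge\; \frac{p-r}{1-r}\,|X|, \]
which is exactly the desired lower bound on the number of rows with more than $r|Y|$ ones.

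There is essentially no obstacle here: the argument is just an averaging step, and the only subtle point is making sure to use the integrality hypothesis $r|Y|\in\mathbb{N}$ to convert the strict inequality defining $H$ into a clean non-strict bound for rows outside $H$. No earlier machinery from the paper (information theory, substate theorem, etc.) is required.
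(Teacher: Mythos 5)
Your proof is correct, and it takes a genuinely different (and simpler) route than the paper's. The paper proceeds by an extremal argument: it considers a colouring that \emph{minimises} the proportion of heavy rows subject to the density constraint, then performs a sequence of local modifications (raising light rows to weight exactly $r|Y|$, pushing heavy rows to weight $|Y|$ except possibly one) so that the $1$-count can be written as an exact identity $p|X||Y|=q|X||Y|-\alpha|Y|+(1-q)|X|r|Y|$, from which $q\ge\frac{p-r}{1-r}$ follows. You instead bound the total $1$-count directly, row by row, without any normalisation step: heavy rows contribute $\le|Y|$, light rows contribute $\le r|Y|$, giving $p|X||Y|\le N|Y|+(|X|-N)r|Y|$ immediately. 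Your averaging argument is shorter and avoids the bookkeeping about the "one partial heavy row" (the $\alpha$ term) that the paper's construction requires; it also makes clear that the integrality hypothesis $r|Y|\in\mathbb{N}$ is not load-bearing for the inequality itself (a light row automatically has weight $\le\lfloor r|Y|\rfloor\le r|Y|$), though it keeps the statement clean. Both proofs yield exactly the same bound; yours is the one I would recommend keeping.
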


\begin{proof}
We call sets of the form $\{x\}\times Y$ \emph{rows}, and let the number $w(x)=\sum_{y\in Y}c(x,y)=\vert(\{x\}\times Y)\cap c^{-1}(1)\vert$ be the \emph{weight} of the row $\{x\}\times Y$, for each $x\in X$. Let $c$ be a colouring of $X\times Y$ as above, but such that the smallest possible proportion of rows have weight $>r\vert Y\vert$, and denote this proportion by $q$. Thus $q$ is such that for any colouring $c'$ satisfying the conditions of the lemma, at least $q\vert X\vert$ elements $x\in X$ satisfy $\vert(\{x\}\times Y)\cap c^{-1}(1)\vert>r\vert Y\vert$.

We may assume that all rows with weight $\leq r\vert Y\vert$ have weight exactly $r\vert Y\vert$: If this is not the case, we may repeatedly perform the operation of changing a $0$ into $1$ on a row with weight $<r\vert Y\vert$, and a $1$ into $0$ on a row with weight $>r\vert Y\vert$, until the above statement is true. (It is easy to see that the colouring $c$ must have rows with weight $>r\vert Y\vert$, since otherwise the overall proportion of elements mapped to $1$ would be $\leq r<p$.) This operation leaves the proportion of elements that are mapped to $1$ unchanged, and the minimality of the chosen colouring $c$ guarantees that the number of rows with weight $>r\vert Y\vert$ does not decrease (and therefore remains unchanged).

Next, we may assume that all but at most one of the rows with weight $>r\vert Y\vert$ have weight exactly $\vert Y\vert$: If this is not the case, we may fix one such row, replace all zeroes with ones on all other rows of weight $>r\vert Y\vert$ (thus making their weight exactly $\vert Y\vert$), and on the fixed row change the same number of ones into zeroes so as to match the changes made on all other rows. Again the overall proportion of elements being mapped to $1$ does not change, and the minimality of the colouring $c$ guarantees that the weight of the fixed row stays $>r\vert Y\vert$.

Based on the above we now have: $p\vert X\vert\vert Y\vert=q\vert X\vert\vert Y\vert-\alpha\vert Y\vert+(1-q)\vert X\vert r\vert Y\vert$, where $\alpha\in [0,1-r)$ is the proportion of zeroes on the one row that has weight $>r\vert Y\vert$ but not necessarily $=\vert Y\vert$. Thus we have:
\begin{align*}
p\leq q+(1-q)r\Longleftrightarrow p\leq (1-r)q+r\Longleftrightarrow\frac{p-r}{1-r}\leq q.
\end{align*}
\end{proof}

\end{document}